\newcommand{\ket} [1] {\vert #1 \rangle}
\newcommand{\bra} [1] {\langle #1 \vert}
\newcommand{\Ord}[1]{\mathcal{O}\left(#1\right)}
\author{Zhikuan \textsc{Zhao}} % Your name, this is used in the title page and abstract, print it elsewhere with \authorname
\keywords{} % Keywords for your thesis, this is not currently used anywhere in the template, print it elsewhere with \keywordnames
\begin{document}

\frontmatter % Use roman page numbering style (i, ii, iii, iv...) for the pre-content pages

\pagestyle{plain} % Default to the plain heading style until the thesis style is called for the body content

%----------------------------------------------------------------------------------------
%	TITLE PAGE
%----------------------------------------------------------------------------------------

\begin{titlepage}
\begin{center}

\begin{figure}
\centering
\includegraphics[width=0.5\textwidth]{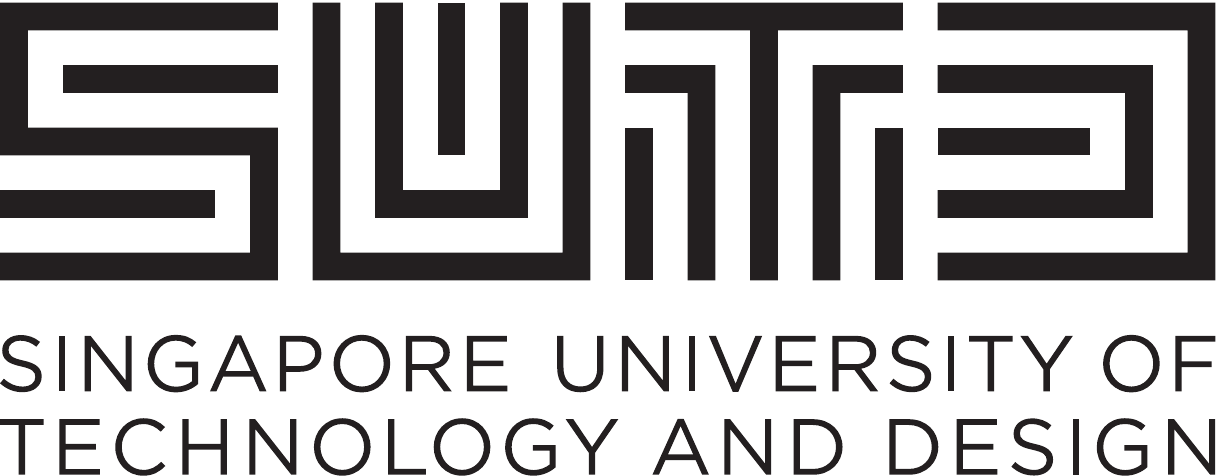}
\end{figure}

\hfill\break\\[1.0cm]
 
{\huge \bfseries \ttitle}\\[2cm] % Thesis title

Submitted by\\[1cm]
Zhikuan \textsc{Zhao} % Author name - remove the \href bracket to remove the link

\vspace{3em}

Thesis Advisor\\[1cm]
\supname % Supervisor name - remove the \href bracket to remove the link  
 
\vspace{3em} 

%EPD\\[0.6cm] % Research group name and department name
 
\large{A thesis submitted to the Singapore University of Technology and Design in fulfilment of the requirement for the degree of \degreename.}\\[0.5cm] % University requirement text

{\large \today}\\[4cm] % Date

\vfill
\end{center}
\end{titlepage}

%----------------------------------------------------------------------------------------
%	DECLARATION PAGE
%----------------------------------------------------------------------------------------

\newpage

\noindent{\huge \textbf{Thesis Examination Committee}}
\\

\noindent TEC Chair: Ricky Ang\\
\noindent Thesis Advisor:  Joseph Fitzsimons\\
\noindent Internal TEC Member:  Shaowei Lin\\
\noindent Internal TEC Member:  Dario Poletti\\
\noindent External TEC Member:  Troy Lee\footnote{University of Technology Sydney}\footnote{Centre for Quantum Technologies, National University of Singapore}

\begin{declaration}
\addchaptertocentry{\authorshipname}

\noindent I, Zhikuan \textsc{Zhao}, declare that this thesis titled, \enquote{\ttitle} and the work presented in it are my own. I confirm that:

\begin{itemize} 
\item This work was done wholly or mainly while in candidature for a research degree at this University.
\item Where any part of this thesis has previously been submitted for a degree or any other qualification at this University or any other institution, this has been clearly stated.
\item Where I have consulted the published work of others, this is always clearly attributed.
\item Where I have quoted from the work of others, the source is always given. With the exception of such quotations, this thesis is entirely my own work.
\item I have acknowledged all main sources of help.
\item Where the thesis is based on work done by myself jointly with others, I have made clear exactly what was done by others and what I have contributed myself.\\
\end{itemize}
 
\noindent Signed:\\
\rule[0.5em]{25em}{0.5pt} % This prints a line for the signature
 
\noindent Date:\\
\rule[0.5em]{25em}{0.5pt} % This prints a line to write the date
\end{declaration}

\cleardoublepage

%----------------------------------------------------------------------------------------
%	QUOTATION PAGE
%----------------------------------------------------------------------------------------

\vspace*{0.2\textheight}

\noindent\enquote{\itshape Product of optimism and knowledge is a constant.}\bigbreak

\hfill Lev Landau

\begin{List of Publications}
\addchaptertocentry{\listofpublications}	
\begin{itemize}
\item \textbf{Quantum Linear System Algorithm for Dense Matrices}\\
L. Wossnig, Z. Zhao, \& A. Prakash. Phys. Rev. Lett. 120, 050502 (2018).  (Contains work used in Chapter 3)

\item \textbf{A note on state preparation for quantum machine learning}\\
Z. Zhao, V. Dunjko, J. K. Fitzsimons, P. Rebentrost, \& J. F. Fitzsimons.  arXiv preprint arXiv:1804.00281 (2018). (Contains work used in Chapter 5)

\item \textbf{Quantum assisted Gaussian process regression}\\
Z. Zhao, J. K. Fitzsimons, \& J. F. Fitzsimons, arXiv preprint arXiv:1512.03929 (2015). (Contains work used in Chapter 5)

\item \textbf{Quantum algorithms for training Gaussian Processes}\\
Z. Zhao, J. K. Fitzsimons, M. A. Osborne, S. J. Roberts, \& J. F. Fitzsimons. arXiv preprint arXiv:1803.10520 (2018). (Contains work used in Chapter 6)

\item \textbf{Bayesian Deep Learning on a Quantum Computer}\\
Z. Zhao, A. Pozas-Kerstjens, P. Rebentrost, \& P. Wittek. arXiv preprint arXiv: 1806.11463 (2018). (Contains work used in Chapter 7)

\item \textbf{Geometry of quantum correlations in space-time}\\
Z. Zhao, R. Pisarczyk, J. Thompson, M. Gu, V. Vedral, \& J. F. Fitzsimons, Phys. Rev. A 98, 052312 (2018). (Contains work used in Chapter 8)

\item \textbf{Causal limit on quantum communication}\\
R. Pisarczyk, Z. Zhao, Y. Ouyang, V. Vedral, \&  J. F. Fitzsimons. arXiv preprint arXiv:1804.02594 (2018). (Contains work used in Chapter 9)

\end{itemize}

\end{List of Publications}

\begin{Statement on Collaborative Work}
\addchaptertocentry{\SOCW}

\noindent The results presented in this thesis came from several fruitful research collaborations.

The quantum linear system algorithm for dense matrices presented in Chapter \ref{QDLSA} was developed in collaboration with Leonard Wossnig and Anupam Prakash. I initiated the project, led and jointly contributed to the analytical work, and took the role as the corresponding author of the paper.

The state preparation technique for quantum machine learning used in Chapter \ref{QGP} was developed together with Vedran Dunjko, Jack Fitzsimons, Patrick Rebentrost and my supervisor, Joseph Fitzsimons who came up with the initial idea. I contributed to working out the technical details of the research.

The quantum assisted Gaussian process algorithm in Chapter \ref{QGP} came from a collaboration with Jack Fitzsimons, and Joseph Fitzsimons, with whom a discussion inspired the initial idea of the project. I contributed to a large part of the detailed algorithm design and most of the analysis involved. 

The quantum algorithms for training Gaussian processes discussed in Chapter \ref{QGPT} came from the collaborative work with Jack Fitzsimons, Michael Osborne, Stephen Roberts, who collectively provided expertise on classical machine learning, and Joseph Fitzsimons who initiated the research. I contributed to the algorithm design and most of the analytical work.

The quantum Bayesian deep learning algorithm described in Chapter \ref{Chapter: QBDL} was developed in collaboration with Alejandro Pozas-Kerstjens, Patrick Rebentrost, and Peter Wittek, with whom I jointly initiated the project and contributed to the programming of numerical simulations. I did most of the theoretical work, and proved the main theorem with the help of Patrick Rebentrost.  Alejandro Pozas-Kerstjens completed the experimental part of the research.

The work presented in Chapter \ref{Chapter: geometry} on the geometry of quantum correlations in space-time was done together with Robert Pisarczyk, Jayne Thompson, Mile Gu, Vlatko Vedral and Joseph Fitzsimons. The project originated from discussions with Jayne Thompson, Mile Gu, Vlatko Vedral and Joseph Fitzsimons. I proved the main results, and completed the theoretical details jointly with Robert Pisarczyk.

The work presented in Chapter \ref{CauCap} on bounding channel capacities with quantum causality was done in collaboration with Robert Pisarczyk,  Yingkai Ouyang, Vlatko Vedral and Joseph Fitzsimons. Vlatko Vedral and Joseph Fitzsimons initiated the research. I contributed to proving the theoretical results jointly with Robert Pisarczyk and Yingkai Ouyang.

\end{Statement on Collaborative Work}

%----------------------------------------------------------------------------------------
%	ABSTRACT PAGE
%----------------------------------------------------------------------------------------

\begin{abstract}
\addchaptertocentry{\abstractname} % Add the abstract to the table of contents
In this thesis, I present several results on quantum statistical inference  in the following two directions.
Firstly, I demonstrate that quantum algorithms can be applied to enhance the computing and training of Gaussian processes (GPs), a powerful model widely used in classical statistical inference and supervised machine learning. A crucial component of the quantum GP algorithm is solving linear systems with quantum computers, for which I present a novel algorithm that achieves a provable advantage over previously known methods. I will also explicitly address the task of encoding the classical data into a quantum state for machine learning applications. I then apply the quantum enhanced GPs to Bayesian deep learning and present an experimental demonstration on contemporary hardware and simulators. 
Secondly, I look into the notion of quantum causality and apply it to inferring spatial and temporal quantum correlations, and present an analytical toolkit for causal inference in quantum data.
I will also make the connection between causality and quantum communications, and present a general bound for the quantum capacity of noisy communication channels. 
\end{abstract}

%----------------------------------------------------------------------------------------
%	ACKNOWLEDGEMENTS
%----------------------------------------------------------------------------------------

\begin{acknowledgements}
\addchaptertocentry{\acknowledgementname} % Add the acknowledgements to the table of contents
First and foremost, I would like to express my most sincere gratitude and appreciation to my supervisor, Joseph Fitzsimons for providing continuous support, patient guidance and perhaps most vitally, role model through his most rigorous attitude toward science, which has kept me going throughout the past four years of intellectual journey.
Thank you, Joe. Without your tutorship and mentorship, none of these would have been possible. 

I would like to thank to Ricky Ang, Shaowei Lin, Dario Poletti and Troy Lee for kindly agreeing to serve as the examination committee for this thesis. 
I am most grateful to my collaborators: Vedran Dunjko, Jack Fitzsimons,  Mile Gu, Michael Osborne, Robert Pisarczyk, Alejandro Pozas-Kerstjens, Anupam Prakash, Patrick Rebentrost, Stephen Roberts, Jayne Thompson, Vlatko Vedral, Peter Wittek, Leonard Wossnig and Ouyang Yingkai. It has been a great pleasure working together. I am also thankful to my peers, Joshua Kettlewell, Atul Mantri and Liming Zhao for the most memorable experience of growing up together, and my seniors in the group, especially Tiago Batalh\~ao, Tommaso Demarie, Michal Hajdu\v{s}ek, Nana Liu and Si-Hui Tan for the care, support and all the fun we had during the past years. 

Last but not least, I could never have made it through without the love and support of my families who have always been the heroes by my side during times of struggle. My most profound gratitude goes well beyond the scope of this thesis.

\end{acknowledgements}

%----------------------------------------------------------------------------------------
%	LIST OF CONTENTS/FIGURES/TABLES PAGES
%----------------------------------------------------------------------------------------

\tableofcontents % Prints the main table of contents

\dedicatory{Dedicated to my beloved families} 

%----------------------------------------------------------------------------------------
%	THESIS CONTENT - CHAPTERS
%----------------------------------------------------------------------------------------

\mainmatter % Begin numeric (1,2,3...) page numbering

\pagestyle{thesis} % Return the page headers back to the "thesis" style

\part{Quantum computation and algorithms}
\label{Part1}

\chapter{Introduction} % Main chapter title
\label{Intro} % For referencing the chapter elsewhere, use \ref{Chapter1} 

Quantum mechanics is the theoretical framework that underpins our understanding of the physical world at the most fundamental level. Since its discovery in the early 20th century, quantum mechanics has proved to be tremendously successful in predicting physical phenomena at the microscopic scale, providing unprecedented insights ranging from the fundamental particles in nature to the origin of cosmos. Throughout history, our society has held a track record of coupling scientific discoveries with the invention of technologies that reshape everyday life. Quantum mechanics is no exception. Perhaps most pronouncedly, the understanding of the quantum nature of electronic structures in matter played the vital role in giving birth to the entire semiconductor industry, which is in turn responsible for the dawn of the information era, an era in which computation has taken centre stage and revolutionised the world. 
Broadly speaking, the conventional digital computer is called "classical" since it processes information in the form of logical bits, which omits the possibility of superposition and entanglement allowed by quantum mechanics. As such, despite its almost universal success, when classical computer is used for the task of simulating complex quantum mechanical systems, significant difficulties arise due to the memory requirement for keeping track of the exponentially large state space of the system. 
\newpage
Motivated initially by the problem of simulating physics, Feynman proposed to design and build computers that directly leverage the exponential state space in quantum mechanics \cite{Feynman}. Since this original vision, progress in finding algorithms for future quantum computers has come a long way, and well beyond the domain of quantum simulation alone. Among the most celebrated results are Grover's search algorithm \cite{grover1996fast} which shows a quadratic advantageous over its classical counter-part and Shor's factoring algorithm \cite{shor1999polynomial} which has the potential to break the (to our best knowledge) classically secure RSA cryptosystem. More recently, machine learning has rapidly emerged as an area where quantum algorithms can display dramatic advantages \cite{aimeur2006machine,pudenz2013quantum, superunsuper, QVSM, dunjko2017machine, perdomo2017opportunities}. 

In this thesis, I will present several new results in the more general context of quantum statistical inference, a term used here with two-fold meanings. Firstly, we demonstrate the power of applying quantum computation to statistical models for supervised machine learning with classical datasets. Secondly, we address the notion of causality in quantum information and present an analytical toolkit for inferring causal correlations when the data is itself inherently quantum. 
In Part \ref{Part1} of the thesis, I will start by introducing the basic concepts of quantum mechanics and quantum computation, then move on to review several essential quantum algorithms in Chapter \ref{IntroQA}. In Chapter \ref{QDLSA}, I will present a new algorithm for the quantum version of the linear system problem, which shows an advantage over the existing approaches, particularly when the matrix involved is inherently dense. 
In Part \ref{Part2}, we will see that quantum algorithms can be applied to improve the efficiency of supervised learning with Gaussian processes, with a novel application to deep learning.
In Part \ref{Part3}, we look into quantum causality. I will present results on the geometry of spatial and temporal quantum correlations and the operational role of causality in quantum communication.

\section{Quantum mechanics preliminaries}
Here we start by reviewing the fundamental postulates of quantum mechanics and introduce the notation and concepts elementary to the presentation of this thesis. These postulates underline the mathematical framework of quantum physics. Hence they hold a foundational role to future discussions about quantum computation and quantum statistical inference. We will keep our presentation at a basic level. An in-depth discussion of the postulates and a detailed introduction to quantum mechanics is presented in the canonical text of Ref. \cite{nielsen2010quantum}. 
\subsection{The state space}
\paragraph{Postulate 1} Any isolated physical system is associated with a complex vector space
with inner product, which is known as the state space (also known as the Hilbert space) of the system. The system is fully described by a unit vector in its state space, which is known as its state vector.
\paragraph{Dirac notation and superposition}
The state vectors in quantum mechanics are commonly denoted by a ``ket'', e.g., $\ket{\psi}$. Their Hermitian transpose is denoted by a ``bra'', so that $\ket{\psi}^{\dagger}=\bra{\psi}$. The inner product between two state vectors, $\ket{\psi}$ and $\ket{\phi}$ is denoted as the ``braket'', $\langle \psi|\phi \rangle$. It follows directly from Postulate 1 that any valid quantum state vector, $\ket{\psi}$, satisfies $\langle \psi|\psi \rangle = 1$. A quantum state $\ket{\psi}$ is in a superposition of the states $\{\ket{\phi_i}\}$ if it can be written as a set of mutually orthogonal states, $\ket{\psi}=\sum_i\alpha_i\ket{\phi_i}$, where $\sum_i |\alpha_i|^2=1$

\subsection{Evolution of states}

\paragraph{Postulate 2} The evolution of closed quantum systems is linear, and described by unitary transformations. The state, $\ket{\psi(t_2)}$ of a quantum system at time $t_2$ is related to the state, $\ket{\psi(t_1)}$, at an earlier time $t_1$ via a unitary transformation $U$ that only depends on $t_1$ and $t_2$, so that $\ket{\psi(t_2)}= U\ket{\psi(t_1)}$.
\paragraph{Schr\"{o}dinger equation}
The time-dependent Schr\"{o}dinger equation describes the time evolution of a closed quantum system,
\begin{align}
i\hbar \frac{d}{dt} \ket{\psi(t)} = H \ket{\psi(t)},
\end{align}
where the Hermitian operator $H$ is known as the Hamiltonian. The factor $\hbar$ is the Planck's constant. We work in units such that $\hbar=1$.

\subsection{Quantum measurements}
\paragraph{Postulate 3}
Quantum measurements are described by a set of measurement operators, $\{M_m\}$, where $\sum_m M_m^{\dagger}M_m = I$. If the system is in the quantum state $\ket{\psi}$ immediately before the measurement, then the probability of the measurement result $m$ occurring is given by 
$
p(m)=\bra{\psi}M_m^{\dagger}M_m\ket{\psi},    
$
and the post-measurement state of the system after is given by 
$
\frac{M_m\ket{\psi}}{\sqrt{\bra{\psi}M_m^{\dagger}M_m\ket{\psi}}}. 
$
\paragraph{Projective measurements}
An important special case of the quantum measurements is the projective measurement. In a projective measurement, the measurement operators are taken to be $M_m=\ket{\phi_m}\bra{\phi_m}$, where the set of state vectors $\{\ket{\phi_m}\}$ form an orthonormal basis for the system's Hilbert space. The corresponding probability of an outcome $m$ occurring is then given by $p(m)=|\langle \psi | \phi_m \rangle|^2$. Every projective measurement is associated with an observable, $M=\sum_m\ket{\phi_m}\bra{\phi_m}$. The expectation value of the observable given by $\langle M \rangle=\bra{\psi}M\ket{\psi}$.

\subsection{Composite systems}
\paragraph{Postulate 4}
The Hilbert space of a composite quantum system is given by the tensor product of the Hilbert spaces of the individual components. 
For a set of $n$ component systems initialised in the states $\{\ket{\psi_i}\}_{i=1}^n$, the state of the composite system is given by 
$\bigotimes\limits_{i=1}^{n} \ket{\psi_i} = \ket{\psi_1}\otimes\ket{\psi_2}\otimes...\otimes\ket{\psi_n}$. 
\paragraph{Entanglement}
If a composite system has the state as a tensor product of the states of its subsystems, we say the composite system is in a product state. 
Note that, however, the superposition of product states will not, in general, be in a product state. If the state of a system cannot be written as the tensor product of the states of its subsystems, we say it is entangled. For instance, if $\langle \psi_1 | \psi_2 \rangle=0$, the state $\frac{1}{\sqrt{2}}(\ket{\psi_1}\ket{\psi_2}+\ket{\psi_2}\ket{\psi_1})$ is maximally entangled. 

\section{Elements of quantum computation}
Having reviewed the fundamentals of quantum physics, we now move on to introduce the elementary concepts used in quantum computation. These include the basic unit of quantum computation, the qubit, the important observables given by the Pauli operators, and the unitary gates used to process quantum information.
\subsection{The qubit}
The qubit is the most basic non-trivial quantum system. It is also the smallest unit of quantum computation. A single qubit in a ``pure'' quantum state is a two dimensional complex vector, and can be written as $\ket{\psi}=\alpha\ket{0}+\beta\ket{1}$, where $\ket{0}=(1,0)^T$ and $\ket{1}=(0,1)^T$ are, and $|\alpha|^2+|\beta|^2=1$. Note that since the probability of a measurement outcome, by postulate 3 is invariant under $\ket{\psi}\rightarrow \mathrm{e}^{i\phi}\ket{\psi}$, a global phase factor $\mathrm{e}^{i\phi}$ is not an observable in quantum mechanics, and we can parameterise the single qubit state as $\ket{\psi}=\cos(\theta)\ket{0}+\mathrm{e}^{i\phi}\sin(\theta)\ket{1}$. As such the qubit can be visualised as a point lying on the surface of a unit sphere, known as the Bloch sphere. The vectors $\{\ket{0},\ket{1}\}$ forms the $Z$ basis (computational basis) of the single qubit state space. Alternatively, the basis can be chosen as any pair of orthogonal states, e.g. the $X$ basis, $\left\{ \ket{+_x}=\frac{\ket{0}+\ket{1}}{\sqrt{2}}, \ket{-_x}=\frac{\ket{0}-\ket{1})}{\sqrt{2}} \right\}$ and the $Y$ basis, $\left\{ \ket{+_y}=\frac{\ket{0}+i\ket{1}}{\sqrt{2}}, \ket{-_y}=\frac{\ket{0}-i\ket{1}}{\sqrt{2}} \right\}$. 

\subsection{Pauli operators}
The Pauli operators are observables corresponding to the projectors in the $X$, $Y$ and $Z$ bases. They are given by $\sigma_1=X=\ket{+_x}\bra{+_x}-\ket{-_x}\bra{-_x}$, $\sigma_2=Y=\ket{+_y}\bra{+_y}-\ket{-_y}\bra{-_y}$ and $\sigma_3=Z=\ket{0}\bra{0}-\ket{1}\bra{1}$. We will also use $\sigma_0=I$ to denote the $2\times 2$ identity operator. Note that the Pauli operators are traceless, Hermitian and unitary, i.e. $\Tr[\sigma_i]=0$, $\sigma_i=\sigma_i^\dagger$ and $\sigma_i^2=\sigma_0$ for $i= 1,2,3$. 

\subsection{Quantum gates}
An important part of quantum computation amounts to composing unitary operations acting on collections of qubits. These operation are known as quantum gates. Single-qubit gates correspond to unitary operators acting locally on one qubit, e.g. the Hadamard gate, $H\ket{j}=\frac{\ket{0}+(-1)^j\ket{1}}{\sqrt{2}}$, $j\in\{0,1\}$. In general, the Pauli operators can be used to construct arbitrary single qubit unitary rotations, $R_{\sigma_i}(\theta)=\mathrm{e}^{i\theta\sigma_i}$ around each respective axis.   
Many-qubit gates are unitary operations acting on more than one qubit. These operations are capable of generating quantum entanglement, e.g. the controlled-not gate, $CNOT\ket{i}\ket{j}=\ket{i}\ket{i\oplus j}$, where $i,j\in \{0,1\}$ and $\oplus$ denotes the addition modulo 2.

%It is also useful to introduce the controlled phase gate, $Z_\theta\ket{j}=\mathrm{e}^{-i\theta j}\ket{j}$ and the controlled rotation gate, $R(\theta)$ which acts as $R(\theta)\ket{0}=\cos(\theta)\ket{0}+\sin(\theta)\ket{1}$ and $R(\theta)\ket{1}=-\sin(\theta)\ket{0}+\cos(\theta)\ket{1}$. These arbitrary rotation gates allow for manipulation of quantum states based on data stored as real values and can be composed from Pauli basis rotations.

\section{Statistical ensemble of states}
\subsection{The density matrix}
\label{sub: DM}
The density matrix is a formalism to describe a probability mixture of pure quantum states. 
Suppose we are given a system which has a probability $p_i$ to be in the state $\ket{\psi_i}$, we say the system is in a statistical ensemble of pure states, $\{p_i,\ket{\psi_i}\}$. The density matrix (or density operator) of the system is then defined as
\begin{align}
    \rho = \sum_i p_i\ket{\psi_i}\bra{\psi_i},
\end{align}
where $\sum_i p_i=1$. The density matrix is an operator acting on the system's Hilbert space. In the special case when the state of the system is in $\ket{\psi_j}$ with unit probability, we say the system is in a pure state, and the density matrix is simply given by the projector, $\ket{\psi_j}\bra{\psi_j}$. Otherwise, we say the system is in a mixed state with a probability distribution $\{p_i\}$. The density matrix can be used to calculate the expectation value of any observable $M$ on the system as follows,
\begin{align}
    \langle M\rangle=\sum_i p_i\bra{\psi_i}M\ket{\psi_i}=\Tr [\rho M].
\end{align}
Since the eigenvalues of the density matrix physically correspond to a probability distribution over the eigenvectors of $\rho$ which are themselves pure quantum state vectors, the density matrix is necessarily positive semi-definite Hermitian operators with unit trace. On the other hand, any given $2^N\times 2^N$ matrix that satisfies the Hermitian, positive semi-definite and unit trace properties have the physical interpretation of an $N$-qubit density matrix. In Part \ref{Part3} of this thesis, we will consider a natural extension of the density matrix formalism where the multi-qubit observables on the mixed state are allowed to extend across the temporal domain.

\subsection{Quantum operations}
In the case of a closed system, the evolution of the density matrix translates straight-forwardly from the unitary and linear dynamics for pure states, i.e., if a unitary $U$ is applied on the ensemble $\{p_i,\ket{\psi_i}\}$, the corresponding density matrix transforms as $\rho\rightarrow U\rho U^{\dagger}$. In this section, we describe the general quantum operation on open quantum systems. 

Suppose now an initial system described by $\rho$ is coupled with an environment described (without loss of generality) by the pure state $\rho_{e}=\ket{e_0}\bra{e_0}$. Since the joint system, $\rho\otimes \rho_{e}$ is now a closed system, its general dynamics can be described by the unitary transformation, $U(\rho \otimes \ket{e_0}\bra{e_0})U^\dagger$. The resultant transformation on the initial system, $\varepsilon (\rho)$ is then given by a partial trace over the environment,
\begin{align}
    \varepsilon (\rho) =& \Tr_e \left[U(\rho\otimes\ket{e_0}\bra{e_0})U^\dagger\right]\nonumber\\
    =& \sum_k \bra{e_k}U(\rho\otimes\ket{e_0}\bra{e_0})U^\dagger\ket{e_k}\nonumber\\
    =& \sum_k E_k \rho E_k^\dagger,
\end{align}
where $\ket{e_k}$ denotes an orthonormal basis for the environment's state space. We have defined $E_k = \bra{e_k}U\ket{e_0}$ which are known as  the Kraus operators of the quantum operation $\varepsilon$. Trace preserving quantum operations are also known as quantum channels. A channel mathematically corresponds to a completely positive trace preserving (CPTP) map. In this case, the Kraus operators satisfy the completeness relation, $\sum_k E_k^\dagger E_k=I$. In general, when measurements are involved and extra information is obtained about the process, the quantum operation is not necessarily trace preserving, and the Kraus operators instead satisfy $\sum_k E_k^\dagger E_k\le I$.
The trace preserving cases (quantum channels) will be more relevant to the materials presented in Part \ref{Part3} of this thesis.

 \chapter{Essential quantum algorithms}

In this chapter, I introduce some essential quantum algorithms which will serve as building blocks later in the thesis.
We start with the more basic algorithms: The quantum Fourier transform which is regarded as the root of quantum advantage in many higher-level algorithms, quantum phase estimation which approximately computes the eigenvalues of a Hamiltonian matrix in a superposition, and quantum Hamiltonian simulation which amounts to constructing a unitary operator corresponding to the time evolution under a Hamiltonian. We then review a quantum algorithm that combines these basic techniques and provides an advantage in solving systems of linear equations under a quantum formulation of the problem.

\label{IntroQA} 

\section{Basic quantum algorithms}
%We start by introducing three important elementary quantum algorithms: the quantum Fourier transform, black-box Hamiltonian simulation and the quantum phase estimation. The style of this section will be a general introduction sufficient for future usage in the thesis.

\subsection{Quantum Fourier transform}
\label{subsec:QFT}
The quantum Fourier transform (QFT) is the foundation of many quantum algorithms, including the celebrated quantum factoring algorithm \cite{shor2004classical}. It can be seen as the quantum analog of the discrete Fourier transform in classical computation. Here we briefly introduce QFT and describe the unitary operator for its implementation. A detailed description can be found in all canonical texts of quantum information, such as Ref.\cite{kitaev2002classical, nielsen2010quantum}.

The normalised discrete Fourier transform of a vector $\mathbf{v}=(v_1...v_n)^T$ is given by the vector $\hat{\mathbf{v}}$ with entries,
$
\hat{v}_y=\frac{1}{\sqrt{n}}\sum_{x=1}^{n}v_x \mathrm{e}^{-\frac{2\pi xyi}{n}}    .
$
For $v_x$ with periodicity $P$, such that $v_x=v_{x+P}$, we have
\begin{align}
\hat{v}_y 
=& \frac{1}{\sqrt{n}}\left(\sum_{x=1}^P\sum_{m=0}^{\lfloor nP^{-1} \rfloor-1} v_{x+mP}\mathrm{e}^{-\frac{2\pi(x+mP)yi}{n}}+\sum_{x=\lfloor nP^{-1} \rfloor+1}^n v_x \mathrm{e}^{-\frac{2\pi xyi}{n}} \right)\nonumber\\
=&  \frac{1}{\sqrt{n}} \left(\sum_{m=0}^{\lfloor nP^{-1} \rfloor-1} \mathrm{e}^{-\frac{2\pi mPyi}{n}} \sum_{x=1}^P v_{x+mP}\mathrm{e}^{-\frac{2\pi xyi}{n}} +\sum_{x=\lfloor nP^{-1} \rfloor+1}^n v_x \mathrm{e}^{-\frac{2\pi xyi}{n}} \right).
\end{align}
Note that for $n\gg P$, the above expression only consists of small oscillations around zero unless $Py$ is an integer. Therefore the only surviving terms correspond to $y$ being an integer multiple of the frequency. 

The QFT is the discrete Fourier transform applied to quantum state vectors, and it is implemented by the unitary operator,
\begin{align}
    U_{QFT}=\frac{1}{\sqrt{n}}\sum_{y=1}^n\sum_{x=1}^n \mathrm{e}^{-\frac{2\pi xyi}{n}} \ket{y}\bra{x}.
\end{align}
One can easily verify the above indeed corresponds to the discrete Fourier transform of a quantum state by applying it to an arbitrary state vector $\ket{\mathbf{v}}$,
\begin{align}
    \bra{z}U_{QFT}\ket{\mathbf{v}}=&\bra{z}\frac{1}{\sqrt{n}}\sum_{y=1}^n\sum_{x=1}^n \mathrm{e}^{-\frac{2\pi xyi}{n}} \ket{y}\langle x|\mathbf{v}\rangle\nonumber\\
    =& \frac{1}{\sqrt{n}}\sum_{x=1}^{n} \langle x|\mathbf{v}\rangle \mathrm{e}^{-\frac{2\pi xyi}{n}}\nonumber\\
    =& \langle z|\hat{\mathbf{v}}\rangle.
\end{align}
Given access to a set of basic unitary gates and $n$ qubits, a quantum computer can perform the discrete Fourier transform on $2^n$ amplitudes with only $\mathcal{O}(n^2)$ Hadamard and controlled phase gates, providing an exponential advantage over the classical counterpart that takes $\mathcal{O}(n2^n)$ gates \cite{nielsen2010quantum}. It is worth noting that an improved version of the QFT presented in Ref. \cite{hales2000improved} has further suppressed the cost to $\mathcal{O}(n\log n)$.

\subsection{Quantum phase estimation}
\label{sub:PE}

The Quantum phase estimation, first introduced in Ref. \cite{Kitaev1995} is a quantum algorithm that takes as input an eigenvector of a unitary
operator and estimates the corresponding eigenvalue to a certain additive error. It is the root of the quantum advantage in many machine learning and linear algebraic applications. Here we define the quantum phase estimation algorithm for future reference. A detailed description of its procedures can be found for example in section 5.2 of Ref. \cite{nielsen2010quantum}. 

Let the unitary operator $U\in \mathbb{C}^{n\times n}$ have eigenvectors $\{\ket{v_j}\}$ with corresponding eigenvalues $\{e^{i \theta_j}\}$, such that $\ket{v_j} = e^{i \theta_j} \ket{v_j}$, where $\theta_j \in [ - \pi ,\pi ]$    for $j \in [n]$. Further define the precision parameter $\delta$ to denote an additive error. Given an oracle for implementing $U^l$ for $l=\mathcal{O}(1/\delta)$, the quantum phase estimation algorithm performs the following transformation, 
\begin{align}
\sum_{j \in [n]} \alpha_j \ket{v_j} \to \sum_{j \in [n]} \alpha_j \ket{v_j} \ket{\overline{\theta}_j},     
\end{align}
such that $|\overline{\theta_{j}} - \theta_{j}|\leq \delta$ for all $j\in [n]$ with probability $1-1/\text{poly}(n)$ in time that scales as $\Ord{T_U \log{(n)} / \delta}$, where $T_U$ denotes the time required to implement $U$.

\subsection{Black-box Hamiltonian simulation}
\label{sub: BHS}

Given a Hermitian Hamiltonian operator $H\in \mathbb{C}^{n\times n} $,
the black-box access to the matrix elements $H_{jk}$ is an oracle $O_{H}$ that allows for the operation,
\begin{align}
    O_{H} \ket{j,k} \ket{z} \to  \ket{j,k} \ket{z \oplus H_{jk}},
\end{align}
for an arbitrary input $\ket{z}$, where $j,k \in \{1,2,...,n\}$ and $\oplus$ denotes the bitwise addition modulo two operation. The time evolution of a quantum state $\ket{\psi(t)}$ under $H$ is described by the time-dependent Schr\"{o}dinger equation,
\begin{equation}
i \frac{d}{dt} \ket{\psi(t)} = H \ket{\psi(t)}. 
\end{equation}
The solution is given by $\ket{\psi(t)}=U(H,t)\ket{\psi(0)}$, where the unitary operator $U(H,t)=\exp\left(-iHt\right)$. Black-box Hamiltonian simulation amounts to constructing a quantum circuit that implements $U(H,t)$ given access to the oracle $O_{H}$.

In the general case, the results of \cite{berry2012black} shows that the black-box Hamiltonian simulation can be performed in time $\Ord{n^{2/3}\cdot\text{polylog}(n)/\delta_h^{1/3}}$ with an $\delta_h$ error in the trace distance using a method based on discrete time quantum walks \cite{Szegedy04}. Empirical results of \cite{berry2012black} suggested black-box Hamiltonian simulation can be implemented in time $\Ord{\sqrt{n}\cdot\text{polylog}(n)/ \delta_h^{1/2}}$ for several classes of Hamiltonians. However, the $\tilde{\mathcal{O}}(\sqrt{n})$ runtime is known to not hold in the worst case. The notation $\tilde{\mathcal{O}}(.)$ is used here to suppress slower growing factors in the runtime scaling.
In special cases, properties of $H$ such as sparsity can be leveraged to implement Hamiltonian simulation more efficiently. 
It was shown in Ref. \cite{berry2015hamiltonian} that combing techniques from quantum walk \cite{Szegedy04} and fractional query simulation \cite{berry2017exponential}, Hamiltonian simulation on an $s$-sparse matrix (that is, the maximum number of non-zero entries on any rows or columns is $s$) can be performed in time $\tilde{\mathcal{O}}(s\cdot\text{polylog}(n)/\delta_h^{1/2})$.

It is worth mentioning that the black-box model is not the uniquely interesting setting to consider. Other important models include the quantum signal processor \cite{low2016hamiltonian} and the density matrix encoding mode \cite{kimmel2017hamiltonian, lloyd2014quantum}. Detailed descriptions of quantum Hamiltonian simulation algorithms and a comprehensive review on this subject is beyond the scope of this thesis. Interested readers are referred to the Chapters 25 and 26 of Ref. \cite{childs2017lecture}.

\section{Quantum linear system algorithm}

Solving a linear system of equations is a problem that appears in many disciplines across science and engineering. Given a set of $n$ linear equations with $n$ unknown variables, we wish to find the $n$ dimensional vector $\mathbf{x}$ which satisfies $A\mathbf{x}=\mathbf{b}$, where $A$ and $\mathbf{b}$ a are known $n\times n$ dimensional matrix and a known $n$ dimensional vector respectively. The solution of the linear system can be written as $\mathbf{x}=A^{-1}\mathbf{b}$ for an invertible matrix $A$. In special cases, $A$ has convenient properties such as sparsity, of which one can take advantage and compute $A^{-1}$ in time proportional to $n$ with the conjugate gradient method\cite{shewchuk1994introduction}. In general, the best known classical method for matrix inversion scales as $\Ord{n^{2.373}}$, with the optimised CW-like algorithms \cite{Coppersmith1990, L14}. However, this sub-cubic scaling is practically difficult to achieve. A more typical implementation amounts to using the Cholesky decomposition which has a runtime that scales as $\Ord{n^{3}}$ for dense matrices. In modern statistical inference and machine learning applications, matrix inversion presents a computational bottleneck when the dimensionality $n$ of the underlying problem grows. 
Recent discoveries in quantum algorithms have shown promises for a more efficient solution of high-dimensional linear systems. Given the importance and generality of the problem, quantum linear system algorithms may manifest as the cornerstone of quantum advantage in many use cases. In this section, we review some of the earlier progress in this subject. In the next chapter, we will present a new result along the same line of research.

\subsection{Quantum formulation of linear systems}
Let $A\in \mathbb{R}^{n\times n}$ be a Hermitian matrix, with $\|A\|_*\le 1$. Here $\| . \|_*$ denotes the spectral norm which corresponds to the largest absolute value of the eigenvalues in the case of Hermitian matrices. Let $\mathbf{x}, \mathbf{b}\in \mathbb{R}^n$, such that $A\mathbf{x}=\mathbf{b}$. We define the following quantum formulation of the linear system problem: 

Given access to the elements of $A$ and an input quantum state vector $\ket{\mathbf{b}}$ of $\log n$ qubits which encodes the entries in $\mathbf{b}$ as
\begin{align}
\ket{\mathbf{b}}=\frac{\sum_j b_j\ket{j}}{\|\sum_jb_j\ket{j}\|_2},    
\end{align}
the quantum linear system problem amounts to finding the state vector $\ket{\mathbf{x}}$ of $\log n$ qubits which encodes the entries in solution vector $\mathbf{x}$ as 
\begin{align}
\ket{\mathbf{x}}=\frac{\sum_j x_j\ket{j}}{\|\sum_jx_j\ket{j}\|_2}.
\end{align}
\textbf{Remarks:}
\begin{itemize}

\item Note that the input and output of the quantum linear system problem are both quantum states. Therefore the initial state preparation and final solution readout procedures will need to be explicitly addressed for any applications that have classical vectors as inputs and outputs. This point has been discussed in Ref. \cite{Aaronson2015b} and will be revisited later in this thesis.

\item Defining $A$ to be a Hermitian matrix is in fact without loss of generality. As pointed out in Ref. \cite{Harrow2009a}, a general matrix $M$ can be embedded into a Hermitian matrix with a constant memory overhead by constructing
a block-wise anti-diagonal matrix $A$ as follows,
\begin{align}
A =\begin{pmatrix}
    0 & M^\dagger\\
    M & 0  
   \end{pmatrix}.
\end{align}  

\item The requirement on bounded spectral norm is not a strong restriction in practice since it can often be satisfied with a suitable choice of normalisation factor. 
\end{itemize}
\subsection{The HHL algorithm}
\label{HHL}
In the breakthrough work of Ref. \cite{Harrow2009a}, Harrow, Hassidim and Lloyd (HHL) introduced the first quantum linear system algorithm (QLSA) that computes the quantum state $\ket{\mathbf{x}}=\ket{A^{-1}\mathbf{b}}$ which corresponds to the solution of the linear system $A\mathbf{x}=\mathbf{b}$ in time $\Ord{\text{polylog}(n)}$ for a sparse and well-conditioned $A$. In this section, we review this seminal algorithm and discuss its implications. The procedure of the original quantum linear systems solver provided in Ref. \cite{Harrow2009a} can be summarised in the following five steps:

\begin{enumerate}
\item To start with, prepare a quantum state $\ket{\mathbf{b}}$ which encodes the vector $\mathbf{b}\in \mathbb{R}^n$ as 
    $
        \ket{\mathbf{b}}=(\mathbf{b}^T\mathbf{b})^{-1/2}\sum\limits_{i=0}^{n-1}b_i\ket{i}.
    $
    Then append to $\ket{\mathbf{b}}$ an ancillary register in a superposition state $\frac{1}{\sqrt{T}}\sum_{\tau = 0}^T \ket{\tau}$. The time period $T$ is chosen to be some large value as required in the variant of phase-estimation described in Ref. \cite{clock}, so that after Step 1 we have the quantum state, 
    \begin{align}
        \ket{\phi_1}= \frac{1}{\sqrt{\mathbf{b}^T\mathbf{b}}}\frac{1}{\sqrt{T}}\sum\limits_{i=0}^{n-1}\sum_{\tau = 0}^T b_i\ket{i}\ket{\tau}. \label{eqn: step1HHL}
    \end{align}
\item Perform Hamiltonian simulation treating the matrix $A$ as the Hamiltonian at time $\tau$. Apply the resultant controlled unitary operation to $\ket{\mathbf{b}}$ using techniques described in Ref. \cite{Berry2007}. By writing $\ket{\mathbf{b}}$ in the eigenbasis of ${A}$ after evolution, we obtain the state,
\begin{align}
\ket{\phi_2}=\frac{1}{\sqrt{\mathbf{b}^T\mathbf{b}}}\frac{1}{\sqrt{T}}\sum\limits_{i=0}^{n-1}\sum\limits_{\tau=0}^{T-1}\ket{\tau}e^{i\lambda_it_0\tau/T}\beta_i\ket{\mu_i},
\end{align} 

where $\lambda_i$ are the eigenvalues and $\ket{\mu_i}$ are the eigenvectors of $A$. The complex numbers $\beta_i$ are the probability amplitudes associated with $\ket{\mu_i}$. For some precision parameter $\epsilon$ which will feature as an additive error of the final result in the trace norm, we choose the time scale $t_0=\mathcal{O}(\kappa/\epsilon)$ where $\kappa$ denotes the condition number, the ratio between the largest and the smallest eigenvalues of $A$.
\item Complete phase estimation \cite{Kitaev1995,clock} by applying the quantum Fourier transform (QFT) to the first register in $\ket{\phi_2}$, which leads to 
\begin{align}
\ket{\phi_3}=\frac{1}{\sqrt{\mathbf{b}^T\mathbf{b}}}\sum\limits_{i=0}^{n-1}\beta_i\ket{t\bar{\lambda}_i}\ket{\mu_i},
\end{align}
where the first register now stores the estimated eigenvalues $\bar{\lambda}_i$ up to a constant multiplicative factor $t$.
\item Introduce another ancillary qubit and perform a controlled rotation on it based on the value in the first register, and obtain the extended state
\begin{align}
\ket{\phi_4}=\frac{1}{\sqrt{\mathbf{b}^T\mathbf{b}}}\sum\limits_{i=0}^{n-1}\beta_i\ket{t\bar{\lambda}_i}\ket{\mu_i}\left(\sqrt{1-\frac{c_\lambda^2}{\bar{\lambda_i}^2}}\ket{0}+\frac{c_\lambda}{\bar{\lambda_i}}\ket{1}\right).
\end{align}
Here the constant $c_\lambda$ is chosen such that the resultant probability amplitude is bounded by unity.
\item Reverse the phase estimation step on the first register to uncompute $\ket{t\bar{\lambda}_i}$. Measure the final ancillary qubit. Conditioned on obtaining $\ket{1}$ as the measurement result, an approximated solution of ${A}\ket{\mathbf{x}}=\ket{\mathbf{b}}$ is obtained, 
\begin{align}
\ket{\bar{\mathbf{x}}}=\ket{\phi_{5}}=\frac{1}{\sqrt{\mathbf{b}^T\mathbf{b}}}\sum\limits_{i=0}^{n-1}\frac{\beta_i}{\bar{\lambda_i}}\ket{\mu_i}.
\end{align}
For a precision parameter $\epsilon$, the additive error in the trace norm of the output state is bounded as $\|\ket{\bar{\mathbf{x}}}-\ket{\mathbf{x}}\|\le \epsilon$. Note that a post-selection of measurement outcomes is involved in this final step, and as a consequence multiple repetitions of the procedure may be needed in order to successfully obtain the desired outcome.
\end{enumerate}

\paragraph{Runtime and errors}
The required Hamiltonian simulation subroutine runs nearly linearly with the sparsity, $s$, with the black-box Hamiltonian simulation technique of \cite{berry2015hamiltonian}. The time scale parameter $t_0$ of phase estimation is chosen to be $\mathcal{O}(\kappa/\epsilon)$ to ensure the desired precision. Furthermore, $\mathcal{O}(\kappa)$ repetitions of the procedure are needed to obtain the desired outcome on the final measurement of the ancillary qubit, making use of the amplitude amplification based techniques of \cite{Ambainis2010}. From the above rough account, the total runtime scales as $\tilde{O}(\log(n)\kappa^2{s}^2/\epsilon)$. A detailed error and runtime analysis can be found in the supplementary material of \cite{Harrow2009a}.

\paragraph{Potential caveats}
The quantum linear algorithm described above can potentially provide a promising exponential speed-up. However, one needs to apply it with care. As Aaronson accurately described in Ref. \cite{Aaronson2015b}, there are four potential caveats that need particular care in any applications: (1) The time consumption of preparing $\ket{\mathbf{b}}$ encoding $\mathbf{b}$ needs to be taken into account; (2) the matrix $A$ has to be robustly invertible, meaning that the condition number $\kappa$ needs to grow at most polylogarithmically in $n$ in order to retain a polylogarithmic overall runtime; (3) one also needs to address the sparsity contribution to the total runtime, since the general phase estimation sub-routine costs time polynomial in $s$; (4) although the output of QLSA is the state $\ket{\mathbf{x}}$, there is no efficient procedure to extract every entry of $\mathbf{x}$. The quantum advantage only presents when the matter of practical interest does not require the full $\mathbf{x}$ but requires only information accessible with a few copies of $\ket{\mathbf{x}}$. For instance, if a known Hermitian matrix ${M}$ is of interest, one can efficiently estimate quantities such as $\bra{\mathbf{x}}{M}\ket{\mathbf{x}}$, since this amounts to the expectation value of the observable $M$ on $\ket{\mathbf{x}}$. 

\paragraph{Developments}
There have been several improvements to the QLSA since the original HHL proposal that have improved the running time to linear in the condition number $\kappa$ and the sparsity $s$, and to poly-logarithmic in the precision parameter $\epsilon$ \cite{Ambainis2010, Childs2015}. The work of Ref.\cite{Clader2013a} further introduced pre-conditioning for the QLSA and extended its applicability. In the next chapter, we build upon this line of research and present a linear system algorithm that circumvents the expensive Hamiltonian simulation step and has a provably better performance than the existing algorithms when applied to linear systems with dense matrices.

\chapter{Quantum dense linear system algorithm}
\label{QDLSA}

In this chapter, I present an alternative approach to solving the quantum linear systems problem, which is based on a quantum subroutine for singular value estimation (SVE). The SVE-based linear system algorithm, introduced in Ref. \cite{wossnig2018quantum} has a runtime scaling of $\Ord{\kappa^2 \|A\|_F \cdot\text{polylog}(n)/\epsilon}$ for an $n\times n$ dimensional Hermitian matrix $A$ with a Frobenius norm $\|A\|_F$ and condition number $\kappa$. As before, $\epsilon$ is the precision parameter defined by the desired output error in the trace norm. Unlike the HHL algorithm, the SVE-based method does not require performing Hamiltonian simulation on $A$, making it advantageous particularly when $A$ is dense. Therefore, we refer to it as a quantum dense linear system (QDLS) algorithm. An important component of the QDLS algorithm is the quantum singular value estimation (QSVE), introduced in\cite{Kerenidis2016}. It makes use of a memory model that supports efficient preparation of states which correspond to the row vectors of $A$ and the vector of the row Euclidean norms of $A$. We will start by introducing this memory model, followed by an outline of the quantum SVE algorithm. Finally, we put the components together and present the QDLS algorithm.

\section{Memory model}
\label{sub: Memory}
In order to keep our description of the memory model general, we consider a rectangular matrix $A \in \mathbb{R}^{m \times n}$.
Instead of using a model that allows for black-box access to the matrix elements, here we work in a model that realises a data structure which satisfies the following properties:
\begin{itemize}
  \item Given access to the data structure, a quantum computer can perform the following mappings in $\Ord{\text{polylog}(mn)}$ time.
  \begin{align}
U_\mathcal{M}: \ket{i}\ket{0}\rightarrow\ket{i,\vec{A_i}} &= \frac{1}{\|\vec{A_i}\|}\sum\limits_{j=1}^{n}A_{ij}\ket{i,j},\notag \\
U_\mathcal{N}: \ket{0}\ket{j}\rightarrow\ket{\vec{A}_F,j} &= \frac{1}{\|A\|_F}\sum\limits_{i=1}^{m}\|\vec{A_i}\|\ket{i,j},
  \end{align}
where $\vec{A_i}\in \mathbb{R}^{n}$ are the row vectors of the matrix $A$ and $\vec{A}_F\in \mathbb{R}^{m}$ is a vector of the Euclidean norms 
of the rows, i.e.\ $(\vec{A}_F)_i=\norm{A_i}$. 
  \item The time needed to store a new entry $A_{ij}$ is in $\Ord{\text{log}^2(mn)}$. The data structure has size $\Ord{w\log (mn)}$ with $w$ denoting the number of non-zero entries in $A$.
  \end{itemize}
One way to construct the data structure as described with the above-desired properties is based on a size $m$ array of binary trees, where each tree contains no more than $n$ leaves. The leaves store the squared values of the corresponding matrix element $|A_{ij}|^2$, together with the sign, $sgn(A_{ij})$. Each internal node of a binary tree stores the summation of the values in the subtree rooted at it, as such the root of the $i^{th}$ tree stores $\norm{\vec{A}_i}^2, \, i \in [m]$. In order to access row Frobenius norm vectors, one merely need to construct an additional binary tree, in which the $i^{th}$ leaf contains $\norm{\vec{A}_i} ^2$. We show the schematic diagram to demonstrate one of the trees in Figure \ref{one tree}. More details about the realisation of this data structure can be found in Ref. \cite{Kerenidis2016} and \cite{Prakash2014}. 

\begin{figure}[H] 
\centering
\includegraphics[width=\textwidth, height=90mm] {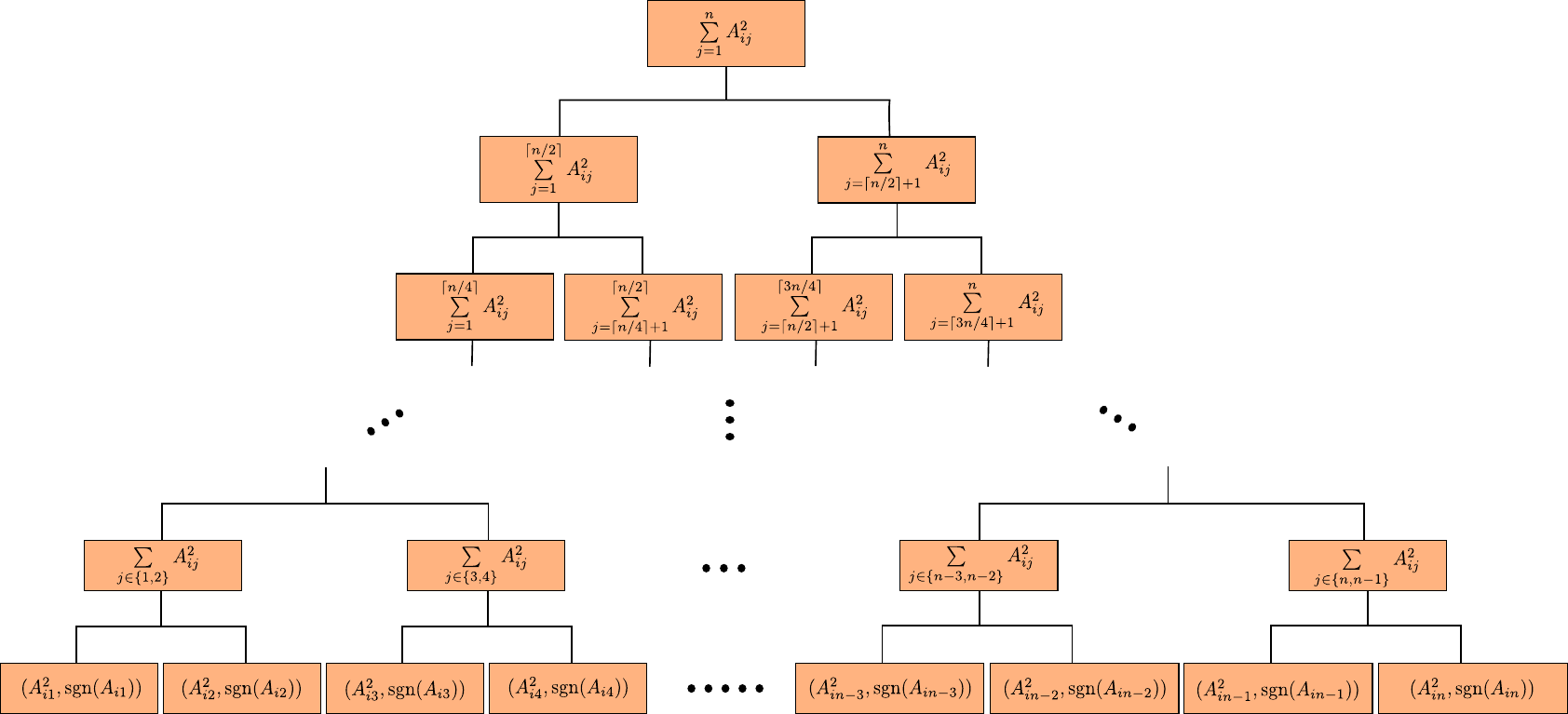} 
\caption{The schematic diagram for one out of the $(m+1)$ binary trees in the data structure used to store the matrix $A$. The depth of the tree is in $\mathcal{O}(\log n)$.} 
\label{one tree} 
\end{figure} 

\section{Quantum singular value estimation}
\label{QSVE}

Having stated the memory model, we are now in the position to outline the quantum singular value estimation (QSVE) subroutine. The QSVE can be seen as an extension of phase estimation to non-unitary matrices. Let the matrix $A\in \mathbb{R}^{m \times n}$ have the singular value decomposition 
\begin{align}
A = \sum_{i}\sigma_{i}\mathbf{u_i}\mathbf{v_i}^\dagger,
\end{align}
where $\mathbf{u_i}$ and $\mathbf{v_i}$ are the left and right singular vectors respectively, and $\sigma_{i}$ are the corresponding singular values. 
Since the left and the right singular vectors each form a complete set of orthonormal bases, an arbitrary input state can be written as the superposition, $\sum_{i}\alpha_i\ket{\mathbf{v}_i}$, where $\ket{\mathbf{v}_i}$ is a quantum state vector which encodes $\mathbf{v}_i$. The quantum SVE subroutine performs the following mapping, 
\begin{align}
    \sum_{i}\alpha_i\ket{\mathbf{v}_i} \to \sum_{i} \alpha_i \ket{\mathbf{v}_i} \ket{\overline{\sigma}_i},
\end{align}
where $\ket{\overline{\sigma_{i}}}$ is a state vector encoding the estimates for the singular values of $A$ with a precision $\delta$, so that $|\overline{\sigma_{i}} - \sigma_{i}|\leq \delta$ for all $i$. 

An algorithm for QSVE with a runtime of $\tilde{\mathcal{O}}(\norm{A}_{F} /\delta)$ was introduced in Ref. \cite{Kerenidis2016}, and applied to quantum recommendation systems. It is the main tool required for the quantum dense linear system (QDLS) algorithm \cite{wossnig2018quantum} to be presented in this chapter. In this section, we first give an overview of QSVE with essential mathematical background and high-level intuition. Then we outline the procedures of QSVE. Finally, we provide by a brief analysis of the algorithm, while a more thorough analysis can be found in Ref. \cite{Prakash2014, Kerenidis2016}.

\subsection{Overview}

The QSVE algorithm is based on the idea of quantum walks. It makes use of the connection between the singular values $\sigma_i$ of the matrix $A=\sum_{i}\sigma_{i}\mathbf{u_i}\mathbf{v_i}^\dagger$ and the principal angles, $\theta_i$ between certain associated subspaces. There exist a factorisation, 
\begin{align}
\frac{A}{\|A\|_F}= \mathcal{M}^{\dagger}\mathcal{N},    \label{factorisation}
\end{align}
where $\mathcal{M}\in\mathbb{R}^{mn\times m}$ and $\mathcal{N}\in\mathbb{R}^{mn\times n}$ are isometries with column spaces denoted $\mathcal{C}_\mathcal{M}$ and $\mathcal{C}_\mathcal{N}$ respectively.
The isometries $\mathcal{M}$ act on an arbitrary input state vector $\ket{\alpha}$ as a mapping that appends a register which stores the row vectors $\vec{A_i}$, such that
\begin{align}
\mathcal{M}: \ket{\alpha}&=\sum\limits_{i=1}^{m}\alpha_i\ket{i}\rightarrow\sum\limits_{i=1}^{m}\alpha_i\ket{i,\vec{A_i}}=\ket{\mathcal{M}\alpha}.
\end{align}
similarly, the isometries $\mathcal{N}$ act on an arbitrary input state vector $\ket{\alpha}$ as a mapping that appends a register which stores the vector $\vec{A_F}$, in which the entries are the row vector Euclidean norms $\|\vec{A_i}\|$, such that 
\begin{align}
\mathcal{N}: \ket{\alpha}=\sum\limits_{j=1}^{n}\alpha_j\ket{j}\rightarrow\sum\limits_{j=1}^{n}\alpha_j\ket{\vec{A_F},j}=\ket{\mathcal{N}\alpha}.
\end{align}
The above maps can be efficiently implemented given the memory model as described in Section \ref{sub: Memory}. The factorisation Eq. \ref{factorisation} then follows directly from the amplitude encodings of $\vec{A_i}$ and $\vec{A}_F$, as we have
\begin{align}
\ket{i,\vec{A_i}}&=\frac{1}{\|\vec{A_i}\|}\sum\limits_{j=1}^{n}A_{ij}\ket{i,j},\nonumber\\
\ket{\vec{A}_F,j}&=\frac{1}{\|A\|_F}\sum\limits_{i=1}^{m}\|\vec{A_i}\|\ket{i,j}.    
\end{align}
Taking the inner product of the above equations leads to
\begin{align}
(\mathcal{M}^{\dagger}\mathcal{N})_{ij}=
\langle i,\vec{A_i} \vert \vec{A}_F,j \rangle
=\frac{A_{ij}}{\|A\|_F}. 
\end{align}
A similar calculation shows that $\mathcal{M}$ and $\mathcal{N}$ have orthonormal columns and thus $\mathcal{M}^{\dagger}\mathcal{M}=I_m$ and $\mathcal{N}^{\dagger}\mathcal{N}=I_n$. 
The singular values of the normalised matrix $\frac{A}{\norm{A}_{F}}$ have a one-to-one correspondence to the principal angles between the subspaces $\mathcal{C}_\mathcal{M}$ and $\mathcal{C}_\mathcal{N}$. 
The efficiency of QSVE relies on the fact that given the matrix $A$ stored in a data structure as described in Section \ref{sub: Memory}, the following unitary operator $W$ can be implemented efficiently, 
\begin{align}
W=(2\mathcal{M}\mathcal{M}^{\dagger}-I_{mn})(2\mathcal{N}\mathcal{N}^{\dagger}-I_{mn}),     
\end{align}
where $I_{mn}$ denotes the $(mn)\times (mn)$ identity matrix. 
Note the fact that $W$ acts on $\ket{\mathcal{N}\mathbf{v}_i}$ as a rotation in the plane of $\{\mathcal{M}\mathbf{u}_i,\mathcal{N}\mathbf{v}_i\}$ by $\theta_i$. Hence the two dimensional sub-space spanned by $\{\mathcal{M}\mathbf{u}_i,\mathcal{N}\mathbf{v}_i\}$ is invariant under $W$. The eigenvectors of $W$, $\ket{w_i^\pm}$, therefore have corresponding eigenvalues $\exp(\pm i\theta_i)$. We can write in the eigenbasis of $W$, $\ket{\mathcal{N}\mathbf{v}_i}=\omega_i^+\ket{w_i^+}+\omega_i^-\ket{w_i^-}$, with $|\omega_i^-|^2+|\omega_i^+|^2=1$, and phase estimation can be performed to estimate $\pm {\theta}_i$. Finally the singular values of $A$, $\{\sigma_{i}\}$ are computed via $\sigma_{i} = \cos(\theta_{i}/2) \norm{A}_{F}$, a relation which will be shown in Section \ref{sub: Brief analysis}.

Intuitively the operators $2\mathcal{M}\mathcal{M}^{\dagger}-I_{mn}$ and $2\mathcal{N}\mathcal{N}^{\dagger}-I_{mn}$ can be seen as a generalisation of the Grover diffusion operator \cite{grover1996fast}. They act on the subspaces $\mathcal{C}_\mathcal{M}$ and $\mathcal{C}_\mathcal{N}$ respectively as reflection operators. Thus applying $W$ represents two sequential reflections, on the $\mathcal{C}_\mathcal{N}$ and then the $\mathcal{C}_\mathcal{M}$ subspaces. As such $W$ has the interpretation of taking a step in the bipartite quantum walk as formulated in Ref. \cite{Szegedy04} with the discriminant matrix given by our normalised target matrix $\frac{A}{\|A\|_F}$. The connections between quantum walks and the eigenvalues of the discriminant matrix are well-known in the literature and have been used in numerous previous works \cite{Szegedy04, C10, santha2008quantum}.

\subsection{Procedures}
Having introduced the mathematical background, we are now in the position to outline the procedures of QSVE.
\begin{enumerate}
  \item Create an arbitrary input state $\ket{\alpha} = \sum_i \alpha_{i} \ket{\mathbf{v}_i}$, where $\ket{\mathbf{v}_i}$ encodes the $i^{th}$ normalised left singular vector of $A$.
  \item Append a register with size $\log{m}$, $\ket{0^{\lceil \log{m} \rceil}}$. Query the data structure to apply $U_\mathcal{N}$, and create the state 
 \begin{align}
     \ket{\mathcal{N} \alpha} = \sum_i \alpha_{i} \ket{\mathcal{N} v_i}=\sum_i \alpha_{i}(\omega_i^+\ket{w_i^+}+\omega_i^-\ket{w_i^-}).
 \end{align}
  \item Perform phase estimation \cite{Kitaev1995} with precision $2 \delta >0$ on input $\ket{\mathcal{N}\alpha}$ for $W=(2\mathcal{M}\mathcal{M}^{\dagger}-I_{mn})(2\mathcal{N}\mathcal{N}^{\dagger}-I_{mn})$ and obtain $\sum_i \alpha_{i}(\omega_i^+\ket{w_i^+,\overline{\theta}_i}+\omega_i^-\ket{w_i^-,-\overline{\theta}_i})$, where $\overline{\theta_i}$ is the estimated phase $\theta_i$, such that $|\overline{\theta}_i -\theta_i| \leq 2 \delta$.
  \item On the output register of phase estimation  compute $\overline{\sigma}_i = \cos{(\pm\overline{\theta_i}/2)}||A||_F$ to obtain $\sum_i \alpha_{i}(\omega_i^+\ket{w_i^+}+\omega_i^-\ket{w_i^-})\ket{\overline{\sigma}_i}$.
  \item Apply the reversed computation of Step 2 to obtain
      \begin{align}
          \sum\limits_i \alpha_{i} \ket{\mathbf{v}_i} \ket{\overline{\sigma}_i}.
      \end{align}
\end{enumerate}
\subsection{Brief analysis}
\label{sub: Brief analysis}

Stated in a compact manner, the correctness and efficiency of the QSVE algorithm rely on the following:
\begin{itemize}
\item The mappings, $\ket{\alpha} \rightarrow \ket{\mathcal{M}\alpha}$ and $\ket{\alpha} \rightarrow \ket{\mathcal{N}\alpha}$ can be performed in time $\Ord{\text{polylog}(mn)}$, where the isometries $\mathcal{M}$ and $\mathcal{N}$ satisfy $\mathcal{M}^{\dagger}\mathcal{M} = I_m$ and $\mathcal{N}^{\dagger}\mathcal{N} = I_n$ and the factorisation $A/\norm{A}_F = \mathcal{M}^{\dagger}\mathcal{N}$. 
\item The reflection operators $(2\mathcal{M}\mathcal{M}^{\dagger}-I_{mn})$ and $(2\mathcal{N}\mathcal{N}^{\dagger}-I_{mn})$, hence the unitary $W=(2\mathcal{M}\mathcal{M}^{\dagger}-I_{mn})(2\mathcal{N}\mathcal{N}^{\dagger}-I_{mn})$ can be implemented in time $\Ord{\text{polylog}(mn)}$. The unitary $W$ acts on $\ket{\mathcal{N}v_i}$ as a rotation in the plane of $\{\mathcal{M}\mathbf{u}_i,\mathcal{N}\mathbf{v}_i\}$ by $\theta_i$, such that $\sigma_i = \cos\frac{\theta_i}{2}\|A\|_F$, where $\sigma_{i}$ is the $i^{th}$ singular value for $A$. 
\end{itemize}
As previously shown, the first two items in the above listing are guaranteed by applying the appropriate data structure in Section \ref{sub: Memory}. It remains to show the relationship between the eigenvalues of $W$ and the singular values of $A$.
We start by considering the action of $W$ as follows,
\begin{align}
W\ket{\mathcal{N}v_i}=&(2\mathcal{M}\mathcal{M}^{\dagger}-I_{mn})(2\mathcal{N}\mathcal{N}^{\dagger}-I_{mn})\ket{\mathcal{N}v_i}\nonumber\\
=&(2\mathcal{M}\mathcal{M}^{\dagger}-I_{mn})\ket{\mathcal{N}v_i}\nonumber\\=&2\mathcal{M}\frac{A}{\|A\|_F}\ket{\mathbf{v}_i}-\ket{\mathcal{N}v_i}.
\end{align}
Using the singular value decomposition $A=\sum_i\sigma_i\ket{\mathbf{u}_i}\bra{\mathbf{v}_i}$, and the fact that the right singular vectors $\{\mathbf{v}_i\}$ are mutually orthonormal, we have
\begin{align}
W\ket{\mathcal{N}v_i}=\frac{2\sigma_i}{\|A\|_F}\ket{\mathcal{M}u_i}-\ket{\mathcal{N}v_i}.
\end{align} 
It is now visible that $W$ has rotated $\ket{\mathcal{N}v_i}$ in the plane of $\{\mathcal{M}\mathbf{u}_i,\mathcal{N}\mathbf{v}_i\}$ by an angle $\theta_i$, such that 
\begin{align}
    \cos\theta_i&=\bra{\mathcal{N}v_i}W\ket{\mathcal{N}v_i} \notag \\ 
    &=\frac{2\sigma_i}{\|A\|_F^2} \bra{v_i} A^{\dagger} \ket{\mathbf{u}_i}-1 \notag \\
    &=\frac{2\sigma_i^2}{\|A\|_F^2}-1.\label{cos}
\end{align} 
Note that we have used the fact $(2\mathcal{M}\mathcal{M}^\dagger-I_{mn})$ is a reflection in $\ket{\mathcal{M}u_i}$
and that $A^{\dagger} = \mathcal{N}^{\dagger}\mathcal{M}= \sum_i \sigma_i \ket{\mathbf{v}_i} \bra{u_i}$. 
Hence we have established the angle between $\ket{\mathcal{N}v_i}$ and $\ket{\mathcal{M}u_i}$ is $\frac{\theta_i}{2}$, which amounts to half of the total rotation angle. Comparing the last line of Eq. \ref{cos} with the half-angle formula for cosine functions, leads to the desired relation,
\begin{align}
\cos\left(\frac{\theta_i}{2}\right)=\frac{\sigma_i}{\|A\|_F}.    
\end{align}

The run time of QSVE is dominated by the phase estimation which returns an $\delta$-error estimated eigenvalue $\overline{\theta}_i$, such that $|\overline{\theta}_i -\theta_i| \leq 2 \delta$. This error propagates to the estimated singular value as $\overline{\sigma}_i = \cos{(\overline{\theta}_i/2)} \norm{A}_F$. 
The error in the estimated singular values can hence be bounded from above by $|\overline{\sigma}_i - \sigma_i | \leq \delta \norm{A}_F$. 
The unitary $W$ can be implemented in time $\Ord{\text{polylog}(mn)}$ by using the suitable data structure. In summary, the total runtime for quantum singular value estimation with additive error $\delta \norm{A}_F$ is in $\Ord{\text{polylog}(mn)/\delta}$.

\section{The QDLS algorithm}
The application of the QSVE algorithm is particularly interesting for solving linear systems with a dense matrix since the QSVE runtime depends on the Frobenius norm $\norm{A}_{F}$, instead of the sparsity $s(A)$.
We now show that applying the QSVE algorithm leads to an efficient quantum algorithm for solving dense linear systems. Recall the fact that given a symmetric matrix $A \in \mathbb{R}^{n\times n}$ with spectral decomposition 
\begin{align}
A=\sum_{i \in [n]}\lambda_i\mathbf{s}_i\mathbf{s}_i^{\dagger},
\end{align}
the singular value decomposition of $A$ has the form of 
\begin{align}
A=\sum_{i \in [n]}|\lambda_i| \mathbf{u}_i\mathbf{v}_i^{\dagger},
\end{align}
 where the left and right singular vectors $\mathbf{u}_i$ and $\mathbf{v}_i$ are equivalent to the eigenvectors $\mathbf{s}_i$ up to an ambiguity of sign, such that $\mathbf{s}_i= \mathbf{u}_i=\pm\mathbf{v}_i$.     
Applying the QSVE algorithm to a positive definite matrix immediately yields the solution of the linear system as the estimated singular values and eigenvalues are equal, $\overline{\sigma_i}=|\overline{\lambda_i}|$. For a general symmetric matrix, QSVE leads to the estimation of $|\lambda_{i}|$ but not its sign, $sign(\lambda_{i})$. Therefore in order to solve general linear systems, we need a to recover $sign(\lambda_{i})$.

In Section \ref{sub: QDLS}, we will present a linear system algorithm with a simple technique to recovers the signs using the QSVE procedure as an oracle incurring only a constant multiplicative overhead. We assume that $A$ has been rescaled so that its eigenvalues lie within the interval $[-1, -1/\kappa] \cup [ 1/\kappa, 1]$, where $\kappa$ denotes the condition number of $A$. This is the same assumption made in \cite{Harrow2009a} and also indicated in the review \cite{H14}. We will show in Section \ref{sub: QDLS analysis} the algorithm runs in $\tilde{\mathcal{O}}(\sqrt{n})$ time for arbitrary matrices with a bounded spectral norm, and hence has no explicit dependence on the sparsity. 

\subsection{Procedures}
\label{sub: QDLS}
We are now in the position to outline the procedures of the quantum dense linear system algorithm.
  \begin{enumerate}
    \item Prepare a quantum state $\ket{\mathbf{b}}$ which encodes the vector $\mathbf{b}\in \mathbb{R}^n$ as 
    \begin{align}
                \ket{\mathbf{b}}=(\mathbf{b}^T\mathbf{b})^{-1/2}\sum_i \beta_{i} \ket{\mathbf{v}_i},
    \end{align}

where $\ket{\mathbf{v}_i}$ stores the $i^{th}$ singular vectors of $A$.
\item Perform the QSVE algorithm for matrices $A$ and for $A^\prime = A+ \mu I$ with $\delta < 1/2\kappa $ 
    and $\mu=1/\kappa$ to obtain
    \begin{align}
        (\mathbf{b}^T\mathbf{b})^{-1/2}\sum_i \beta_{i}  \ket{\mathbf{v}_i}_\mathrm{A} \ket{|\overline{\lambda}_i|}_\mathrm{B} \ket{|\overline{\lambda}_i + \mu|}_\mathrm{C}.
    \end{align}

\item Append an ancillary register and set its value to $1$ if the value in register $\mathrm{B}$ is greater than that in register $\mathrm{C}$ and apply a conditional
    phase gate, which leads to
    \begin{align}
        (\mathbf{b}^T\mathbf{b})^{-1/2}\sum_i (-1)^{f_i} \beta_{i}  \ket{\mathbf{v}_i}_\mathrm{A} \ket{|\overline{\lambda}_i|}_\mathrm{B} \ket{|\overline{\lambda}_i + \mu|}_\mathrm{C} \ket{f_i}_\mathrm{D}.
    \end{align}
\item Append another ancillary register and apply a rotation conditioned on register $\mathrm{B}$ with $\gamma=\mathcal{O}(1/\kappa)$.  
        Then uncompute the registers $\mathrm{B}$, $\mathrm{C}$ and $\mathrm{D}$ to obtain
    \begin{align}
    (\mathbf{b}^T\mathbf{b})^{-1/2}\sum_i(-1)^{f_i} \beta_{i} \ket{\mathbf{v_i}}\left( \frac{\gamma}{|\overline{\lambda_i}|}\ket{0} + \sqrt {1 - \left(\frac{\gamma}{|\overline{\lambda_i}|}\right)^{2}}  \ket{1} \right).
    \end{align}
A high-level circuit diagram that describes the QDLS algorithm until this step is shown in Figure \ref{fig: QDLSA}.      
\item Measure the last register in the computational basis. Conditioned on obtaining $\ket{0}$, the system is in the desired state,
\begin{align}
    (\mathbf{b}^T\mathbf{b})^{-1/2}\sum_i(-1)^{f_i}\frac{\beta_{i}}{|\overline{\lambda_i}|}\ket{\mathbf{v_i}}.
\end{align}
\end{enumerate}

\begin{figure}[H] 
\centering
\includegraphics[width=\textwidth] {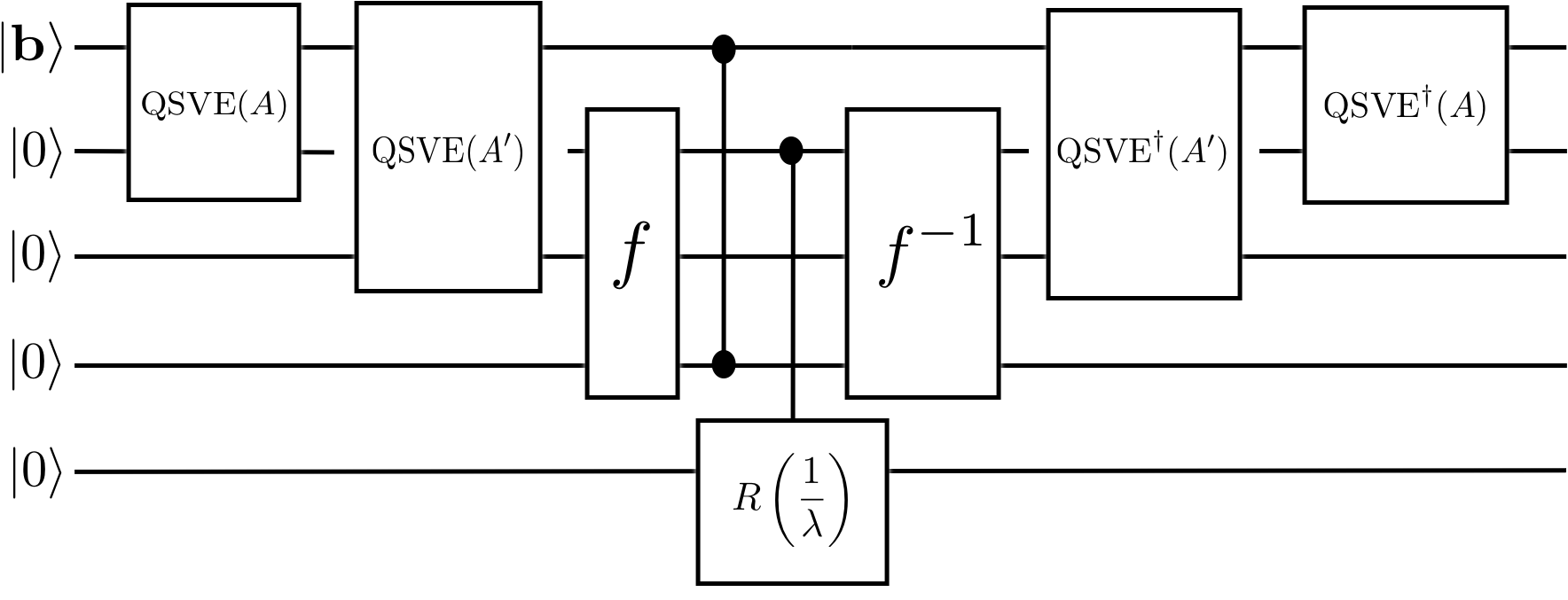} 
\caption{The circuit diagram for the QDLS algorithm until Step. 4. The QSVE subroutine is applied for $A$ and $A^\prime = A+ \mu I$, obtaining the respective singular values in superposition stored in two quantum registers. Then the function $f$ compares the value of the second and third registers, and stores the outcome in the fourth register. A phase gate is then applied conditioned on the value of the fourth register, which successfully recovers the sign of the desired eigenvalues for inversion. The subsequent controlled rotation and uncomputation proceed similarly as the original linear system algorithm of Ref.\cite{Harrow2009a}.}
\label{fig: QDLSA} 
\end{figure} 

\subsection{Analysis}
\label{sub: QDLS analysis}

\paragraph{Sign recovery}We first argue that above QDLS algorithm correctly recovers the sign of the $\lambda_{i}$. The algorithm compares the estimates obtained by performing QSVE for $A$ and for $A^\prime = A+\mu I_{n}$, where $\mu$ is a positive scalar chosen to be the inverse of the condition number, $\kappa$. The matrix $A^\prime$ has eigenvalues $\lambda_i + \mu$ while the corresponding eigenvectors are the same as those of $A$. Not that if $\lambda_{i} \geq 0$, we have 
\begin{align}
|\lambda_i + \mu| = |\lambda_i| + 
    |\mu| \geq |\lambda_{i}|.    
\end{align}
However if $\lambda_{i} < -\mu/2 $, then we instead have
\begin{align}
|\lambda_{i} + \mu| < |\lambda_{i}|.     
\end{align}
Thus if we had perfect estimation for the singular values, then choosing $\mu < 2/\kappa$ would recover the sign correctly as the eigenvalues of $A$ lie in the interval $[-1, -1/\kappa] \cup [ 1/\kappa, 1]$. In an imperfect setting of QSVE, with the choice of $\mu=1/\kappa$ and $\delta < 1/2\kappa$ the signs are correctly recovered for all $\lambda_{i}$. 
       
\paragraph{Runtime and errors} As we will show later in this section, the additive error $\epsilon$ for the linear system solver is related to the QSVE precision parameter $\delta$ via $\delta=\Ord{\frac{\epsilon}{\kappa\|A\|_F}}$. 
  The success probability of the final post-selection step requires on average $\Ord{\kappa^2}$ repetitions of the coherent computation. However, applying amplitude amplification \cite{Brassard2002,Ambainis2010} can reduce this cost to $\Ord{\kappa}$.
    Hence an upper-bound of the runtime of our algorithm is given by $\Ord{\kappa^2 \cdot \text{polylog}(n) \norm{A}_F /\epsilon}$.
The error dependence on the Frobenius norm suggests that our algorithm is most accurate when the $\norm{A}_F$ is bounded by some constant, in which case the algorithm returns the output state with a constant $\epsilon$-error in polylogarithmic time even if the matrix is non-sparse. More generally, as in the HHL algorithm \cite{Harrow2009a}, we can assume that the spectral norm $\norm{A}_*$ is bounded by a constant, although the Frobenius norm may scale with the dimensionality of the matrix. In such cases we have $\norm{A}_F=\Ord{\sqrt{n}}$, and the QDLS algorithm runs in time $\Ord{\kappa^2\sqrt{n} \cdot\text{polylog}(n)/\epsilon}$ and returns the output with a constant $\epsilon$ additive error. Furthermore, since $\norm{A}_F \leq \sqrt{r} \norm{A}_*$, where $r$ denotes the rank of $A$, the runtime may also be written as $\Ord{\kappa^2\sqrt{r} \cdot\text{polylog}(n)/\epsilon}$. 
Hence an exponentially more advantageous runtime is achievable if the rank of $A$ is polylogarithmic in $n$.

\paragraph{Error bound details}
We now establish error bounds on the final state. In a similar fashion to the analysis of Ref. \cite{Harrow2009a},
we use the filter functions $\mathrm{f}$ and $\mathrm{g}$~\cite{Hansen1998}, which allow us to invert only the well-conditioned part of the matrix, that is, the space which is spanned by the eigenvectors with eigenvalues, $\lambda_i\geq 1/\kappa$. We define the functions, 
\begin{align}
\label{eq:HHLfiltfunctions}
\mathrm{f}(\lambda) =
\begin{cases}
\frac{1}{\kappa\gamma\lambda}, & |\lambda| \geq 1/\kappa;\\
\eta_{1}(\lambda), & \frac{1}{\kappa}>|\lambda|>\frac{1}{2\kappa};\\
0, & \frac{1}{2\kappa}\geq|\lambda|;
\end{cases}
\end{align}
and
\begin{align}
\mathrm{g}(\lambda) =
\begin{cases}
0, & |\lambda| \geq 1/\kappa;\\
\eta_{2}(\lambda), & \frac{1}{\kappa}>|\lambda|>\frac{1}{2\kappa};\\
\frac{1}{2}, & \frac{1}{2\kappa}>|\lambda|,
\end{cases}
\end{align}
where $\gamma=\Ord{1/\kappa}$ is the parameter chosen in Step 4 of the algorithm in Section \ref{sub: QDLS} to ensure that the probability amplitudes are bounded by unity after the controlled rotation by any eigenvalues. The functions $\eta_{1}$ and $\eta_{2}$ are interpolating functions chosen such that $\mathrm{f}^{2}(\lambda) + \mathrm{g}^{2}(\lambda) \leq 1$ for all $\lambda \in \mathbb{R}$. A possible (non-unique) choice of $\eta_{1}$ and $\eta_{2}$ can be
\begin{align}
\eta_{1}(\lambda)=\frac{1}{2}\sin(\frac{\pi}{2}\cdot\frac{\lambda-\frac{1}{\kappa'}}{\frac{1}{\kappa}-\frac{1}{\kappa'}}),
\end{align}
and
\begin{align}
\eta_{2}(\lambda)=\frac{1}{2}\cos(\frac{\pi}{2}\cdot\frac{\lambda-\frac{1}{\kappa'}}{\frac{1}{\kappa}-\frac{1}{\kappa'}}),
\end{align}
Note that the presented QDLS algorithm corresponds to the choice $\mathrm{g}(\lambda)=0$. We then define the map
\begin{align}
     \ket{\mathrm{h}(\lambda)} := \sqrt{1-\mathrm{f}(\lambda)^2-\mathrm{g}(\lambda)^2} \ket{\mathrm{NO}} + \mathrm{f}(\lambda) \ket{\mathrm{WC}} + \mathrm{g}(\lambda) \ket{\mathrm{IC}},    
\end{align}
where $\ket{\mathrm{NO}}$ indicates that no matrix inversion has taken place, $\ket{\mathrm{IC}}$ means that 
part of $\ket{\mathbf{b}}$ is in the ill-conditioned subspace of $A$, and $\ket{\mathrm{WC}}$ means that the matrix inversion has taken place and is
in the well-conditioned subspace of $A$. 
This allows us to invert only the well-conditioned part of the matrix while it flags the ill-conditioned ones and interpolates between those two behaviours when $1/(2\kappa) < |\lambda| < 1/\kappa$. We therefore only invert eigenvalues which are larger than $1/(2\kappa)$. This subtlety is the motivation behind our choice of $\mu$ in the algorithm.

Let $\mathcal{Q}$ denote the error-free operation corresponding to the QSVE subroutine followed by the controlled rotation without post-selection, such that
\begin{align}
    \ket{\psi} := \mathcal{Q} \ket{\mathbf b}\ket{0} \rightarrow \sum_i \beta_i \ket{\mathbf  v_i} \ket{\mathrm{h}(\lambda_i)}. \label{eq: phi}
\end{align}
 $\overline{\mathcal{Q}}$ in contrast describes the same procedure but the phase estimation step is erroneous, such that
 \begin{align}
     \ket{\overline{\psi}} := \overline{\mathcal{Q}} \ket{\mathbf  b}\ket{0} \rightarrow \sum_i \beta_i \ket{\mathbf v_i} \ket{\mathrm{h}(\overline{\lambda}_i)}. \label{eq: phi bar}
 \end{align}
In order to bound the error, $\norm{\overline{\mathcal{Q}} - \mathcal{Q}}$, we choose a general state $\ket{\mathbf b}$, and find the equivalent error bound $\norm{ \mathcal{Q} \ket{\mathbf b} - \overline{\mathcal{Q}} \ket{\mathbf b}} := \norm{ \ket{\overline{\psi}} - \ket{\psi} }$. 
We need to make use of the fact that the map $\lambda \rightarrow \ket{h(\lambda)}$ is $\Ord{\kappa}$-Lipschitz \cite{Harrow2009a}. That is to say $\forall \lambda_i \neq \lambda_j$ for some $c \leq \pi/2 = \Ord{1}$, we have
    \begin{align}
        \norm{ \ket{\mathrm{h}(\lambda_i)} - \ket{\mathrm{h}(\lambda_j)} } 
        \leq c \kappa | \lambda_i - \lambda_j|.\label{Lipschitz}
    \end{align}
Note that it suffices 
to lower-bound $\text{Re}(\langle \overline{\psi} \vert \psi \rangle)$ since we have 
\begin{align}
\norm{ \ket{\overline{\psi}} - \ket{\psi} } = \sqrt{2 \left(1-\text{Re}(\langle \overline{\psi} \vert \psi \rangle) \right)},    
\end{align}
Now we take the inner product between Eq. \ref{eq: phi} and Eq. \ref{eq: phi bar} to obtain 
\begin{align}
      \text{Re}(\langle \overline{\psi} \vert \psi \rangle) = \sum\limits_i |\beta_i |^2  \text{Re}(\langle\mathrm{h}(\overline{\lambda}_i)\vert\mathrm{h}(\lambda_i)\rangle). 
\end{align}
Next we use the error bounds of the QSVE subroutine for the eigenvalue distance, i.e. $| \lambda_i - \overline{\lambda}_i | \leq \delta \norm{A}_F$, which leads to 
\begin{align}
\text{Re}(\langle \overline{\psi} \vert \psi \rangle) \geq \sum\limits_i |\beta_i |^2 \left( 1 - \frac{c^2 \kappa^2 \delta^2 \norm{A}_F^2}{2} \right).    
\end{align}
This is a consequence of the finite accuracy phase estimation, and the $\Ord{\kappa}$-Lipschitz property of Eq. \ref{Lipschitz}. Since $ 0 \leq \text{Re}(\langle \overline{\psi} \vert \psi \rangle) \leq 1$, it follows that 
\begin{align}
    1 - \text{Re}(\langle \overline{\psi} \vert \psi \rangle) \leq 
    \sum\limits_i |\beta_i |^2 \left(\frac{c^2 \kappa^2 \delta^2 \norm{A}_F^2}{2} \right). 
\end{align}
Finally we use the fact that $\sum_i |\beta_i|^2 = 1$, the distance can be bounded as 
\begin{align}
    \norm{ \ket{\overline{\psi}} - \ket{\psi} } \leq \Ord{\kappa \delta \norm{A}_F}.
\end{align}
If this additive error on the output state is needed to be on the order of $\epsilon$, we need to take the phase estimation accuracy to be $\delta = \Ord{\frac{\epsilon}{\kappa \norm{A}_F}}$. This results in a runtime that scales as $\Ord{\kappa \norm{A}_F \cdot \text{polylog}(n)/\epsilon}$.
In order to successfully perform the final post-selection step, we need to repeat the algorithm on average $\kappa^2$ times. This additional multiplicative factor of $\kappa^2$ can be reduced to $\kappa$ using amplitude amplification \cite{Brassard2002,Ambainis2010}.
Putting everything together, we have an overall runtime that scales as $\Ord{\kappa^2 \norm{A}_F \cdot \text{polylog}(n)/\epsilon}$.

\section{Summary and discussions}

We have shown in this chapter that given $\mathbf{b} \in \mathbb{R}^{n}$ and a Hermitian matrix $A \in \mathbb{R}^{n \times n}$ with spectral decomposition 
     $A = \sum_i \lambda_i \mathbf{s}_i \mathbf{s}_i^{\dagger}$ stored in a suitable data structure, the QDLS algorithm returns the state 
      $\ket{\overline{A^{-1} \mathbf b}}$ 
     such that $\norm{ \ket{ \overline{ A^{-1} \mathbf b } } - \ket{ A^{-1} \mathbf b} } \leq \epsilon $. The runtime of the algorithm scales as $\Ord{\kappa^2 \cdot \text{polylog}(n) \cdot \norm{A}_F/\epsilon}$, where $\kappa$ is the condition number and $\norm{A}_F$ is the Frobenius norm of $A$.

\paragraph{Bounded spectral norm} Assuming the spectral norm, $\norm{A}_*$, is bounded by a constant or grows no faster than polylogorithmically in $n$, the overall runtime scaling reduces to $\Ord{\kappa^2 \sqrt{n} \cdot \text{polylog}(n)/\epsilon}$, since we have 
\begin{align}
    \norm{A}_F&=\sqrt{\sum_i^n\lambda_i^2}
    \le\sqrt{n|\lambda|_{max}^2}
    \le\sqrt{n}\norm{A}_*.
\end{align}
This amounts to a polynomial speed-up over the runtime scaling achieved in Ref. \cite{Harrow2009a} when applied to dense matrices with black-box Hamiltonian simulation. The bounded spectral norm is a realistic assumption if classical normalisation preprocessing can be applied so that the maximum absolute values of $\lambda_i$ is bounded. As the same bounded spectral norm assumption is also required in the error analysis of Ref. \cite{Harrow2009a}, the algorithm presented in this chapter represents a new state-of-the-art for solving dense linear systems on a quantum computer.

\paragraph{Low-rank} In special cases, the matrix $A$ has a low-rank structure, such that the number of non-zero eigenvalues grows no faster than polylogarithmically in $n$. In such scenarios, the runtime of the presented QDLS algorithm scales as $\Ord{\kappa^2 \cdot \text{polylog}(n)/\epsilon}$, which amounts to an exponential improvement over previously existing algorithms for solving dense linear system problems.

\paragraph{Distinction in memory models} Note that the memory model described in Section \ref{sub: Memory} is distinct from the black-box model. This QSVE-based linear system algorithm achieves a $\tilde{\mathcal{O}}(\sqrt{n})$-scaling for dense matrices in this augmented quantum memory model, and it is an interesting question whether a similar scaling is achievable in the black-box matrix element access model. 

\paragraph{Non-invertible matrix} The SVE-based algorithm also applies to more general scenarios where the matrix $A$ is not invertible. Then the algorithm will instead compute the Moore-Penrose pseudo-inverse. The runtime in these cases will be bounded by $1/|\lambda_{min}|$ instead of $\kappa$, where $\lambda_{min}$ is the non-zero eigenvalue for $A$ with the smallest absolute value.     

\paragraph{Outlooks} 
From a practical point of view, the constant runtime overhead for a given set of elementary fault-tolerant quantum gates is an important consideration. Scherer \textit{et al.}~\cite{Scherer2017} showed that implementations of the HHL algorithm \cite{Harrow2009a} potentially suffer from a large constant overhead with currently available technology, which may hinder the prospects of near-term applications. Whether or not the SVE-based QDLS algorithm has considerably smaller constant overhead, due to the absence of Hamiltonian simulation, remains an interesting open question.
 
\part{Gaussian processes with quantum algorithms} 
\label{Part2}
\chapter{Gaussian processes in classical machine learning} % Main chapter title

\label{GP} % For referencing the chapter elsewhere, use \ref{Chapter1} 
In the previous chapters, we have introduced the basics of quantum computation and have seen some examples of quantum algorithms. Particularly, in Section \ref{HHL} of Chapter \ref{IntroQA} and in Chapter \ref{QDLSA}, we have seen a quantum variant of the linear systems problem can be efficiently solved by a quantum computer with the access to suitable memory models.
In this part of the thesis, we apply some of these quantum ideas to a powerful model in supervised machine learning, Gaussian processes (GP). To start with, in this chapter, we will follow the notation of Ref. \cite{Rasmussen2004} and introduce the basics of GPs and review the typical classical implementations of inference with GP models as well as GP model selection. In Chapters \ref{QGP} and \ref{QGPT}, we will follow closely Ref. \cite{zhao2015quantum} and \cite{zhao2018quantum} and present quantum algorithms for computing GP regression models and training GP regression models respectively. In Chapter \ref{Chapter: QBDL}, we will make use of the quantum GP algorithms to present a quantum approach to Bayesian deep learning.

\section{Introduction}

Supervised machine learning amounts to inferring a function from labelled training data \cite{mohri2012foundations}. The GPs represent an approach to supervised learning that models the latent functions associated with the outputs of an inference problem as an infinite-dimensional generalisation of a Gaussian distribution \cite{Rasmussen2004}. 
The GP approach offers numerous desirable properties such as being capable of capturing a wide range of behaviours with only a simple set of parameters, the ability to easily express uncertainty, and admitting a natural Bayesian interpretation.
As such GP models have been widely used across a broad spectrum of applications, ranging from robotics, data mining, geophysics (where GP approaches are also known as kriging), climate modelling, and predicting price behaviour of commodities in financial markets. 

Although GP models are becoming increasingly popular in the classical community of machine learning, they are known to be computationally expensive, which hinders their widespread adoption. A practical implementation of Gaussian process regression (GPR) model with $n$ training points typically requires $\Omega(n^3)$ basic operations \cite{Rasmussen2004}. This has lead to significant amount of effort aimed at reducing the computational cost of working with such models, with investigations into low-rank approximations of GPs \cite{quinonero2005unifying}, variational approximations \cite{hensman2013gaussian} and Bayesian model combination for distributed GPs \cite{deisenroth2015distributed}. A thorough discussion of these approximation methods is beyond the scope of this thesis. Instead, we will argue that quantum computation offers efficient exact implementation of GPR even when the size of the input data is classically infeasible. 

The contents of this chapter are organised as follows: In Section \ref{sec: GP prelim}, we will introduce some preliminary definitions and concepts necessary for describing GPs as regression models. In Section \ref{sec: GPR}, we will present the basics of GPR as well as its typical classical implementation. In Section \ref{sec: TGP} we will review the classical GP model selection procedures, with an emphasis on the figure of merit for a given model's performance. In Section \ref{sec: gpdl}, we will discuss the connection between GPs and deep neural networks, mainly following the results of \cite{lee2017deep}.
 This chapter provides only a basic level introduction to GPs. Readers are referred to Ref. \cite{Rasmussen2004, neal1994priors, lee2017deep, gmatthews2018gaussian} for further details.

\subsection{Preliminaries}
\label{sec: GP prelim}

\paragraph{Multivariate Gaussian distributions}
If a vector of random variables $\mathbf{x}\in \mathbb{R}^k$ follows a multivariate Gaussian distribution with a mean vector, $\boldsymbol{\mu}$ and a covariance matrix, $\Sigma$, its probability density function is given by,
\begin{align}
    p(\mathbf{x})=\frac{1}{\sqrt{(2\pi)^k|\Sigma|}}\exp\left(-\frac{1}{2}(\mathbf{x}-\boldsymbol{\mu})^T\Sigma^{-1}(\mathbf{x}-\boldsymbol{\mu}) \right),
\end{align}
where $|\Sigma|$ denotes the determinant of the covariance matrix. We denote this distribution as $\mathbf{x}\sim\mathcal{N}(\boldsymbol{\mu},\Sigma)$.
\paragraph{Gaussian processes}
A Gaussian process (GP) is defined as a set of random variables, any finite subset of which follows a joint multivariate Gaussian distribution \cite{Rasmussen2004}. A GP model is entirely specified by a prior mean function, $\mu(\mathbf{x})=\mathbb{E}[{f}(\mathbf{x})]$, and a covariance function (kernel), $k(\mathbf{x},\mathbf{x}^\prime)=\mathbb{E}[({f}(\mathbf{x})-\mu(\mathbf{x}))({f}(\mathbf{x}^\prime)-\mu(\mathbf{x}^\prime))]$ of some underlying actual process $f(\mathbf{x})$. We write 
\begin{align}
    f(\mathbf{x})\sim\mathcal{GP}(\mu(\mathbf{x}),k(\mathbf{x},\mathbf{x}^\prime))
\end{align}
%\begin{align}
%m(\mathbf{x})=&\mathbb{E}[{f}(\mathbf{x})],\nonumber\\
%k(\mathbf{x},\mathbf{x}^\prime)=&\mathbb{E}[({f}(\mathbf{x})-m(\mathbf{x})({f}(\mathbf{x}^\prime)-m(\mathbf{x}^\prime))].    
%\end{align}
to denote a Gaussian process.
For simplicity, we will assume the prior mean to be zero without loss of generality.

\paragraph{Marginalisation property}
As a requirement for consistency, models for statistical inference need to satisfy the following marginalisation property: 
Given a set of random variables $S$ and a statistical model that specifies a probability distribution $P$, for any subsets $S^\prime \subset S$, the corresponding probability distribution is given by the marginal distribution for $S^\prime$ in $P$. Intuitively, this means that the distribution of a larger set of variables needs to be consistent with the distribution of its subsets. This requirement is automatically satisfied by the GP definition.

\section{Gaussian process regression}
\label{sec: GPR}

In this section, we introduce Gaussian processes as a regression model, following closely Chapter 2 of Ref. \cite{Rasmussen2004}. We will consider a supervised learning problem with a training dataset $\mathcal{T}$ with $n$ $d$-dimensional input points, $\{\mathbf{x}_i\}_{i=0}^{n-1}$, and their corresponding output points, $\{y_i\}_{i=0}^{n-1}$, such that $\mathcal{T}=\{\mathbf{x}_i,y_i\}^{n-1}_{i=0}$. The goals is to infer an underlying function $f(\mathbf{x})$ from the observed input-output pairs subject to Gaussian random noise, 
\begin{align}
 y = {f}(\mathbf{x}) + \varepsilon_{\text{noise}},
\end{align}
where $\varepsilon_{\text{noise}}\sim\mathcal{N}(0, \sigma_n^2)$ is independent, identically distributed (i.i.d.) noise that follows a Gaussian distribution with $0$ mean and $\sigma_n^2$ variance. Since the underlying $f(\mathbf{x})$ is not directly observed, it is known as the ``latent function''. When given a new input ``test point'', $\mathbf{x}_*$, our model aims at generating a predictive distribution for ${f}_*={f}(\mathbf{x}_*)$. The Gaussian process regression approach models the latent function $\{{f}(\mathbf{x_i})\}_{i=0}^{n-1}$ as a joint multivariate Gaussian distribution \cite{Rasmussen2004}.

\subsection{Linear model with Gaussian noise} We start by considering the standard model of linear regression so that the underlying function $f(\mathbf{x})$ of an input vector $\mathbf{x}$ is given by
\begin{align}
    f(\mathbf{x})=\mathbf{x}^T\mathbf{w},
\end{align}
where the weight vector $\mathbf{w}$ contains the parameters of the linear model. Under our Gaussian noise assumption, the actual observed values $y$ are given by $y=\mathbf{x}^T\mathbf{w}+\varepsilon_{\text{noise}}$.

\paragraph{Likelihood}
The likelihood is defined by the probability density of the observed values conditioned on the given parameters. The independence assumption allows us to factor over the points in the whole training set, and write the likelihood as
\begin{align}
p(\mathbf{y}\vert X, \mathbf{w}) =& \prod_{i=1}^n p(y_i\vert \mathbf{x}_i,\mathbf{w})\nonumber\\
                                 =& \prod_{i=1}^n \frac{1}{\sigma_n\sqrt{2\pi}} \exp\left(-\frac{(y_i-\mathbf{x}^T\mathbf{w})^2
                                 )}{2\sigma_n^2}\right)\nonumber\\
                                 =& \left(\frac{1}{\sigma_n\sqrt{2\pi}}\right)^n \exp\left(-\frac{|\mathbf{y}-X^T\mathbf{w}|^2
                                 }{2\sigma_n^2}\right)\nonumber\\
                                 =& \mathcal{N}\left(X^T\mathbf{w},\sigma_n^2 I\right),
\end{align}
where $X\in \mathbb{R}^{d\times n}$ denotes the matrix containing the entire set of input data points, and $\mathbf{y}\in \mathbb{R}^n$ denotes the vector containing the entire set of output data points.
We have shown the likelihood of a linear model with Gaussian noise follows a multivariate Gaussian distribution with mean vector $X^T\mathbf{w}$ and covariance matrix $\sigma_n^2 I$. 
\paragraph{Bayesian inference}
In a Bayesian approach, we assume a prior distribution over $\mathbf{w}$ which expresses a belief about the parameters before observing the outputs. We choose a Gaussian prior with $\mathbf{0}$ mean and covariance matrix $\Sigma_p$: $\mathbf{w}\sim\mathcal{N}\left(\mathbf{0},\Sigma_p\right)$. 
Bayesian inference of the linear model follows the posterior distribution over $\mathbf{w}$, which can be evaluated using the Bayes' theorem,
\begin{align}
    p(\mathbf{w}|\mathbf{y},X)=\frac{p(\mathbf{y}|X,\mathbf{w})p(\mathbf{w})}{p(\mathbf{y}|X)}.
\end{align}
The factor $p(\mathbf{y}|X)$, known as the marginal likelihood, is independent of $\mathbf{w}$ and acts as a normalisation constant. We have posterior distribution, $p(\mathbf{w}|\mathbf{y},X)$ proportional to
\begin{align}
p(\mathbf{y}|X,\mathbf{w})p(\mathbf{w})
                                       = & \exp\left(-\frac{1}{2\sigma_n^2}(\mathbf{y}-X^T\mathbf{w})^T(\mathbf{y}-X^T\mathbf{w})\right)\exp\left(-\frac{1}{2}\mathbf{w}^T\Sigma_p^{-1}\mathbf{w}\right)\nonumber\\
                                       = & \exp\left(-\frac{1}{2}(\mathbf{w}-\bar{\mathbf{w}})^T\left(\frac{X^TX}{\sigma_n^2}+\Sigma_p^{-1} \right) (\mathbf{w}-\bar{\mathbf{w}})\right),
\end{align} 
where we have used the shorthand $\bar{\mathbf{w}}=\sigma_n^{-2}C^{-1}X\mathbf{y}$ with $C=\frac{XX^T}{\sigma_n^2}+\Sigma_p^{-1}$. Thus the posterior follows a Gaussian distribution, $p(\mathbf{w}|\mathbf{y},X)\sim\mathcal{N}\left(\sigma_n^{-2}C^{-1}X\mathbf{y},C^{-1}\right)$.
\paragraph{Predictive distribution}
Given a new input test point $\mathbf{x}_*$, the predictive distribution for $f_*=f(\mathbf{x}_*)$ can then by evaluated with following Bayesian integral,
\begin{align}
    p(f_*|\mathbf{x}_*, X, \mathbf{y}) =& \int p(f_*|\mathbf{x}_*, \mathbf{w})p(\mathbf{w}|X, \mathbf{y})d\mathbf{w} \nonumber \\
                                       =& \mathcal{N}\left(\frac{1}{\sigma_n^2}\mathbf{x}^T_*A^{-1}X\mathbf{y}, \mathbf{x}^T_*A^{-1}\mathbf{x}_*\right).
\end{align}
Therefore the predictive distribution $f_*$ conditioned on the training set with $X$ and $\mathbf{y}$ is a Gaussian with mean $\frac{1}{\sigma_n^2}\mathbf{x}^T_*A^{-1}X\mathbf{y}$ and variance $\mathbf{x}^T_*A^{-1}\mathbf{x}_*$.
\subsection{Feature space projection}
 The standard linear model described above often suffers from limited expressiveness and fails to capture interesting higher order features. A simple enhancement known as feature space projection overcomes this issue. The idea is to have the inputs mapped into certain chosen space with higher dimensions, which is specified by a set of basis functions. That is, we make use of a function $\phi(\mathbf{x})$ to project an input vector $\mathbf{x}\in \mathbb{R}^D$ into a feature space with dimension $N$. Instead of being applied directly on the inputs, the linear model is instead applied in this projected feature space. For instance, for a scalar input $x$, possible choices of the feature space basis functions include $\phi(x)=(1, x, x^2, ...)$, $\phi(x)=(1, \sin(x), \cos(x), ...)$, etc. The problem of choosing the appropriate basis functions is related to the model selection of GP, which we will address in Section \ref{sec: TGP}.
\paragraph{Prediction}
After the feature space projection, the regression model is augmented into $f(\mathbf{x})=\phi(\mathbf{x})^T\mathbf{w}$, with $\mathbf{w}\in\mathbb{R}^N$. Following an analogous Bayesian analysis as before, we arrive at the following formula for the predictive distribution, 
\begin{align}
p(f_*|\mathbf{x}_*,X,\mathbf{y})=\mathcal{N}\left(\frac{1}{\sigma_n^2}\phi(\mathbf{x}_*)^TC^{-1}\phi(X)\mathbf{y}, \phi(\mathbf{x}_*)^TC^{-1}\phi(\mathbf{x}_*)\right),    
\end{align}
with $C=\sigma_n^{-2}\phi(X)\phi(X)^T+\Sigma_p^{-1}.$ Alternatively, this can be written as 
\begin{align}
\mathcal{N}\left(\phi_*^T\Sigma_p\Phi(K+\sigma_n^2I)^{-1}\mathbf{y}, \phi_*^T\Sigma_p\phi_*-\phi_*^T\Sigma_p \Phi(K+\sigma_n^2I)^{-1}\Phi^T\Sigma_p\phi_*\right), \label{}
\end{align}
with the shorthand notations, $\Phi=\phi(X)$, $\phi_*=\phi(\mathbf{x}_*)$ and $K=\Phi^T\Sigma_p\Phi$.
\paragraph{Kernel trick}
We now replace the inner products in the feature space by functions
in the input space by defining the covariance function, $k(\mathbf{x},\mathbf{x}^\prime)=\phi(\mathbf{x})^T\Sigma_p\phi(\mathbf{x}^\prime)$, and the associated vector $\mathbf{k}=\Phi^{T}\Sigma_p \phi_*$. This leads to
\begin{align}
p(f_*|\mathbf{x}_*,X,\mathbf{y})=\mathcal{N}\left(\mathbf{k}^T(K+\sigma_n^2 I)^{-1}\mathbf{y}, k(\mathbf{x}_*,\mathbf{x}_*)-\mathbf{k}^T(K+\sigma_n^2 I)^{-1}\mathbf{k} \right)    .
\end{align}
The predictive distribution of $f_*$ is therefore a Gaussian distribution specified by $p({f}_*|\mathbf{x}_*,\mathcal{T})\sim\mathcal{N}(\bar{{f}_*},\mathbb{V}[{f}_*])$, where
\begin{align}
 \bar{{f}_*}&=\textbf{k}_*^T(K+\sigma_n^2{I})^{-1}\textbf{y} \label{eq:mean}\\
 \mathbb{V}[{f}_*]&=k\left(\textbf{x}_*,\textbf{x}_*\right)-\textbf{k}_*^T(K+\sigma_n^2{I})^{-1}\textbf{k}_*.\label{eq:variance} 
\end{align}
Hence the mean predictor Eq. \ref{eq:mean} and the variance Eq. \ref{eq:variance} are the central quantities of interest and the main goals of computation in Gaussian process regression.

\subsection{Classical computation and complexity}
The typical classical implementation of GPR is based on computing the Cholesky decomposition of $(K+\sigma_n^2{I})$. This amounts to finding the Cholesky factor, the lower-triangular matrix $L$ that satisfies $(K+\sigma_n^2{I})=LL^T$. Computing the Cholesky factor has a cost proportional to $n^3$, and it is numerically stable. The mean predictor can be expressed as $\bar{{f}_*}=\textbf{k}_*^T\boldsymbol{\alpha}$ by defining $\boldsymbol{\alpha}=(K+\sigma_n^2{I})^{-1}\textbf{y}$. The vector $\boldsymbol{\alpha}$ is then obtained by solving $LL^T\boldsymbol{\alpha}=\textbf{y}$.
Let $\textbf{y}^\prime = L\backslash \textbf{y}$ denote the solution to the 
triangular linear system $L\textbf{y}^{\prime}=\textbf{y}$. The vector $\boldsymbol{\alpha}$ can then be rewritten as $\boldsymbol{\alpha}=L^T\backslash L\backslash\textbf{y}$, hence computing $\boldsymbol{\alpha}$ simply amounts to solving two triangular systems. This has a runtime which scales as $\mathcal{O}(n^2)$.
Similarly, the variance $\mathbb{V}[{f}_*]$ can be expressed in terms of the Cholesky factor as $\mathbb{V}[{f}_*]=k(\mathbf{x}_*,\mathbf{x}_*)-(L\backslash\textbf{k}_*)^T(L\backslash\textbf{k}_*)$. Hence it also has a 
$\mathcal{O}(n^2)$ runtime. 
Thus the overall runtime of classically computing the mean predictor and the associated variance for a GP model scales as $\mathcal{O}(n^3)$.  When dealing with large-scale problems with greater than $10^3$ input points, exact inference with GPR is practically intractable.

\section{Training Gaussian processes}
\label{sec: TGP}

Model selection refers to the process of choosing the preferred variations of the model used in a supervised learning task, to achieve better predictive performance. 
In the context of GPs, this amounts to selecting a covariance function. In practice, a family of functions is usually considered. The parameters of the family of kernels are referred to as the kernel hyperparameters, and a range of optimisers are used in order to tune these hyperparameters based on the observed data. This model selection process is commonly known as the training of a Gaussian process.
Since training typically involves repeated evaluation of certain cost functions that characterise how well a given model is performing on the problem, it generally carries a runtime overhead that scales polynomially with the input size. In this section, 
we will follow the conventions of Chapter 5 of \cite{Rasmussen2004} and review the basics of training a GP model. Our emphasis is on introducing the log marginal likelihood ($\mathrm{LML}$) as a measure for the model's suitability and the classical computation of the $\mathrm{LML}$ function. 

\subsection{Log marginal likelihood}

The natural figure or merit that measures the performance of a supervised machine learning model is the marginal likelihood. In the context of GPR, it is the probability density of the observed output vector conditioned on the model's covariance matrix and the Gaussian noise variance, $p(\mathbf{y}|K+\sigma_n^2 I)$. As such, training the GP model amounts to optimising the conditioned probability of the observed data given the GP prior by choosing the covariance function and tuning the respective hyperparameters. 

For simplicity, we will keep the assumption that the model has a zero prior mean. Since the prior distribution of the observed vector of outputs $\mathbf{y}$ only differs from that of the latent function $f(\mathbf{x})$ by a Gaussian noise with variance $\sigma_n^2$, it is clear that we can write down the distribution as $\mathbf{y}\sim\mathcal{N}(0,K+\sigma_n^2{I})$. The logarithm of marginal likelihood $\mathrm{LML}=\log[p(\mathbf{y}|K+\sigma_n^2{I})]$ then follows straight-forwardly from the definition of Gaussian distribution, and we have
\begin{align}
\mathrm{LML} =-\frac{1}{2}\mathbf{y}^T(K+\sigma_n^2\mathit{I})^{-1}\mathbf{y}
-\frac{1}{2}\log\det[K+\sigma^2_n\mathit{I}] -\frac{n}{2}\log 2\pi.\label{eq:lml}
\end{align}
Since the logarithm is monotonic, maximising $\mathrm{LML}$ is equivalent to directly maximising $p(\mathbf{y}|K+\sigma_n^2{I})$. 

\paragraph{Interpretations}
Note that only the first term of Eq. \ref{eq:lml} involves the observed outputs $\mathbf{y}$. This is the contribution to $\mathrm{LML}$ that actually measures how the model is performing at fitting the training data. The second term depends only on the covariance matrix with the identity noise entry and can be interpreted as a penalty on the model's complexity. It generally disfavours models that happen to overfit the training set. The last term is normalisation constant ensuring the probability is bounded by one. It is easily computable. Hence only the first two terms in $\mathrm{LML}$ involve extensive matrix computations, and could potentially present bottlenecks in the efficiency of training.

\paragraph{Hyperparameter optimisation}
Training requires tuning the model's hyperparameters in order to maximise the $\mathrm{LML}$. A standard approach is based on gradient descent methods. This requires evaluating the variation of $\mathrm{LML}$ with respect to a change in each hyperparameter $\theta_j$. We will come back to this point in the context of applying quantum algorithms in Chapter \ref{QGPT}.

\subsection{Implementations and complexity}

The runtime of classically computing $\mathrm{LML}$ is dominated by the matrix inversion and determinant computation. In standard implementations based on Cholesky decompositions, the runtime scales with the input size as $\mathcal{O}(n^3)$. Because of the high computation cost in exact implementations, numerous compromising approaches have been proposed in the machine learning community. For examples, GPs are sometimes chosen to have covariance matrices with fixed ranks to make them computational trackable. In such scenarios, the cost of training can be reduced to $\mathcal{O}(nr^2)$, with $r$ denoting the rank of the covariance matrix in the model \cite{quinonero2005unifying}. This, however, significantly limits the range and the complexity of the functions accessible to the GP model, which could translate into sub-optimal performance. In low-dimensional cases, approaches such as hierarchical matrix factorisation \cite{minden2016fast} provide good options for implementing GP training, but they do not generalise well to problems with high dimensional datasets, which are often essential to consider in machine learning.

\subsection{Stochastic trace estimation}

As an alternative approach, stochastic trace estimation has gained popularity in recent years \cite{pace2004chebyshev, boutsidis2015randomized}. 
These methods make use of the fact that given a matrix $A\in \mathbb{R}^{n\times n}$, the logarithm of its determinant is equal to the trace of the $\log(A)$, as we have
\begin{align}
\Tr[\log(A)] = \sum\limits_{i=1}^{n}\log\lambda_i  = \log[\det(A)],
\label{eqn: logdet}
\end{align}
where $\{\lambda_i \}$ are the eigenvalues of $A$.

The matrix logarithm in Eq. \ref{eqn: logdet} can then approximated by truncating the Taylor series of the logarithmic function,
\begin{equation}
\centering
\log(A) \approx \sum_{a=1}^{d} \frac{(I-A)^a}{a}.
\end{equation}
Alternatively it can be approximated with a Chebyshev polynomial of a specified degree $d$. Using a trace estimation approach will still require matrix-vector multiplication when raising factors such as $A$, or $(I - A)$. However the advantage arises as the inner product form $\mathbf{z}^\dagger\log(A)\mathbf{z}$ can be evaluated in $\mathcal{O}(n^2)$ for some $\mathbf{z} \in \mathbb{R}^{n}$, where the vector $\mathbf{z}$ is a so called `probing vector'. These vectors can be chosen such in a number of ways \cite{avron2011randomized, MUB}, and they should satisfy $\mathbb{E}[\mathbf{z}^\dagger\log(A)\mathbf{z}] = \Tr(\log(A))$. Note that there are two major sources of error that can occur in such an approach, namely the errors due approximating $\log(A)$ with a finite expansion, and the errors directly related to the stochastic trace estimation. 
 We draw special interests to these stochastic trace estimation methods as the approach based on quantum algorithms to be presented in Chapter \ref{QGPT} can be understood as an extension of this class of trace estimation algorithms. As we will show, besides offering a reduction in computational time, the quantum algorithms also use an exact representation of $\log(K+\sigma_n^2I)$ to machine precision, which implies a significant suppression in approximation error.
 
\section{Connection with deep learning}
\label{sec: gpdl}
In this section, we briefly review the connection between Gaussian processes and deep neural network models based on the results of Ref. \cite{lee2017deep}, which provides a Bayesian approach to deep learning. We will leverage this connection to construct a quantum algorithm for Bayesian deep learning in Chapter \ref{Chapter: QBDL}.

\paragraph{Single hidden layer} The correspondence between Gaussian processes and a neural network with only a single hidden layer is well-known and discussed \cite{neal1994priors}. 
Let $\mathbf{z}(\mathbf{x})\in \mathbb{R}^{d_{out}}$ denote the output vector of a neural network with an input vector $\mathbf{x}\in \mathbb{R}^{d_{in}}$, with $z_i(\mathbf{x})$ denoting the $i^{th}$ component of the output layer. 
If we assume the weight and bias parameters of the neural network are independent and identically distributed (i.i.d.), each $z_i$ will be a sum of i.i.d terms.
As such, if the hidden layer has an infinite width, the Central Limit Theorem implies that $z_i$ follows a Gaussian distribution.
Now consider a set of $n$ input points, with corresponding outputs $\{z_i(\mathbf{x}_{1}), z_i(\mathbf{x}_{2}),\ldots z_i(\mathbf{x}_{n})\}$. Any finite collection of this output set will follow a joint multivariate Gaussian distribution. By definition, $z_i$ corresponds to a GP, $z_i\sim \mathcal{GP}(\mu,K)$, with a covariance matrix $K(\mathbf{x},\mathbf{x}^\prime)=\mathbb{E}[z_i(\mathbf{x})z_i(\mathbf{x}^\prime)]$.
Conventionally, the weight and bias parameters are chosen to have zero mean so that $\mu=0$. 
     
\paragraph{Deep networks} 
The above correspondence between the GP and the single hidden layer network is generalised to a deep neural network architecture in a recursive manner \cite{lee2017deep, gmatthews2018gaussian}.
Let $z_i^l$ denote the $i^{th}$ component of the output of the $l^{th}$ layer. By induction, it follows that $z_i^l\sim\mathcal{GP}(0,K^l)$.
The covariance matrix on the $l^{th}$ layer is given by $K^l(\mathbf{x}, \mathbf{x}^\prime)=\mathbb{E}[z_i^l(\mathbf{x})z_i^l(\mathbf{x}^\prime)]$.
In order to explicitly compute $K^l(\mathbf{x},\mathbf{x}^\prime)$, we need to specify the Gaussian variance on the weight and bias parameters, $\sigma_w^2$ and $\sigma_b^2$, as well as the non-linear activation functions, $\phi$ at each layer.
We have the following recursive formula for the $l^{th}$ layer covariance function, 
\begin{align}
K^l(\mathbf{x},\mathbf{x}^\prime)=\sigma_b^2+\sigma_w^2\mathbb{E}[\phi(z_i^{l-1}(\mathbf{x}))\phi(z_i^{l-1}(\mathbf{x}^\prime))],    
\end{align}
where $z_i^{l-1}\sim\mathcal{GP}(0,K^{l-1})$. The base case of the induction is given by the layer zero covariance function, 
\begin{align}
K^0(\mathbf{x},\mathbf{x}^\prime)=\sigma_b^2+\sigma_w^2\left(\frac{\mathbf{x}.\mathbf{x}^\prime}{d_{in}}\right).    
\end{align}
The Bayesian training of the neural network amounts to computing the mean and variance of the predictive distribution, while 
selecting the GP covariance function and tuning the hyper-parameters is related to choosing the neural network model class, depth, nonlinearity and parameter initialisations.     
Numerical experiments suggest that neural networks with infinite-width hidden layers trained with Gaussian priors outperform finite-width neural networks trained with stochastic gradient descent in many cases \cite{lee2017deep}.

\chapter{Quantum enhanced Gaussian processes} % Main chapter title

\label{QGP} % For referencing the chapter elsewhere, use \ref{Chapter1} 

%----------------------------------------------------------------------------------------
% Define some commands to keep the formatting separated from the content 

%----------------------------------------------------------------------------------------

Having reviewed the basics of Gaussian processes as classical regression models in the previous chapter, now we move on to present a quantum algorithms for enhancing the efficiency of computing GPR. 
We will start by describing a quantum state preparation procedure that encodes a classical input vector into a quantum state. Quantum state preparation would be important not only for the quantum Gaussian processes algorithm but more generally for all machine learning applications where one desires to use a quantum computer to analyse classical datasets. We will then describe the procedure for the quantum Gaussian process algorithm, followed by a discussion of practicality, and potential caveats in applying the proposed quantum algorithm. The material of this chapter is based on the work of Ref. \cite{zhao2018note} and \cite{zhao2015quantum}. 

\newpage
\section{State preparation}
\label{State_prep}
When applying quantum computation to problems with classical input, it is almost always necessary to prepare quantum states that encode the classical input vectors. For instance, in Chapter \ref{IntroQA} and \ref{QDLSA} we have seen that the quantum linear system problem requires an input quantum state that encodes the known vector in the corresponding classical linear system. 
\subsection{Quantum random access memory}
We are specifically concerned with the task of state preparation which involves creating
\begin{equation}
    \ket{\mathbf{v}} = \|\mathbf{v}\|_2^{-1}\sum_{i=1}^n v_i \ket{i},
\end{equation}
given some vector $\mathbf{v} \in \mathbb{R}^n$
stored in quantum random access memory (QRAM) \cite{GLM08a}. 
Such a memory structure allows the quantum computer to access data stored in multiple memory locations in a quantum superposition. That is, it allows for operations of the following type,
\begin{equation}
    \sum_{i,j} \alpha_{ij} \ket{i}\ket{j} \xrightarrow{\text{QRAM}} \sum_{i,j} \alpha_{ij} \ket{i}\ket{j + m_i},
\end{equation}
where $m_i$ denotes the $i$th entry stored in memory. 
As such, QRAM enables probabilistically producing $\ket{\mathbf{v}}$ for any $\mathbf{v}$ stored in memory. 

As a general procedure, to create $\ket{\mathbf{v}}$ for any $\mathbf{v}$, we start with an initial query state, $n^{-\frac{1}{2}}\sum_i \ket{i}\ket{0}$, and then use the QRAM to map the query state into $n^{-\frac{1}{2}} \sum_i \ket{i}\ket{v_i}$.
We then append a register with ancillary qubits prepared in state $\ket{0}$ and rotated conditioned on the value in second register, which leads to the state  
\begin{align}
 n^{-\frac{1}{2}} \sum_i \ket{i}\ket{v_i}\left(\sqrt{1-|v_i|^2} \ket{0} + v_i \ket{1}\right),     \label{QRAMCROT}
\end{align}
where we have assumed for simplicity that the vector $\mathbf{v}$ is normalised such that $|v_i|\leq 1$. Next, perform a second QRAM call to reverse the computation of $\ket{v_i}$. Finally, post-selecting on the state $\ket{1}$ leads to the desired state that encodes the classical vector, $\ket{\mathbf{v}}$.

\paragraph{Success probability}
The probability of projecting onto the desirable subspace in the final step is given by $n^{-1} \sum_i |v_i|^2$. In the case where the entries of $\mathbf{v}$ are of similar magnitude, $\ket{\mathbf{v}}$ can be prepared using only a constant number of queries. 
However, a potential caveat arises when a small number of entries in the vector are significantly larger than the others, in such cases, projecting on the correct state requires $\Omega\left(\sqrt{n}\right)$ QRAM queries \cite{soklakov2006efficient}, this can be seen as a consequence of the lower bounds on unordered search \cite{boyer1996tight}. 
The same issue would persist when it is only required to prepare an approximate vector $\ket{\mathbf{v'}}$ that satisfies $|\mathbf{v'} - \mathbf{v}|_2\le \epsilon$, where $\epsilon$ is a sufficiently small constant error.

\subsection{Robustness and rounding conventions}

Fortunately, data processing tasks in practical machine learning almost always implicitly assume a certain level of robustness against small perturbation in the $\infty$-norm which measures only the largest entry-wise error. 
In particular, any digital data processing based on fixed or floating-point arithmetic only makes sense if the outcome of the analysis remains valid if the features in the input vector deviate from the original values below the machine precision. Due to the sheer nature of measurements in the real world, it is practically reasonable to assume the data points are specified with finite precision. Hence the appropriate error constraint which reflects the realistic analytic scenarios is only that $|\mathbf{v'} - \mathbf{v}|_\infty\leq \epsilon$, instead of requiring a close approximation in the 2-norm. 

\paragraph{Alternative rounding} 
Assuming the data processing inherently have tolerance against an $\epsilon$ perturbations in $\infty$-norm allows us to work with the vector with entries $v'_i$ which are half-integer multiples of the base precision $\epsilon$. In this alternative numerical rounding convention (as shown in Figure \ref{fig:numbering}), $\mathbf{v'}$ is chosen to be the closest representable vector to $\mathbf{v}$, which satisfies $|\mathbf{v'} - \mathbf{v}|_\infty\leq \frac{\epsilon}{2}$, and the distance from the original value of the data is less than $\epsilon$. 
    Note that this offset rounding does not contain an exact representation of $0$.  This new convention can be either directly realised in the loading stage of the QRAM, or equivalently implemented at the controlled rotation stage, as shown in Eq. \ref{QRAMCROT}.
\begin{figure}[h]
    \includegraphics[width = 0.8 \linewidth]{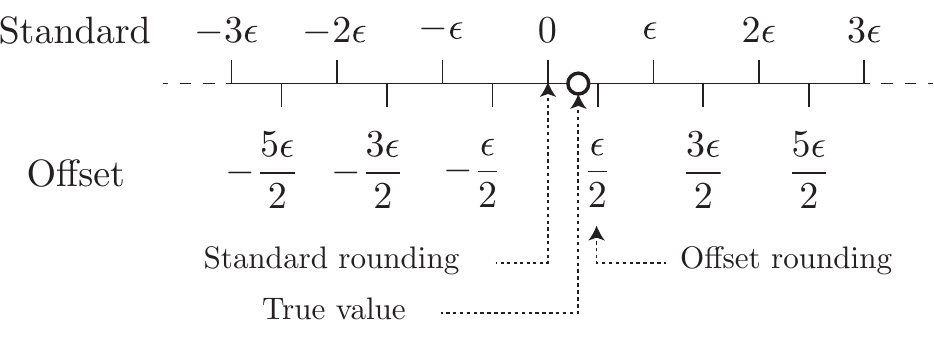}
    \caption{Numerical rounding conventions. In the standard rounding convention, scalar values are rounded to the nearest integer multiple of precision $\epsilon$. Alternatively, we can consider an offset rounding convention, where the rounding is to the nearest half-integer multiple of $\epsilon$. In either scheme, the rounded value is always within $\frac{\epsilon}{2}$ of the true value. \label{fig:numbering}}
\end{figure}
In some cases, always using a positive sign offset ($+\epsilon/2$) to data-points will introduce an undesirable systematic error in the loaded vector. To overcome this potential issue, one can choose to implement a nearly white noise offset. This can be achieved by either utilising a suitable pseudo-random number generator which is seeded by the corresponding memory location, or by including random data stored in other locations of the QRAM.

The robustness requirement against small perturbation in the $\infty$- norm guarantees the overall analysis is insensitive to using the above-described offset rounding convention.
Furthermore, note that the success probability of the final projection step is lower bounded by $\frac{\epsilon^2}{4}$. Hence preparing the quantum state that encodes $\mathbf v$ can succeed independent of the dimensionality, $n$. This is due to the absence of an exactly representable of $0$ in the offset rounding convention. This success probability in state preparation can further be enhanced to $\Omega(\epsilon)$ with the technique of fixed-point quantum amplitude amplification described in Ref. \cite{yoder2014fixed}. 
Note that the base precision parameter $\epsilon$ need not be on the order of machine precision. Any values of $\epsilon$ which is small compared with the known accuracy level of the input data will ensure that the final error is negligible. Generally speaking, low precision data will have a constantly more efficient loading procedure then high precision data.
Most importantly, the number of necessary QRAM queries for successful state preparation procedure will always be upper bounded by the inverse of a constant precision parameter which is independent of the size of the database.

In summary, efficient quantum state preparation to encode a classical input vector is possible in any data processing application which is robust under small $\infty$-norm perturbations.
 As a consequence, the caveat related to state preparation highlighted by Aaronson in Ref. \cite{Aaronson2015b} can generally be overcome in the context of machine learning, due to the inherent robustness assumption.
 However, this robustness feature not necessarily shared by other application such as computational physics or numerical mathematics where exact vector entries representations could potentially be hard requirements of any meaningful analysis.  As one important example of robust applications of quantum machine learning, Gaussian processes are the main topic of this part of the thesis. We will explicitly introduce a state preparation procedure in the next section.

\subsection{State preparation for GPR}

In order to adapt the QRAM based state preparation scheme to Gaussian processes applications, we need to modify it to prepare a state corresponding to the $s_\mathbf{v}$-sparse vector $\mathbf{v}$ with entries $v_i$.
We start with a register prepared in a superposition
\begin{align}
s_\mathbf{v}^{-1/2}\sum_{i: v_i \neq 0} \ket{i} \otimes \ket{0}.
\end{align}
Then we use the index stored in the first register, to conditionally rotate the ancillary register, so that the rotation is based on the $i$th non-zero entry of $\mathbf{v}$. The resultant state of the system is 
\begin{align}
\ket{\tilde{\mathbf{v}}} = \frac{1}{\sqrt{s_\mathbf{v}}}\sum_{i:v_i\neq 0}\ket{i}\otimes \left(\sqrt{1-c_\mathbf{v}^2v_i^2}\ket{0} + c_\mathbf{v} v_i\ket{1}\right),
\end{align}
where $c_\mathbf{v} \leq \min_i |v_i|^{-1}$ is the chosen constant to normalise the unitary rotation. Finally, post-selecting on the ancillary register being in state $\ket{1}$ projects the first register to the required state $\ket{\mathbf{v}} = \frac{\mathbf{v}}{||\mathbf{v}||}$. In rare cases, the vector could be vastly dominated by a handful of large value entries, the previously described offset rounding convention can then be applied to ensure a constant success probability in preparing the quantum state for Gaussian processes.

\section{Quantum Gaussian process algorithm}

The essential idea of applying quantum algorithms to GPR comes from the observation that the computation of the central quantities of interest in GPR, ${f}_*$ and $\mathbb{V}[{f}_*]$, as written in Eq. \ref{eq:mean} and Eq. \ref{eq:variance}, involves solving linear systems of the forms $(K+\sigma_n^2{I})\boldsymbol{\alpha}=\mathbf{y}$ and $(K+\sigma_n^2{I})\boldsymbol{\eta}=\mathbf{k}_*$ respectively, where $\mathbf{k}_*^T\boldsymbol{\alpha}={\bar{f}}_*$ and $k\left(\textbf{x}_*,\textbf{x}_*\right)-\mathbf{k}_*^T\boldsymbol{\eta} =\mathbb{V}[{f}_*]$. The common linear structure suggests that we can apply the quantum linear system algorithm to extract useful information.

\subsection{Inner product estimation}

As a prerequisite component to the quantum Gaussian process algorithm we here introduce a mechanism to estimate the inner product $\langle \mathbf{u} | \mathbf{v} \rangle$ for a given pair of real vectors $\mathbf{u}$ and $\mathbf{v}$.
Although the squared version, $|\langle \mathbf{u} | \mathbf{v} \rangle|^2$, can be easily computed using a controlled-swap test, as presented in Ref. \cite{Liming}, for the purpose of GPs we need to compute both the magnitude and sign of this inner product. Since the controlled-swap test gives the result estimate in terms of a probability, the sign of $\langle \mathbf{u} | \mathbf{v} \rangle$ is not directly accessible. Thus in order to estimate the inner product, we instead use an augmented version of the state preparation technique, in which an additional ancillary qubit is introduced to determine whether the target state is $\ket{\mathbf{u}}$ or $\ket{\mathbf{v}}$. Specifically we initialise the ancillary qubit in the state
\begin{align}
\ket{+}=\frac{1}{\sqrt{2}}(\ket{0}+\ket{1}), 
\end{align}
which results in a joint state, 
 \begin{align}
 \ket{\Phi_{\mathbf{u},\mathbf{v}}} =& \frac{1}{\sqrt{2s_\mathbf{u}}}\sum_{i:u_i \neq 0} \ket{0}\ket{i} \left(\sqrt{1-c_\mathbf{u}^2 u_i^2}\ket{0} + c_\mathbf{u}u_i \ket{1}\right)\nonumber\\
 &+ \frac{1}{\sqrt{2s_\mathbf{v}}}\sum_{i:v_i \neq 0} \ket{1} \ket{i} \left(\sqrt{1-c_\mathbf{v}^2 v_i^2}\ket{0} + c_\mathbf{v}v_i \ket{1}\right). 
\end{align}
Then measuring the operator $M = X \otimes I \otimes \ket{1}\bra{1}$ 
results in an expectation value
\begin{align}
\langle M\rangle=s_\mathbf{u}^{-1/2}s_\mathbf{v}^{-1/2}c_\mathbf{u} c_\mathbf{v} \mathbf{u}^T \mathbf{v}.
\end{align}

\subsection{Procedures}

Now we are in a position to introduce a quantum algorithm for computing the quantities of the form $\mathbf{u}^T A^{-1} \mathbf{v}$, which can, in turn, be applied to compute the central quantities of GP regression. To do so, we combine the techniques of state preparation, inner product estimation together with the quantum linear system algorithm (QLSA) described in Chapters \ref{IntroQA} and \ref{QDLSA}. The general procedure is as follows:

\begin{enumerate}
\item Initialise the system in the state $\ket{+}_A\ket{0}_B\ket{0}_C\ket{0}_D$, where the subscripts $A$, $B$, $C$ and $D$ label different registers.
\item Conditioned on register $A$ being in state $\ket{0}$, query the QRAM and prepare registers $B$ and $C$ in the state $\ket{\tilde{\mathbf{u}}}$, such that the ancillary qubit is placed in register $C$ with the rest of the state in register $B$, and apply an $X$ gate to register $D$.
\item Conditioned on register $A$ being in state $\ket{1}$, query the QRAM and prepare registers $B$ and $C$ in the state $\ket{\tilde{\mathbf{v}}}$ such that the ancillary qubit is placed in register $C$ with the rest of the state in register $B$. 
\item Conditioned on both registers $A$ and $C$ being in state $\ket{1}$, apply QLSA using $B$ as the input register and using $D$ as the ancillary register. A fifth register $E$ is introduced for the phase estimation subroutine in the QLSA, but since it is eventually uncomputed and returned to the zero state, we will omit it in the description of the states after each step for simplicity.
\item Measure the system with the observable $M=X_A I_B \ket{1}\bra{1}_C \ket{1}\bra{1}_D$.
\end{enumerate}
The measurement result will be a random variable with an expectation value, 
\begin{align}
\langle M\rangle=    c s_\mathbf{u}^{-1/2}s_\mathbf{v}^{-1/2} c_\mathbf{u} c_\mathbf{v} \mathbf{u}^T A^{-1} \mathbf{v}.
\end{align}
A circuit diagram describing the above procedures for computing $\mathbf{u}^T A^{-1} \mathbf{v}$ is shown in Figure \ref{fig: QGPR}.

\begin{figure}[H] 
\centering
\includegraphics[width=\textwidth] {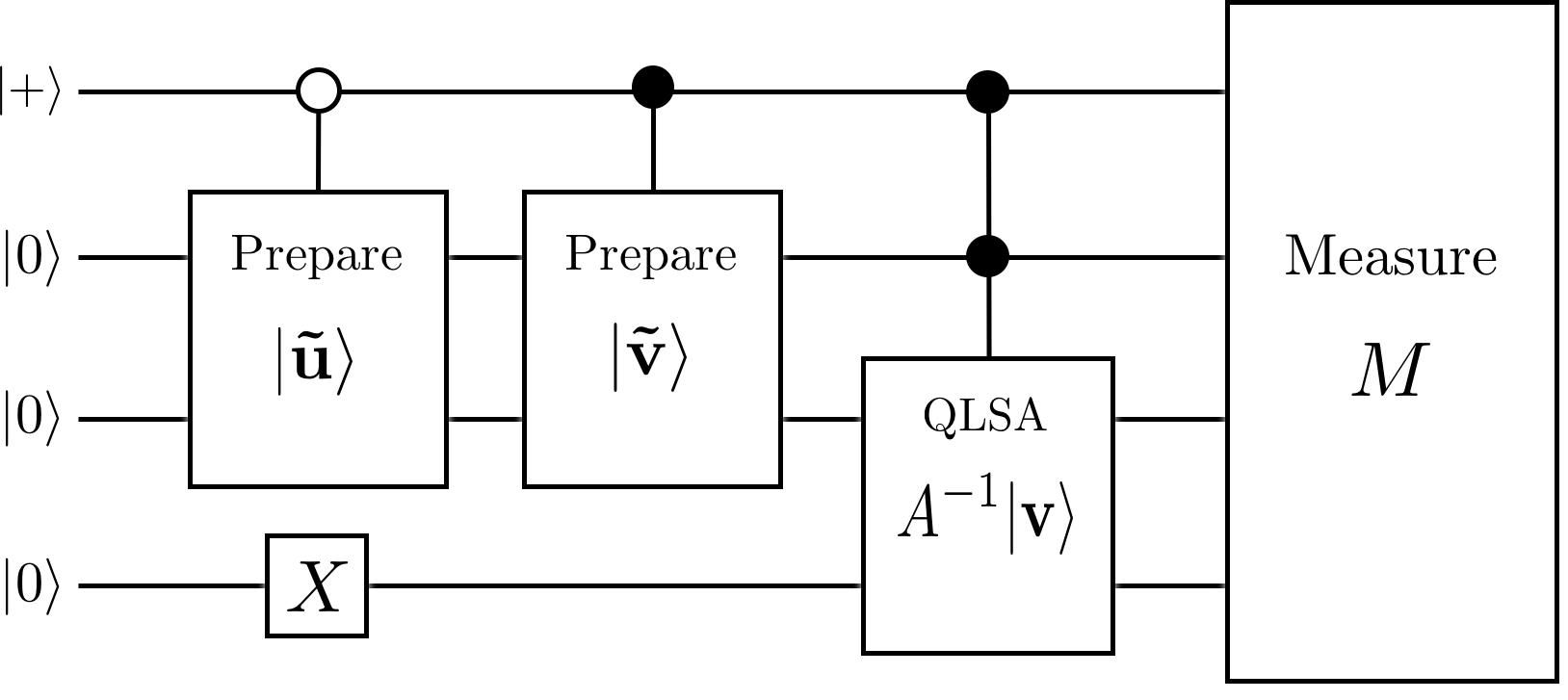} 
\caption{Circuit diagram for computing the form $\mathbf{u}^T A^{-1} \mathbf{v}$, where $M=X\otimes I\otimes \ket{1}\bra{1}\otimes \ket{1}\bra{1}$.}
\label{fig: QGPR} 
\end{figure}

\paragraph{Derivations} To see the validity of the above algorithm, note that the state of the system after Step 4 is given by
\begin{align}
&\frac{1}{\sqrt{2s_\mathbf{u}}}\ket{0}_A\sum_{i:u_i\neq 0} \ket{i}_B \left(\sqrt{1-c_\mathbf{u}^2u_i^2}\ket{0}_C + c_\mathbf{u} u_i\ket{1}_C\right)\ket{1}_D\nonumber\\
&+\frac{1}{\sqrt{2s_\mathbf{v}}} \ket{1}_A\sum_{i:v_i\neq 0} c_\mathbf{v} \beta_i \ket{\mu_i}_B \ket{1}_C \left(\sqrt{1-\frac{c^2}{\lambda_i^2}}\ket{0}_D + \frac{c}{\lambda_i}\ket{1}_D\right)\nonumber \\
&+\frac{1}{\sqrt{2s_\mathbf{v}}} \ket{1}_A\sum_{i:v_i\neq 0}  \sqrt{1-c_\mathbf{v}^2v_i^2}  \ket{i}_B \ket{0}_C \ket{0}_D,
\end{align}
where $\ket{\mu_i}$ denotes the $i$th eigenvector of $A$ with corresponding  eigenvalue $\lambda_i$, and $\{\beta_i\}$ denotes the coordinates of $\mathbf{v}$ in the basis of $\{\ket{\mu_i}\}$. The subsequent projection of this state onto $\ket{1}$ for registers $C$ and $D$ results the sub-normalised state
\begin{align}
\frac{1}{\sqrt{2s_\mathbf{u}}}\ket{0}_A\sum_{i:u_i \neq 0} c_\mathbf{u} \gamma_i\ket{\mu_i}_B+\frac{1}{\sqrt{2s_\mathbf{v}}}\ket{1}_A\sum_{i:v_i \neq 0}^n \frac{c}{\lambda_i} c_\mathbf{v} \beta_i \ket{\mu_i}_B,
\end{align}
where $\{\gamma_i\}$ are the coordinates of $\mathbf{u}$ in the basis of $\{\ket{\mu_i}\}$.
As a result, the expectation value of the final measurement is given by
\begin{align}
&\sum_i \frac{1}{4}\left(\left(\frac{c_\mathbf{u}}{\sqrt{s_\mathbf{u}}} \gamma_i + \frac{c_\mathbf{v}c}{\sqrt{s_\mathbf{v}}} \frac{\beta_i}{\lambda_i}\right)^2-\left(\frac{c_\mathbf{u}}{\sqrt{s_\mathbf{u}}} \gamma_i - \frac{c_\mathbf{v}c}{\sqrt{s_\mathbf{v}}} \frac{\beta_i}{\lambda_i}\right)^2\right)\nonumber\\
=&\frac{c_\mathbf{u}c_\mathbf{v}c}{\sqrt{s_\mathbf{u}s_\mathbf{v}}} \mathbf{u}^T A^{-1}\mathbf{v}.    
\end{align}
The expectation value for the measurement in the final step, $\langle M\rangle$, must match the above, thus we have 
\begin{align}
    \langle M\rangle=\frac{c_\mathbf{u}c_\mathbf{v}c}{\sqrt{s_\mathbf{u}s_\mathbf{v}}}\mathbf{u}^T A^{-1}\mathbf{v}.
\end{align}  
It should be noted the estimation of $\langle M\rangle$ in involves sampling $m$ on repeated runs of the algorithm, which results in a sampling variance that scales as $m^{-1}$.

The above-outlined algorithm for estimating the inner product form $\mathbf{u}^T A^{-1} \mathbf{v}$ can be used to construct a quantum algorithm for approximating both the mean predictor and variance predictor in computing GP regression, which we will illustrate in the following.

\subsection{Mean predictor}
\label{sub: mean predictor}
In order to approximate the mean predictor, 
$\mathbf{k}_*^T (K+\sigma_n^2 I)^{-1}\mathbf{y}=\mathbf{y}^T (K+\sigma_n^2 I)^{-1}\mathbf{k_*}$, 
we set $\mathbf{u} = \mathbf{y}$, $A = K+\sigma_n^2I$ and $\mathbf{v}=\mathbf{k_*}$. 
Since $K$ is positive semi-definite, the minimum eigenvalue of $A$ is lower bounded by $\sigma_n^2$, and hence we take the normalisation constant $c=\sigma_n^2$ in each run of the QLSA. 
This leads to
\begin{align}
\langle M\rangle = \frac{\sigma_n^2 c_{\mathbf{k}_*}c_\mathbf{y}}{\sqrt{s_{\mathbf{k}_*}s_\mathbf{y}}} \mathbf{y}^T (K+\sigma_n^2 I)^{-1}\mathbf{k_*},    
\end{align}
and therefore 
\begin{equation}
\bar{f}_* = \frac{\sqrt{s_{\mathbf{k}_*}s_\mathbf{y}}}{\sigma_n^2 c_{\mathbf{k}_*} c_\mathbf{y}}\langle M \rangle.
\end{equation}
Here $c_{\mathbf{k}_*}$ and $c_\mathbf{y}$ are taken to be the inverted maximum absolute values of the entries in $\mathbf{k}_*$ and $\mathbf{y}$ respectively, which we treat as constants.
Hence the variance in estimating the value of $\bar{f}_*$ will scale as $s_{\mathbf{k}_*}s_\mathbf{y}m^{-1}$. 
In the case of $K$ being $s$-sparse, we have $s_\mathbf{k_*}\le s$ since $\mathbf{k_*}$ reflects the same dependencies as $K$. 
While $\mathbf{y}$ will not, in general, be sparse, we can instead replace it in the estimation procedure with a vector $\mathbf{y'}$ with a small number of non-zero entries and still obtain a good approximation to $\bar{f}_*$, whenever the spectral norm of $K+\sigma_n^2 I$ is bounded, which will virtually always be the case for GP regression. 
This is because of the fact that
\begin{align}
(K+\sigma_n^2 I)^{-1} = \sum_d (-1)^d(K+(\sigma_n^2-1) I)^d,    
\end{align}
and hence that $(K+\sigma_n^2 I)^{-1}$ can be approximated by a polynomial in $(K+(\sigma_n^2-1) I)$ of some fixed degree, which will result in a matrix of constant sparsity. 
Hence $(K+\sigma_n^2 I)^{-1}\mathbf{k}_*$ will be an approximately sparse vector, and its inner product with $\mathbf{y}$ can be well approximated by the inner product with a vector $\mathbf{y'}$ where the only non-zero entries correspond to the location of non-negligible entries of $(K+\sigma_n^2 I)^{-1}\mathbf{k}_*$. 
In conclusion, only a constant number of repetitions of the algorithm is needed to achieve a fixed variance of estimation.

\subsection{Variance predictor}
In order to approximate the variance $\mathbb{V}[{f}_*]$, 
the same procedure is followed as for the mean predictor, except that $\mathbf{u}$ is now taken to be $\mathbf{k}_*$ instead of $\mathbf{y}$. This yields 
\begin{align}
\langle M\rangle = \frac{\sigma_n^2 c_{\mathbf{k}_*}^2}{s_{\mathbf{k}_*}} \mathbf{k}_*^T (K+\sigma_n^2 I)^{-1}\mathbf{k}_*,     
\end{align}
and therefore we have
\begin{equation}
\mathbb{V}[{f}_*] = k(\textbf{x}_*,\textbf{x}_*)-\frac{s_{\mathbf{k}_*}}{\sigma_n^2 c_{\mathbf{k}_*}^2} \langle M \rangle.
\end{equation}
As with the mean predictor in Section \ref{sub: mean predictor}, $\langle M \rangle$ needs to be measured on a constant number of independent runs of the algorithm in order to yield a desired fixed variance on the estimate.

\section{Discussions}

We have shown that the QLSA introduced in Section \ref{HHL} can be applied to evaluating the two central objective quantities in GPR problems, the mean predictor, and the variance predictor. Inherited from the computational time of QLSA, this quantum GPR procedure achieves an exponential speed-up over classical implementations under two assumptions about the covariance matrix, $(K+\sigma_n^2{I})$, namely, the matrix is sparse and well-conditioned. We discuss the practicalities of these assumptions.

\paragraph{Sparsely constructed GP}
GPs with sparse covariance matrices are of significant interests in many real-world applications, particularly when the problem involves inference from large datasets \cite{sparse2009}. For example, these sparsely constructed Gaussian processes are used to make a unified framework for robotic mapping \cite{robotic}. In the field of pattern recognition, sparsely constructed Gaussian processes have been used to solve realistic action recognition problems \cite{recognition}. A widely used technique to construct a sparse covariance matrix is setting the covariance function to zero beyond a certain distance between any two data points with a compactly supported function. This is known as covariance tapering and has been proven to approximate the Mat\'ern family of covariance functions with a small squared error \cite{furrer2006covariance}. An explicit example in geostatistics kriging where the dataset gives rise to a highly sparse covariance matrix is presented in Ref. \cite{barry1997kriging}. In the above cases where the GPR computation only involves sparse covariance matrices, our proposed algorithm circumvents the major potential caveats of QLSA, and an exponential advantage over its classical counter-part is attainable. For other applications where $s$ scales linearly with $n$, our algorithm provides a polynomial speed-up over the best-known classical GPR algorithm, even though an exponential speed-up is not always guaranteed.

\paragraph{Conditioning}
To implement quantum GPR efficiently, the matrix $(K+\sigma_n^2{I})$ needs to be well-conditioned. The ratio of largest and smallest eigenvalue $\kappa$ needs to stay low as $n$ increases for the matrix to be robustly invertible. In classical GPR, conditioning is already a well-recognised issue. A general strategy to cope with the problem is to increase the noise variance $\sigma_n^2{I}$ manually by a certain amount to dilute the ratio $\kappa$ without severely affecting the statistical properties of the model. This increase in $\sigma_n^2{I}$ can be seen as a small amount of noise (jitter) in the input signal. This technique is not new to quantum GPR and may be seen throughout the classical GP literature and mainstream implementations \cite{gaussian1998}. Therefore, for almost all practical purposes, we can assume the matrix is well-conditioned before applying the quantum algorithm. Moreover, when we apply our algorithm on a sparse kernel, the preconditioning method presented in Ref. \cite{Clader2013a} can be applied to suppress the growth of $\kappa$ further. In fact, under the realistic assumption that the maximum entry of a sparse $K$ is bounded by a constant, the maximum eigenvalue of $(K+\sigma_n^2{I})$ must be bounded by a constant. This is a consequence of the Gershgorin circle theorem \cite{varga2010gervsgorin} which can be expressed in terms of the following inequality,
\begin{align}
|\lambda - A_{ii}|\le \sum_{j\neq i}|A_{ij}|.
\label{ineq: Gersh}
\end{align}
Note that since $A=(K+\sigma_n^2{I})$, the minimum eigenvalue of $A$ is lower bounded by $\sigma_n^2$. Likewise, we have the diagonal elements bounded by $A_{ii} \geq \sigma_n^2$ and the off-diagonal sum $\sum_{j\neq i}|A_{ij}|$ upper bounded by the sparsity of $A$ scaled by the magnitude of its maximum entry. Hence from Eq. \ref{ineq: Gersh} we deduce the maximum eigenvalue of $A$ is upper bounded by a constant that is independent of $n$. 
As a result, under the sparse and bounded element kernel matrix assumption, conditioning does not provide a barrier to our proposed quantum GPR algorithm.
%\paragraph{Summary} 
%We have presented a novel procedure to apply the quantum algorithm for solving linear systems to Gaussian process regression modelling problems in supervised machine learning. By repeated sampling of the results of specific quantum measurements
%on the output states of QLSA, the mean predictor
%and the associated variance can be estimated with bounded error
%with potentially exponential speed-up over classical algorithms. The procedure includes specific techniques for input state preparation and solution readout. 
%Our discussion of practicalities further suggests that 
In summary, we have argued that conditioning does not hinder the application of quantum GPR, and the algorithm is most advantageous when one is concerned with a sparse kernel. Under such circumstances, an exponential speed-up is achievable. Hence having addressed all the major potential caveats of QLSA \cite{Aaronson2015b}, the quantum GPR algorithm is shown to be a robust application with practical significance.

\chapter{Training quantum Gaussian processes}
\label{QGPT} 

As presented in the previous chapter, the quantum Gaussian process algorithm provides a speed-up in computing predictions and the associated variances given a fixed kernel. It is desirable to also have a correspondingly efficient quantum routine for kernel and hyperparameter selection. In particular, it would be desirable to evaluate a measure of the model's performance with a quantum routine that supplements the main learning algorithm. With this motivation, we propose a quantum approach to improve the efficiency of GP training based on evaluating the logarithm of marginal likelihood ($\mathrm{LML}$) of the Gaussian distribution of the observed data. The material of this chapter is based on the work of Ref. \cite{zhao2018quantum}.

\section{Quantum $\mathrm{LML}$ algorithm}
Here we introduce a quantum algorithm for estimating the $\mathrm{LML}$ given the kernel matrix of a Gaussian process, which serves as the standard metric for a kernel's performance on the given data set.
The complete estimation of $\mathrm{LML}$ is obtained by combining the ``penalty'' and the ``data fit'' terms. For the purpose of GP training, we are concerned with estimating the variation, $\delta\mathrm{LML}$, with respect to a training step, where the prefix $\delta$ denotes the variation in a quantity between training steps.
\subsection{Augmented linear algorithm}
The data fit term of the $\mathrm{LML}$ Eq. \ref{eq:lml}, $\frac{1}{2} \mathbf{y}^T(K+\sigma_n^2)^{-1}\mathbf{y}$ relates the outputs $\mathbf{y}$ to the covariance matrix $K$. Here we demonstrate a modified version of the QLSA \cite{Harrow2009a}, and show that it can be used to calculate the data fit term.
As discussed in Chapter \ref{IntroQA}, The QLSA makes use of the quantum phase estimation to obtain the superposition of the eigenvalues, $\lambda_i$ of $A\in \mathbb{R}^{n\times n}$ encoded in the form of binary bit-strings, where $A$ is the matrix in the linear system $A\ket{\mathbf{x}} = \ket{\mathbf{b}}$.
An ancillary qubit is then rotated conditioned on the values of $f(\lambda_i)$. In the case of the original linear system algorithm, the function $f$ is simply chosen to be $f(\lambda)=1/\lambda$. Post-selecting this ancillary qubit followed by the reversal of the phase estimation step results in finding $A^{-1}\ket{\mathbf{b}}$ with success probability $\bra{\mathbf{b}}(A^{-1})^\dagger A^{-1}\ket{\mathbf{b}}$. As noted in Ref. \cite{Harrow2009a}, the same method can be extended to obtain $f(A)\ket{\mathbf{b}}$ for any computable function $f$.

Here we apply an augmented version of the QLSA by choosing $f(\lambda) = \frac{1}{\sqrt{\lambda}}$ instead of the original inversion. The procedure for estimating the data fit term is given as follows:
\begin{enumerate}
    \item Use QRAM queries to prepare $\ket{\mathbf{y}}=\frac{\mathbf{y}}{\|\mathbf{y}\|}$ with the state preparation technique described in Section \ref{State_prep}.
    \item Set $\ket{\mathbf{b}}=\ket{\mathbf{y}}$ and $A = K+\sigma_n^2\mathit{I}$, and run the augmented QLSA with $f(\lambda) = \frac{1}{\sqrt{\lambda}}$, which leads to $A^{-\frac{1}{2}}\ket{\mathbf{y}}$ with success probability $\bra{\mathbf{y}}A^{-1}\ket{\mathbf{y}}$     
    \item Sampling on multiple runs of the augmented QLSA thus gives a Monte Carlo estimate of the data fit term with mean $\mathbf{y}^TA^{-1}\mathbf{y}$ and variance bounded by $\frac{1}{4}\|\mathbf{\mathbf{y}}\|^2\sigma_n^{-2}$.
\end{enumerate}
Note that on top of leading to the desired estimation for the data fit term, this choice of $f(\lambda)$ also reduces the inconvenient effect of poor conditioning by a square-root as the success probability of the measurement step is increased as $\sqrt \lambda \geq \lambda$ for all $0 \leq \lambda \leq 1$. When $A$ is well-conditioned and sparse, the runtime of sampling from such a distribution is logarithmic in the dimension of $\mathbf{y}$, inherited from the computational cost of QLSA in Ref. \cite{Harrow2009a}.

\subsection{Log determinant algorithm}

The second term of the $\mathrm{LML}$ in Eq. \ref{eq:lml}, $-\frac{1}{2}\log\det[K+\sigma_n^2 I]$ can be estimated via a quantum algorithm that samples the eigenvalues of a Hermitian matrix $A \in \mathbb R^{n\times n}$ uniformly at random. 
The algorithm proceeds as follows: 

\begin{enumerate}
\item Prepare $\log_2 n$ qubits in maximally mixed state, $\frac{1}{n}\sum\limits_{i=1}^{n}\ket{i}\bra{i}$, and store this in a first register. This can be achieved simply by preparing the register in a random computational basis state. Note that a maximally mixed state is maximally mixed in any basis, hence we can choose to represent the density matrix for the system in the eigenbasis $\{\ket{e_i}\}$ of a matrix $A=K+\sigma_n^2 I$: 
\begin{align}
\frac{1}{n}\sum\limits_{i=1}^{n}\ket{e_i}\bra{e_i}.
\end{align}    
\item Append a second register in a superposition state given by $\frac{1}{\sqrt{T}}\sum\limits_{\tau=1}^{T}\ket{\tau}$, so that the composite system is in the state
\begin{align}
\frac{1}{nT}\sum\limits_{\tau,\tau^\prime=1}^{T}\sum\limits_{i=1}^{n}\ket{e_i}\bra{e_i}\otimes\ket{\tau}\bra{\tau^\prime},
\end{align}
where the time period parameter $T$ is chosen to be a sufficiently large value in the same way as in Eq. \ref{eqn: step1HHL}.

\item Perform a Hamiltonian simulation and evolve the first register with the Hermitian matrix $(-A)$ for time specified by the second register. This is achieved by applying the conditional unitary evolution $\sum\limits_{\tau=1}^{T}\mathrm{e}^{iAt_0\tau/T}\otimes\ket{\tau}\bra{\tau}$, where $t_0=O(1/\epsilon)$ is chosen with respect to the $\epsilon$-bounded error required in the algorithm. We thus obtain
\begin{align}
\frac{1}{nT}\sum\limits_{\tau,\tau'=1}^{T}\sum\limits_{i=1}^{n}\mathrm{e}^{i\lambda_it_0(\tau-\tau')/T}\ket{e_i}\bra{e_i}\otimes\ket{\tau}\bra{\tau'}. 
\end{align}

\item Complete the phase estimation by performing a quantum Fourier transform of the second register. The resulting estimated eigenvalues of $A$, $\{\lambda_i\}$, are then stored in the second register as a binary bit-string up to a finite precision. This results in the system being in state,
\begin{align}
\frac{1}{n}\sum\limits_{i=1}^{n}\ket{e_i}\bra{e_i}\otimes\ket{\lambda_i}\bra{\lambda_i}.
\end{align}

\item Measure the second register in computational basis to obtain a random $\lambda_i$. By using the identity,
\begin{align}
\langle \log\lambda_i\rangle = \frac{1}{n}\sum\limits_{i=1}^{n}\log\lambda_i = \frac{1}{n}\Tr[\log(A)] = \frac{1}{n}\log[\det(A)],
\end{align}
The desired quantity $\log[\det(A)]$ is given then by $n\langle \log\lambda_i\rangle$, which will needs to be estimated by sampling eigenvalues of $A$ on repeated runs of the procedure.
\end{enumerate}
Hence the ``penalty'' term of the $\mathrm{LML}$ can be estimated using the above eigenvalue sampling procedure, by setting $A = K+\sigma_n^2\mathit{I}$. This procedure can be seen as a finite dimensional analogue of the continuous variable model proposed in Ref. \cite{nana}. 

\paragraph{Runtime}
The optimised phase estimation procedure \cite{clock,luis1996optimum} comes with an error, $\epsilon_{\lambda_i}$, which scales as $\mathcal{O}(1/t_0)$ in estimating each $\lambda_i$. This implies the error associated with the logarithm of a single eigenvalue scales as
$\epsilon=\left|\frac{d\log\lambda_i}{d\lambda_i}\epsilon_{\lambda_i}\right|=\mathcal{O}\left(\frac{1}{\lambda_it_0}\right)$.
Furthermore, in the context of GP training, there generally exists a $\sigma_n^2 \mathit{I}$ noise contribution to the covariance matrix, due to uncertainty in the observed data. Thus, in general, we have the minimum eigenvalue, $\lambda_{\text{min}}\ge\sigma_n^2$. Hence, the total bounded-error single-run of the algorithm takes time scaling logarithmically in $n$ as $t=\tilde{\mathcal{O}}\left(\frac{s\log n}{\sigma_n^2\epsilon}\right)$.

%An important component in our algorithm is the procedure of Hamiltonian simulation which amounts to exponentiating $A$ in order to construct a unitary operator $\mathrm{e}^{-iAt}$. A general technique to achieve this is based on a Hamiltonian simulation method described in \cite{berry2015hamiltonian} which is based on Szegedy quantum walks \cite{berry2009black}. This method enables the exponentiation of an $n\times n$ Hermitian matrix in $\tilde{O}(s\log n)$ given oracle access to the matrix elements, where $s$ denotes the sparsity of $A$, such that there are at most $s$ non-zero elements in each column/row. Here we have used the notational shorthand convention $\tilde{O}(x)$ to denote $O(x\log^k(x))$ for any constant $k$, such that slower growing contributions are omitted. 

Due to the linear sparsity dependence from the Hamiltonian simulation step, the proposed quantum algorithm performs best when the covariance matrix is some constant $s$-sparse, in which case our algorithm provides an exponential speed-up over the classical GP training procedure. Such sparsely constructed GPs have found applications in a range of interesting problems, especially when large datasets are involved \cite{sparse2009}, as discussed in Chapter \ref{QGP}. 

When dealing with non-sparse but low-rank matrices, another technique of Hamiltonian simulation involving density matrix exponentiation \cite{lloyd2014quantum} can potentially be applied. Note that the covariance matrices are by definition symmetric, real and positive semi-definite, and therefore have a very similar mathematical structure to the density matrix representation of quantum states. Hence this seminal technique of density matrix exponentiation potentially allows us to implement $\mathrm{e}^{-iAt}$ in $\tilde{\mathcal{O}}(\log n)$ time, even if the matrix is not sparse. However,  the covariance matrix needs to be normalised to have a unit trace for the application of density matrix exponentiation. This pre-processing can be done efficiently if one can exploit the analytical structure of the covariance matrix. Also note that if the eigenvalues of the covariance matrix are relatively uniform, the time required to implement the unitary for a complete cycle will scale as $\mathcal{O}(n)$. Hence applying density matrix exponentiation is most effective when the covariance matrix is approximately low-rank \cite{lloyd2014quantum}. 
\paragraph{Stochastic trace estimation}
We briefly compare the quantum log determinant algorithm with classical stochastic trace estimation methods. It is clear that the quantum algorithm offers a precise method to compute $\log(A)$ rather than either the truncated Taylor series or Chebyshev polynomial approximations. When measurements of the second register are taken, a single $\log(\lambda_i)$ is computed, and hence our proposed approach can be seen as quantum stochastic trace estimation. The main advantage, however, comes from the reduction in computation time from polynomial to sub-linear. A natural question which arises is whether the complete GP training can scale sub-linearly in $n$, since if not, an exponential improvement in computing the $\mathrm{LML}$ in each step would yield only a polynomial improvement in precision.

\section{Variation estimation}

The figure of merit for the estimation error is the relative variance, as it quantifies the amount of dispersion between the estimated and the actual value of $\mathrm{LML}$. In order to demonstrate the quantum advantage in the training process, it is therefore necessary to show that the relative variance with respect to a change in hyperparemeter, $\delta\theta$, does not scale up with $n$. We consider the following,
\begin{align}
\frac{\text{Var}\left[\delta \mathrm{LML}\right]}{\left[\delta \mathrm{LML}\right]^2}=\frac{ \var\left[\log[\det(A)]\right]+\var\left[ \mathbf{y}^TA^{-1}\mathbf{y} \right] }{\left[\frac{\partial}{\partial\theta}\left(\log[\det(A)]+\mathbf{y}^TA^{-1}\mathbf{y}\right)\delta\theta\right]^2}.
\end{align}
Now we write the $\mathbf{y}$ as a linear combination of the eigenvectors, $\mathbf{e}_i$ of $A$, such that $\mathbf{y}=\sum_i\gamma_i\mathbf{e}_i$, and $\mathbf{y}^TA^{-1}\mathbf{y}=\sum_i|\gamma_i|^2\lambda_i^{-1}$, we have
\begin{align}
\frac{ \var\left[\log[\det(A)]\right]+\var\left[ \mathbf{y}^TA^{-1}\mathbf{y} \right] }{\left[\frac{\partial}{\partial\theta}\left(\log[\det(A)]+\mathbf{y}^TA^{-1}\mathbf{y}\right)\delta\theta\right]^2}
&\le\frac{n^2 \left(\var\left[\log \lambda_i\right] + \frac{1}{4}\left<y_i^2\right>\sigma_n^{-2}\right)}{\left[\frac{\partial}{\partial\theta}\left(\sum_i\log\lambda_i+\sum_i|\gamma_i|^2\lambda_i^{-1}\right)\delta\theta\right]^2}\nonumber\\
&\le\frac{\left<(\log\lambda_i)^2\right>+\frac{1}{4}\left<y_i^2\right>\sigma_n^{-2}}{\left<\delta\lambda_i/\lambda_i+\delta\left(|\gamma_i|^2/\lambda_i \right)\right>^2}, \label{revar}
\end{align} 
where the expectation value notation is used to denote the average over all choices of $i$. Hence the relative variance in estimating the variation of $\mathrm{LML}$ with respect to a training step has no explicit dependence on $n$.

Note that the number of hyperparameters is dependent only on the kernel, and thus potentially independent of the number of data points. Provided we are working to constant precision, the number of optimisation steps which require $\mathrm{LML}$ computation is upper bounded by a constant.

\section{Summary}

We have shown a quantum procedure for calculating $\mathrm{LML}$ which improves the efficiency from a classical $\mathcal{O}(n^3)$ scaling to a logarithmic scaling with respect to the size of input under certain assumptions. Specifically, if either the structure of the covariance matrix is constant $s$-sparse or approximately low-rank, the quantum approach provides an exponential speed-up. Even in the cases when the Hamiltonian simulation step inevitably consumes a $\tilde{O}(n\log n)$ time overhead, this quantum algorithm still achieves a polynomial speed-up over the best known classical approach to training full-rank GPs. When applied to a non-sparse covariance matrix that has a low-rank structure, the density matrix exponentiation procedure \cite{lloyd2014quantum} can still lead to a logarithmic time algorithm. In other cases, the singular value estimation based linear system algorithm presented in Chapter \ref{QDLSA} can be applied to achieve a runtime that scales as $\mathcal{O}(\sqrt{n}\log n)$, which provides a polynomial speed-up over its best known classical counterpart, provided that the spectral norm of $A$ is bounded by a constant with respect to the growth of $n$.

The quantum GP training procedure presented in this chapter provides an efficient way to evaluate the performance of a given kernel matrix, which is a crucial component of the model selection problem in supervised learning. This procedure applied in conjunction with the quantum GP algorithm in Chapter \ref{QGP} provides a complete quantum approach for statistical inference with GP models, which can lead to an exponential or polynomial speedup over its best-known classical counterpart, depending on the specific kernel matrix structures.

\theoremstyle{plain}
\newtheorem{claim}{Claim}
\newtheorem{theorem}{Theorem}
\newtheorem{corollary}{Corollary}
\newtheorem{lemma}{Lemma}
\theoremstyle{definition}
\newtheorem{definition}{Definition}
\newtheorem{protocol}{Protocol}

\chapter{Quantum Bayesian Deep Learning}
\label{Chapter: QBDL} 
We have presented a complete quantum approach to supervised learning with Gaussian processes in Chapters \ref{QGP} and \ref{QGPT}. By now we have seen the quantum algorithms for computing the predictive mean and variance of a GP posterior as well as the $\mathrm{LML}$ which is the core component of training a GP model. In this chapter, we exploit the connection between GPs and neural networks as discussed in Section \ref{sec: gpdl}, and apply the quantum enhanced GPs to design a quantum algorithm for deep learning. We will also experimentally demonstrate the algorithm on contemporary quantum computers and analyse its robustness with respect to realistic noise models. Specifically, we will make use of both the Rigetti Forest~\cite{smith2016practical} and the IBM QISKit~\cite{cross2017open} software stacks to implement the quantum algorithm and provide an analysis of the performance of simulators under a realistic noise model. When using real quantum processing units, we implement a simplified, shallow-circuit version of the algorithm, and compare the outcome with the simulations. The results presented in this chapter are based on Ref. \cite{zhao2018bayesian}.

\newpage
\section{Quantum Bayesian training of neural networks}\label{sec: algorithm}

Bayesian methods provide great advantages compared to traditional techniques in machine learning, which include automated ways of learning structure and avoiding overfitting, robustness to adversarial attacks~\cite{bradshaw2017adversarial,grosse2017how} and the ability to estimate uncertainties associated with predictions as previously discussed. The Bayesian framework has novelly been extended to various deep architectures \cite{blundell2015weight,gal2016dropout}.
Recent advances in this direction have established a connection between deep feedforward neural networks and Gaussian processes. This connection novelly allows for Bayesian training of deep neural networks with a Gaussian prior, circumventing the more traditional backpropagation procedure\cite{lee2017deep,gmatthews2018gaussian}.
We have briefly reviewed this correspondence between GP and deep neural networks in Section \ref{sec: gpdl}.
%In Ref.~\cite{lee2017deep}, the authors used a ReLU activation function, which requires calculating the $\mathrm{arccos}$ function.
%Predecessor papers used different activations, all yielding different kernels to calculate or approximate~\cite{cho2009kernel,daniely2016toward}. 
Recall that the base case covariance matrix $K^0$ has elements 
\begin{align}
K^0(\mathbf{x},\mathbf{x}^\prime)=\sigma_b^2+\sigma_w^2\left(\frac{\mathbf{x}\cdot \mathbf{x}^\prime}{d_{in}}\right).    
\end{align}
%These need to be stored in and queried from a quantum random access memory~\cite{giovannetti2008quantum} in order to construct the (controlled) unitary operator $e^{iK^0t}$, which is a key component in quantum phase estimation~\cite{Kitaev1995}. Creating a quantum random access memory remains a major engineering challenge, and therefore we resort to classical state preparation in the experimental section. $K^0$ is a real symmetric, positive semi-definite matrix, which needs a normalization by its trace to qualify as a quantum state~\cite{rebentrost2014quantum}.
To compute the covariance matrix corresponding to the $l^{th}$ layer of the network, we use the following recursive formula to forward propagate the kernel,
\begin{align}
K^l(\mathbf{x},\mathbf{x}^\prime)=\sigma_b^2+\sigma_w^2\mathbb{E}[\phi(z_i^{l-1}(\mathbf{x}))\phi(z_i^{l-1}(\mathbf{x}^\prime))].
\end{align}
For a general non-linear activation function $\phi$, this can only be evaluated with numerical integration. Therefore a complete quantum algorithm for general activation functions is likely to be untraceable. 
Fortunately, there is a useful special case in which only the ReLU activation function, $f(x)=\text{max}(0,x)$, is used on each layer.
In this case, the $l^{th}$ layer covariance function has the following analytical form \cite{lee2017deep}:
\begin{align}
&K^l(\mathbf{x},\mathbf{x}^\prime)\nonumber\\
=&\sigma_b^2+\frac{\sigma_w^2}{2\pi}\sqrt{K^{l-1}(\mathbf{x}^\prime,\mathbf{x}^\prime)K^{l-1}(\mathbf{x},\mathbf{x})}\left(\arcsin(\theta^{l-1}_{\mathbf{x},\mathbf{x}^{\prime}})-(\pi-\theta^{l-1}_{\mathbf{x},\mathbf{x}^{\prime}})\arccos(\theta^{l-1}_{\mathbf{x},\mathbf{x}^{\prime}})\right)
\label{Relu},
\end{align}
where 
\begin{align}
\theta^{l}_{\mathbf{x},\mathbf{x}^{\prime}}
=\arccos\left(\frac{K^{l}(\mathbf{x},\mathbf{x}^\prime)}{\sqrt{K^{l}(\mathbf{x},\mathbf{x})K^{l}(\mathbf{x}^\prime\mathbf{x}^\prime)}}\right).    
\end{align} 
Note that the non-linear functions featured in Eq. \ref{Relu} can be approximated by polynomial series with certain convergence conditions. The factor $K^{l}(x,x)K^{l}(x^\prime,x^\prime)$ represents outer products between the two identical vectors of diagonal entries in $K^{l}$. As such, the computation of Eq. \ref{Relu} can be decomposed into such outer product operations combined with element-wise matrix multiplication. For a $L$-layer infinite width neural network, the formula Eq. \ref{Relu} needs to be evaluated for all positive integer values of $l\le L$.

%For the remaining discussion, we briefly introduce the mathematical formalism of quantum computing. In particular, a ket $\ket{x}$ denotes a column vector $x\in\mathbb{C}^d$ for some dimension $d$, with norm 1. Its complex conjugate is a bra $\bra{x}$. A ket represents a pure quantum state. A quantum computer essentially transforms quantum states into quantum states, and the result of the quantum computation is a quantum state with some desired properties. The density matrix of a pure state is the outer product of ket and the corresponding bra, and it is a positive semidefinite matrix with trace 1. For pure states, the density matrix is an equivalent way of describing a quantum state. In addition, the density matrix allows to describe mixed quantum states, i.e.~statistical ensembles of pure states. For more details on quantum computations, we refer the reader to Ref.~\cite{nielsen2000quantum}.

\paragraph{Applying quantum GP}
Recall that the quantum GP algorithm in Chapter \ref{QGP} computes the mean predictor, 
$\bar{{f}_*}=\mathbf{k}_*^T(K+\sigma_n^2I)^{-1}\mathbf{y}$ and the variance predictor, 
$\mathbb{V}[{f}_*]=k\left(\mathbf{x}_*, \mathbf{x}_*\right)-\mathbf{k}_*^T(K+\sigma_n^2I)^{-1}\mathbf{k}_*$ of a GP posterior, where $(K+\sigma_n^2I)$ is the covariance matrix with Gaussian noise entries of variance $\sigma_n^2$, and $\mathbf{k}_*$ is the row in the covariance matrix that corresponds to the target point for prediction. 
Assuming the oracular access to the matrix elements of $K$, the quantum GP algorithm simulates $(K+\sigma_n^2I)$ as a Hamiltonian acting on an input state, $\ket{\mathbf{b}}$, and performs phase estimation to extract the eigenvalues of $(K+\sigma_n^2I)$. By inverting the eigenvalues in a superposition and performing a controlled-rotation on an ancillary system base on the inverted eigenvalues, the algorithm probabilistically completes a computation of $(K+\sigma_n^2I)^{-1}\ket{\mathbf{b}}$. 
We then use a quantum inner product estimation procedure to obtain a good estimate for $\mathbf{k}_*^T(K+\sigma_n^2I)^{-1}\mathbf{b}$. The encoding state $\ket{\mathbf{b}}$ is chosen to be $\ket{\mathbf{b}}=\ket{\mathbf{y}}$ or $\ket{\mathbf{b}}=\ket{\mathbf{k}_*}$ for computing the mean or variance predictor respectively. To apply the quantum GP algorithm for the Bayesian training of a $L$-layer infinite width neural network, we simply use $\{\mathbf{x}_i\}^n_{i=1}$ and $\mathbf{y}$ to represent the input and output points of the training set of the neural network, and choose the elements of $K$ by evaluating the covariance function $K^L(\mathbf{x},\mathbf{x}^\prime)$. The non-trivial extension to the quantum GP algorithm needed is for coherently evaluating $K^L(\mathbf{x},\mathbf{x}^\prime)$, which we will address in the following Sections \ref{sub: single}, \ref{sub: multi} and \ref{sub: Element-wise}.
It is important to clearly state the assumptions about how the matrix elements of $K^0$ can be accessed. We consider the following two different (but related) models: 
Firstly, we can assume black-box access to the elements of $K^0$. In this model, the Hamiltonian simulation subroutine discussed in Section \ref{sub: BHS} can be directly used in the quantum GP algorithm. Secondly, we can assume that $K^0$ is presented as the quantum density matrix of a qubit system. Multiple copies of such a density matrix allow for a technique inspired by the quantum principle component analysis algorithm~\cite{rebentrost2014quantum}. We will use the first model for the simplest case of a single-layer network and the second model for the multiple-layer deep architecture.

\subsection{Single-layer case}
\label{sub: single}
For the simplest single-layer case, we assume black-box access to the matrix elements of the base case such that we have the oracle $O_{K^0}$ to perform the following mapping,
\begin{align}
O_{K^0} \ket{j,k} \ket{z} \to  \ket{j,k} \ket{z \oplus K^0_{jk}},    
\end{align}
where the matrix elements are denoted as $K^0_{jk}=K^0(\mathbf{x}_j,\mathbf{x}_k)$.
The desired kernel function of Eq. \ref{Relu} can be implemented by direct classical computation on oracle queries. The desired covariance matrix, $K^1$ is then simulated as a Hamiltonian, as discussed in Section \ref{sub: BHS}, in order to construct the controlled unitary operation needed for the quantum GP algorithm.

%using ancillary labelling registers with $\ket{j,j}$, $\ket{k,k}$ and $\ket{j,k}$, as well as an additional register which stores the value of a classical computation step. Specifically,
%this procedure can be represented by the following:
%\begin{equation}
%O_{K^0}\ket{j,j}\ket{k,k}\ket{j,k}\ket{0} \to  \ket{j,j}\ket{k,k}\ket{j,k} \ket{0 \oplus K^1_{jk}}.
%\end{equation}
%As such, with the black-box access to the elements of $K^0$, the desired  covariance matrix, $K^1$ is classically computed and then simulated as a Hamiltonian to construct the controlled unitary operation needed for the quantum GP algorithm.

\subsection{Multi-layer case}
\label{sub: multi}

In the case of multi-layer network architectures, we describe a method to simulate the $l^{th}$ layer kernel matrix as a Hamiltonian.
Our approach is inspired by the quantum principle component analysis algorithm~\cite{rebentrost2014quantum} where the density matrix $\rho$ of a quantum state is treated as a Hamiltonian and used to construct the desired controlled unitary $e^{i t\rho}$ acting on a target quantum state for a time period $t$.
A thorough description of this density matrix-based Hamiltonian simulation procedure is presented in Ref.~\cite{kimmel2017hamiltonian}.
Here we will first give an overview of the quantum method, while the detailed analysis is presented later in Section \ref{sub: Element-wise}.

To apply density matrix-based Hamiltonian simulation using the $l^{th}$ layer covariance matrix, we need to incorporate techniques to compute certain element-wise matrix operations between two density matrices.
It is convenient to define the following:
\begin{align}
S_1=\sum_{ j, k} |j\rangle \langle k|\otimes |j\rangle \langle k|  \otimes  |k\rangle \langle  j |,\\
S_2=\sum_{ j, k} |j\rangle \langle j|\otimes |k\rangle \langle k|  \otimes  |k\rangle \langle  j |.
\end{align}
With an augmented version of the density matrix exponentiation scheme of Ref. \cite{rebentrost2014quantum}, $S_1$ computes the exponential of the Hadamard product of two density matrices, while $S_2$ computes the exponential of the outer product between the diagonal entries of two density matrices.
Specifically, we have
\begin{align}
{\rm tr}_{1,2} \{ e^{- i S_1 \delta} ( \rho_1 \otimes \rho_2 \otimes \sigma ) e^{ i S_1 \delta} \} 
=\exp[-i(\rho_1 \odot \rho_2)\delta]\sigma\exp[i(\rho_1 \odot \rho_2)\delta]+ \mathcal{O}(\delta^2),
\label{element-wise product}
\end{align}
where $\rho_1 \odot \rho_2$ denotes the Hadamard product between $\rho_1$ and $\rho_2$, and ${\rm tr}_{1,2}$ denotes the partial trace over the first and second subsystems. The factor $\delta$ represents a small evolution time.
We also have
\begin{align}
{\rm tr}_{1,2} \{ e^{- i S_2 \delta} ( \rho_1 \otimes \rho_2 \otimes \sigma ) e^{ i S_2 \delta} \} 
=\exp[-i(\rho_1 \oslash \rho_2)\delta]\sigma\exp[i(\rho_1 \oslash \rho_2)\delta]+ \mathcal{O}(\delta^2),
\label{diagonal outer product}
\end{align}
where $\rho_1 \oslash \rho_2$ denotes the outer product between the diagonal entries of $\rho_1$ and $\rho_2$.
The derivation of Eq. \ref{element-wise product} and Eq. \ref{diagonal outer product} are presented in Section~\ref{sub: Element-wise}.
Both $S_1$ and $S_2$ are sparse and hence can be efficiently simulated as Hamiltonians with quantum walk based algorithms \cite{berry2012black,berry2015hamiltonian}.
We then need to make use of some polynomial series in $K^0(x,x^\prime)$ to approximately compute $K^l(x,x^\prime)$. Note that the products involved in this polynomial are the Hadamard product denoted by $\odot$, and the diagonal outer product denoted by $\oslash$. We will denote the polynomial in $K^0$ to the order $N(l)$ which approximates the $l^{th}$ layer kernel function as $P^{N}_{\odot, \oslash}(K^0)$.
By using a generalised $\tilde{S}$ operator which combines the components in $S_1$ and $S_2$, one can implement a total number $N$ of $\odot$ and $\oslash$ operations in arbitrary orders. In Section~\ref{sub: Element-wise}, we will show this simply amounts to summing over the tensor product of the projectors $|j\rangle \langle j|$, $|j\rangle \langle k|$, and $|k\rangle \langle k|$. Similar polynomial series simulation problems were addressed in Refs. \cite{kimmel2017hamiltonian,rebentrost2016quantum}, but the type of product considered in these works was standard matrix multiplication instead of element-wise operations. 

The method described above allows for approximately implementing the operation $e^{i tK^l}\sigma e^{-i tK^l}$, where $\sigma$ is an arbitrary input state which in our case is taken to be $\sigma=\ket{\mathbf{b}}\bra{\mathbf{b}}$. 
Thus given multiple copies of a density matrix which encodes the initial layer covariance matrix, $K^{0}$, the unitary operator, $\exp(-it K^l)$ can be constructed to act on an arbitrary input state, as required by applying the quantum GP algorithm.

%Note that there is a subtle but crucial difference between the single-layer and the multilayer case: while in the training of single-layer networks one needs of a quantum random access memory to perform the oracle queries of the matrix elements of $K^0$, in the multilayer case we substitute this requirement by having access to multiple copies of the quantum state encoding $K^0$. This requirement is much more feasible given current technology since the desired state preparation can be encoded in a quantum circuit and run as many times as needed.

\subsection{Coherent element-wise operations}
\label{sub: Element-wise}
In this section, we give a more formal description of the quantum method to compute the polynomial $P^{N}_{\odot, \oslash}(K^0)$. The main results needed are summarised by the following Lemmas~\ref{lemma: hadamard} and~\ref{lemma: outer}, and Theorem~\ref{theorem: poly}. 

\begin{lemma}[Hadamard product simulation \cite{zhao2018bayesian}]
\label{lemma: hadamard}
Given $\mathcal{O}(t^2/\epsilon)$ copies of $d$-dimensional qubit density matrices, $\rho_1$ and $\rho_2$, let $\rho_1 \odot \rho_2$ denote the Hadamard product between $\rho_1$ and $\rho_2$.
There exists a quantum algorithm to implement the unitary $e^{-i \rho_1 \odot \rho_2 t}$ on a $d$-dimensional qubit input state $\sigma$,
for a time $t$ to accuracy $\epsilon$ in operator norm. 
\end{lemma}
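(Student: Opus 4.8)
The plan is to adapt the density-matrix exponentiation technique of the quantum principal component analysis algorithm~\cite{rebentrost2014quantum} (see also~\cite{kimmel2017hamiltonian}) to the Hadamard product, using the Hermitian operator $S_1=\sum_{j,k}|j\rangle\langle k|\otimes|j\rangle\langle k|\otimes|k\rangle\langle j|$ introduced above. The core of the argument is the single infinitesimal step: for a small time increment $\delta$, expand $e^{-iS_1\delta}(\rho_1\otimes\rho_2\otimes\sigma)e^{iS_1\delta}$ to first order in $\delta$ and take the partial trace over the first two registers. The key algebraic fact to verify, working in the computational basis, is that ${\rm tr}_{1,2}\{S_1(\rho_1\otimes\rho_2\otimes\sigma)\}=(\rho_1\odot\rho_2)\sigma$ and ${\rm tr}_{1,2}\{(\rho_1\otimes\rho_2\otimes\sigma)S_1\}=\sigma(\rho_1\odot\rho_2)$, because the structure $|j\rangle\langle k|\otimes|j\rangle\langle k|$ extracts the product of matrix elements $(\rho_1)_{jk}(\rho_2)_{jk}$, which is precisely the $(j,k)$ entry of $\rho_1\odot\rho_2$. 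Consequently ${\rm tr}_{1,2}\{[S_1,\rho_1\otimes\rho_2\otimes\sigma]\}=[\rho_1\odot\rho_2,\sigma]$, and the traced-out, conjugated state equals $\sigma-i\delta[\rho_1\odot\rho_2,\sigma]+\Ord{\delta^2}$, which in turn equals $e^{-i(\rho_1\odot\rho_2)\delta}\sigma e^{i(\rho_1\odot\rho_2)\delta}+\Ord{\delta^2}$; this establishes Eq.~(\ref{element-wise product}). Here one uses the Schur product bound $\norm{\rho_1\odot\rho_2}\le\norm{\rho_1}\cdot\max_i(\rho_2)_{ii}\le 1$, so the target is a genuine unitary evolution generated by a contraction, and one bounds the $\Ord{\delta^2}$ remainder uniformly in trace norm using $\norm{S_1}=1$ and $\norm{\rho_1\otimes\rho_2\otimes\sigma}_1=1$, obtaining an error at most $c\delta^2$ for an absolute constant $c$.

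Second, I would iterate this step to simulate $e^{-i(\rho_1\odot\rho_2)t}$ for finite $t$. Set $\delta=t/n$ and repeat the step $n$ times, supplying a fresh copy of each of $\rho_1$ and $\rho_2$ at every step (this memoryless structure is what licenses the iteration). Since each step is a CPTP map and the ideal map is a unitary channel, the errors accumulate at most additively by subadditivity of the diamond (equivalently, induced trace) norm under composition, giving a total error at most $n\cdot c\delta^2 = ct^2/n$. Choosing $n=\Theta(t^2/\epsilon)$ drives the total error below $\epsilon$, and $n$ is exactly the number of copies of each $\rho_i$ consumed, matching the $\Ord{t^2/\epsilon}$ claim. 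A short argument is needed to pass from the trace-norm channel bound to the stated operator-norm accuracy of the effective evolution, which follows from the Lipschitz continuity of $X\mapsto e^{-iXt}$ in operator norm together with the above estimates.

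Third, I would check efficiency: $S_1$ acts as $S_1|a\rangle|b\rangle|c\rangle=\delta_{ab}|c\rangle|c\rangle|a\rangle$, hence it is Hermitian, $1$-sparse, and its nonzero locations and values are trivially computable, so $e^{-iS_1\delta}$ can be implemented in time polylogarithmic in $d$ by the sparse Hamiltonian simulation methods of Section~\ref{sub: BHS} (indeed directly, given its simple permutation-like structure). Combining the per-step cost with the $\Ord{t^2/\epsilon}$ repetitions gives an efficient algorithm.

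\emph{Main obstacle.} The delicate part is not the high-level scheme but the uniform control of the $\Ord{\delta^2}$ remainder and its additive accumulation in the correct norm, together with the passage from a trace-norm statement about the composed channel to an operator-norm statement about the simulated unitary $e^{-i(\rho_1\odot\rho_2)t}$; verifying $\norm{\rho_1\odot\rho_2}\le1$ and that no correlations build up across steps (the fresh-copies assumption) are the points that require care. These are the same subtleties present in the original density-matrix exponentiation analysis, so once the single-step Hadamard identity is pinned down the remainder is essentially bookkeeping.
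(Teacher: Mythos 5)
Your proposal is correct and follows essentially the same route as the paper's proof: the same modified swap operator $S_1$, the same first-order expansion and partial-trace identity ${\rm tr}_{1,2}\{S_1(\rho_1\otimes\rho_2\otimes\sigma)\}=(\rho_1\odot\rho_2)\sigma$, and the same $\mathcal{O}(t^2/\epsilon)$ iteration with fresh copies per step. The additional points you flag (the Schur product norm bound, additive error accumulation, and sparsity of $S_1$) are consistent with, and slightly more explicit than, the paper's own treatment.
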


\begin{proof}
The usual $SWAP$ matrix for quantum principal component analysis~\cite{rebentrost2014quantum} is given by
$
S=\sum_{ j, k} |j\rangle \langle k|  \otimes  |k\rangle \langle  j | .
$
Here we take the modified $SWAP$ operator 
$
S_1=\sum_{ j, k} |j\rangle \langle k|\otimes |j\rangle \langle k|  \otimes  |k\rangle \langle  j |.
$
With an arbitrary input state $\sigma$, the following operation can be efficiently approximated for small $\delta$:
\begin{equation}
{\rm tr}_{1,2} \{ e^{- i S_1 \delta} ( \rho_1 \otimes \rho_2 \otimes \sigma ) e^{ i S_1 \delta} \},
\end{equation}
The trace is over the subspaces of $\rho_{1}$ and $\rho_{2}$.  
Expanding to $\mathcal{O}(\delta^2)$ leads to
\begin{align}
&{\rm tr}_{1,2} \{ e^{- i S_1 \delta}( \rho_1 \otimes \rho_2 \otimes \sigma ) e^{ i S_1 \delta} \} \\ \nonumber
=& 1-i  {\rm tr}_{1,2} \{ S_1 ( \rho_1 \otimes \rho_2 \otimes \sigma ) \} \delta + i {\rm tr}_{1,2} \{ ( \rho_1 \otimes \rho_2 \otimes \sigma ) S_1 \}\delta  +\mathcal{O}(\delta^2).
\end{align}
Examining the first $\mathcal{O}(\delta)$ reveals
\begin{eqnarray}
{\rm tr}_{1,2}  \{ S_1 ( \rho_1 \otimes \rho_2 \otimes \sigma )  \} &=& {\rm tr}_{1,2} \{ \sum_{ j, k} |j\rangle \langle k|\otimes |j\rangle \langle k|  \otimes  |k\rangle \langle  j | ( \rho_1 \otimes \rho_2 \otimes \sigma )  \} \nonumber \\
&=& \sum_{ n, m,j,k}\langle n|j\rangle \langle k|\rho_1 |n\rangle \langle m |j\rangle \langle k|  \rho_2 |m\rangle  |k\rangle \langle  j | \sigma  \nonumber \\
&=& \sum_{ j,k} \langle k|\rho_1 |j \rangle  \langle k|  \rho_2 |j\rangle  |k\rangle \langle  j | \sigma  \nonumber \\
&=&   (\rho_1 \odot \rho_2 ) \sigma.
\end{eqnarray}
In the same manner we have
\begin{eqnarray}
{\rm tr}_{1,2}  \{  ( \rho_1 \otimes \rho_2 \otimes \sigma )S_1  \} &=& \sigma (\rho_1 \odot \rho_2 ).
\end{eqnarray}
Thus in summary, we have shown that
\begin{align}
{\rm tr}_{1,2} \{ e^{- i S_1 \delta} ( \rho_1 \otimes \rho_2 \otimes \sigma ) e^{ i S_1 \delta} \}=
\sigma -i [(\rho_1 \odot \rho_2 ) ,\sigma] \delta + \mathcal{O}(\delta^2).
\end{align}
The above is equivalent to applying the unitary $\exp[-i(\rho_1 \odot \rho_2)\delta]$ to $\sigma$ up to $\mathcal{O}(\delta^2)$:
\begin{align}
&\exp[-i(\rho_1 \odot \rho_2)\delta]\sigma\exp[i(\rho_1 \odot \rho_2)\delta]\nonumber\\
=&[I-i(\rho_1 \odot \rho_2)\delta+\mathcal{O}(\delta^2)]\sigma[I+i(\rho_1 \odot \rho_2)\delta+\mathcal{O}(\delta^2)]\nonumber\\
=&\sigma -i[(\rho_1 \odot \rho_2 ),\sigma]\delta + \mathcal{O}(\delta^2).
\end{align}
Comparing the above two equations validates Eq. \ref{element-wise product}. Note that if the small time parameter is taken to be $\delta=\epsilon/t $, and the above procedure is implemented $\mathcal{O}(t^2/\epsilon)$ times, the overall effect amounts to implementing the desired operation, $e^{-i\rho t}\sigma e^{i\rho t}$ up to an error $\mathcal{O} (\delta^2 t^2/\epsilon )= \mathcal{O} (\epsilon )$, while consuming $\mathcal{O} (t^2/\epsilon )$ copies of $\rho_1$ and $\rho_2$.
This concludes the proof of Lemma~\ref{lemma: hadamard}.    
\end{proof}
Note that an alternative approach for Hadamard product simulation is described in \cite{limingthesis}, where the input are given as Hamiltonian on the exponents of unitary operators, rather than density matrices as discussed here.

\begin{lemma}[Diagonal outer product simulation \cite{zhao2018bayesian}]
\label{lemma: outer}
Given $\mathcal{O}(t^2/\epsilon)$ copies of $d$-dimensional qubit density matrices, $\rho_1$ and $\rho_2$, let $\rho_1 \oslash \rho_2$ denote the outer product between the diagonal entries of $\rho_1$ and $\rho_2$.
There exists a quantum algorithm to implement the unitary $e^{-i \rho_1 \oslash \rho_2 t}$ on a $d$-dimensional qubit input state, $\sigma$,
for a time $t$ to accuracy $\epsilon$ in operator norm. 
\end{lemma}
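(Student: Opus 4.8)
The plan is to follow the proof of Lemma~\ref{lemma: hadamard} line by line, with the modified swap $S_1$ replaced everywhere by $S_2=\sum_{j,k}|j\rangle\langle j|\otimes|k\rangle\langle k|\otimes|k\rangle\langle j|$. The operator $S_2$ is sparse, so by the quantum-walk Hamiltonian-simulation results of \cite{berry2012black,berry2015hamiltonian} the short-time evolution $e^{-iS_2\delta}$ can be implemented efficiently. First I would apply $e^{-iS_2\delta}$ to $\rho_1\otimes\rho_2\otimes\sigma$, trace out the first two registers, and expand to second order in the small time $\delta$:
\begin{align}
{\rm tr}_{1,2}\{e^{-iS_2\delta}(\rho_1\otimes\rho_2\otimes\sigma)e^{iS_2\delta}\}
=\sigma-i\,{\rm tr}_{1,2}\{S_2(\rho_1\otimes\rho_2\otimes\sigma)\}\delta
+i\,{\rm tr}_{1,2}\{(\rho_1\otimes\rho_2\otimes\sigma)S_2\}\delta+\mathcal{O}(\delta^2).
\end{align}

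The heart of the argument is the evaluation of the two first-order terms. Writing $S_2(\rho_1\otimes\rho_2\otimes\sigma)=\sum_{j,k}(|j\rangle\langle j|\rho_1)\otimes(|k\rangle\langle k|\rho_2)\otimes(|k\rangle\langle j|\sigma)$ and carrying out the partial trace over the first two subsystems gives
\begin{align}
{\rm tr}_{1,2}\{S_2(\rho_1\otimes\rho_2\otimes\sigma)\}
=\sum_{j,k}\langle j|\rho_1|j\rangle\,\langle k|\rho_2|k\rangle\,|k\rangle\langle j|\,\sigma
=(\rho_1\oslash\rho_2)\,\sigma,
\end{align}
since $\sum_{j,k}(\rho_1)_{jj}(\rho_2)_{kk}\,|k\rangle\langle j|$ is exactly the outer product of the diagonal vectors of $\rho_1$ and $\rho_2$; the analogous computation yields ${\rm tr}_{1,2}\{(\rho_1\otimes\rho_2\otimes\sigma)S_2\}=\sigma\,(\rho_1\oslash\rho_2)$ with the same matrix, mirroring Eq.~\ref{element-wise product}. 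Substituting back,
\begin{align}
{\rm tr}_{1,2}\{e^{-iS_2\delta}(\rho_1\otimes\rho_2\otimes\sigma)e^{iS_2\delta}\}
=\sigma-i[(\rho_1\oslash\rho_2),\sigma]\,\delta+\mathcal{O}(\delta^2)
=e^{-i(\rho_1\oslash\rho_2)\delta}\,\sigma\,e^{i(\rho_1\oslash\rho_2)\delta}+\mathcal{O}(\delta^2),
\end{align}
which establishes Eq.~\ref{diagonal outer product}. I would then conclude exactly as in Lemma~\ref{lemma: hadamard}: choosing $\delta=\epsilon/t$ and iterating the single-step map $\mathcal{O}(t^2/\epsilon)$ times composes to $e^{-i(\rho_1\oslash\rho_2)t}\sigma\,e^{i(\rho_1\oslash\rho_2)t}$ with total error $\mathcal{O}(\delta^2\cdot t^2/\epsilon)=\mathcal{O}(\epsilon)$ in operator norm, while consuming $\mathcal{O}(t^2/\epsilon)$ copies each of $\rho_1$ and $\rho_2$.

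I do not expect a deep obstacle, since the proof is structurally a transcription of the Hadamard-product case; the points that need genuine care are (i) checking that $\rho_1\oslash\rho_2$ is Hermitian so that $e^{-i(\rho_1\oslash\rho_2)t}$ really is a unitary — the diagonal entries of a density matrix are real, so $\rho_1\oslash\rho_2$ is real, and it is symmetric (hence Hermitian) in the regime actually used in the kernel forward-propagation, where the two factors coincide, $\rho_1=\rho_2$; (ii) confirming that $S_2$, although not itself a swap, is still sparse enough to be simulated in $\mathcal{O}(\text{polylog})$ time via \cite{berry2012black,berry2015hamiltonian}; and (iii) the standard but slightly delicate bookkeeping that the per-step $\mathcal{O}(\delta^2)$ errors accumulate only linearly in the number of steps, which follows from the triangle inequality together with the unitary invariance of the operator norm.
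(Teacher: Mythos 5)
Your proposal is correct and is essentially the paper's own proof: the paper establishes Lemma~\ref{lemma: outer} precisely by re-indexing $S_1$ into $S_2$ and invoking the argument of Lemma~\ref{lemma: hadamard} verbatim, and your explicit partial-trace computation showing ${\rm tr}_{1,2}\{S_2(\rho_1\otimes\rho_2\otimes\sigma)\}=(\rho_1\oslash\rho_2)\sigma$ is exactly the step the paper leaves implicit. Your side remark that $\rho_1\oslash\rho_2$ is Hermitian only when the diagonal vectors of $\rho_1$ and $\rho_2$ are proportional (as in the intended application $\rho_1=\rho_2$) is a genuine caveat the paper does not state, and is worth keeping.
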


\begin{proof}
By simply re-indexing the $S_1$ operator, one obtains $S_2=\sum_{ j, k}|j\rangle \langle j|\otimes |k\rangle \langle k|  \otimes  |k\rangle \langle  j |$. Analogously with the proof of Lemma~\ref{lemma: hadamard}, we have
\begin{align}
{\rm tr}_{1,2} \{ e^{- i S_2 \delta} ( \rho_1 \otimes \rho_2 \otimes \sigma ) e^{ i S_1 \delta} \}
=
\sigma -i [(\rho_1 \oslash \rho_2 ) ,\sigma] \delta + \mathcal{O}(\delta^2).
\end{align}
The above equation can be compared with
\begin{align}
\exp[-i(\rho_1 \oslash \rho_2)\delta]\sigma\exp[i(\rho_1 \oslash \rho_2)\delta]
=\sigma -i[(\rho_1 \oslash \rho_2 ),\sigma]\delta +\mathcal{O}(\delta^2).
\end{align}
The equivalence up to the linear term in $\delta$ validates of Eq. \ref{diagonal outer product}. As with Lemma~\ref{lemma: hadamard}, with $\mathcal{O}(t^2/\epsilon)$ repetitions consuming $\mathcal{O}(t^2/\epsilon)$ copies of $\rho_1$ and $\rho_2$, the desired $e^{-i\rho t}\sigma e^{i\rho t}$ can be implemented up to error $\epsilon$. 
\end{proof}
Given the density matrix $\rho=K^0$ which encodes the base case covariance matrix, we approximate the non-linear kernel function at $l^{th}$ layer with the order $N$ polynomial, $P^{N}_{(\odot, \oslash)}(\rho)=\sum_r^N c_r \rho ^{ (\odot, \oslash)r}$.
Here the label $(\odot, \oslash)$ indicates that we work in the setting where the types of product operation involved for taking the $r^{th}$ power of $\rho$ are arbitrary combinations of Hadamard products and diagonal outer products. Now we are in the position of presenting the main theorem required to implement the kernel function at the $l^{th}$ layer. 

\begin{theorem}[Element-wise polynomial simulation \cite{zhao2018bayesian}]\label{theorem: poly} 
Given $\mathcal{O}(N^2 t^2/\epsilon)$ copies of the $d$-dimensional qubit density matrix $\rho$, and the order-$N$ polynomial of Hadamard and diagonal outer products, $P^{N}_{\odot, \oslash}(\rho)=\sum_r^N c_r \rho ^{ (\odot, \oslash)r}$, 
there exists a quantum algorithm to implement the unitary $e^{-i P^{N}_{(\odot, \oslash)}(\rho) t}$ on a $d$-dimensional qubit input state $\sigma$
for a time $t$ to accuracy $\epsilon$ in operator norm. 
\end{theorem}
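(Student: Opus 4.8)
The plan is to bootstrap from the two single‑product building blocks, Lemma~\ref{lemma: hadamard} and Lemma~\ref{lemma: outer}, in three stages: first generalise the modified‑swap trick from two copies of $\rho$ to an arbitrary length‑$r$ word in $\odot$ and $\oslash$ (a single monomial $\rho^{(\odot,\oslash)r}$); then compose the resulting monomial exponentials into the full polynomial by a product formula; and finally account for the error and the number of copies of $\rho$ consumed.

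\textbf{Stage 1 (monomial simulation).} Fix a monomial $\rho^{(\odot,\oslash)r}$ determined by a word $w$ over $\{\odot,\oslash\}$ of length $r-1$. I would use one register for each of the $r$ copies of $\rho$ together with the target register carrying $\sigma$, and define a generalised operator $\tilde{S}_w$ as a sum over index tuples of a tensor product of the elementary blocks $|j\rangle\langle j|$, $|j\rangle\langle k|$ and $|k\rangle\langle k|$, with the placement of each block dictated by the corresponding letter of $w$: a $\odot$ letter contributes a $|j\rangle\langle k|$ factor in the style of $S_1$, a $\oslash$ letter contributes a diagonal pair in the style of $S_2$, and the target register always carries the factor $|k\rangle\langle j|$. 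Expanding $\mathrm{tr}_{1,\dots,r}\{e^{-i\tilde{S}_w\delta}(\rho^{\otimes r}\otimes\sigma)e^{i\tilde{S}_w\delta}\}$ to $\mathcal{O}(\delta^2)$ and performing the same index contraction as in the proofs of the two Lemmas should yield $\sigma - i[\rho^{(\odot,\oslash)r},\sigma]\delta + \mathcal{O}(\delta^2)$, i.e. one infinitesimal step of $e^{-i\rho^{(\odot,\oslash)r}t}$. Since $\tilde{S}_w$ is a product of Kronecker deltas it has exactly one nonzero entry per row and column, hence is $1$‑sparse with efficiently computable entries and can itself be simulated in $\mathrm{polylog}(d)$ time by the sparse‑Hamiltonian algorithms of \cite{berry2012black,berry2015hamiltonian}. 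Taking $\delta=\epsilon/t$ and repeating the infinitesimal step $\mathcal{O}(t^2/\epsilon)$ times then implements $e^{-i\rho^{(\odot,\oslash)r}t}$ to accuracy $\mathcal{O}(\epsilon)$ in operator norm, consuming $\mathcal{O}(r t^2/\epsilon)$ copies of $\rho$ (one block of $r$ copies per step).

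\textbf{Stage 2 (polynomial composition and copy count).} To realise $e^{-iP^{N}_{(\odot,\oslash)}(\rho)t}$ with $P^{N}_{(\odot,\oslash)}(\rho)=\sum_{r=1}^N c_r\rho^{(\odot,\oslash)r}$, I would apply the first‑order Trotter decomposition $e^{-iP^{N}(\rho)t}\approx\big(\prod_{r=1}^N e^{-ic_r\rho^{(\odot,\oslash)r}t/m}\big)^m$ and simulate each factor by Stage 1. Because each $\rho^{(\odot,\oslash)r}$ is a Hadamard / diagonal‑outer product of density matrices it satisfies $\|\rho^{(\odot,\oslash)r}\|\le 1$ (Schur's theorem for $\odot$; a direct bound on the diagonals for $\oslash$), and the $c_r$ are fixed coefficients of a convergent series, so the product‑formula error is $\mathcal{O}(t^2/m)$ and $m=\mathcal{O}(t^2/\epsilon)$ slices suffice for accuracy $\epsilon/2$. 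Simulating each factor per slice to accuracy $\epsilon/(2Nm)$ costs, by Stage 1, $\mathcal{O}(r t^2/(m\epsilon))$ copies, so that summing over the $m$ slices and over $r=1,\dots,N$ the total number of copies of $\rho$ is $\sum_{r=1}^N\mathcal{O}(r t^2/\epsilon)=\mathcal{O}(N^2 t^2/\epsilon)$, with overall error $\le\epsilon$ in operator norm — exactly the statement of the theorem.

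\textbf{Expected main obstacle.} The crux is Stage 1: pinning down $\tilde{S}_w$ so that the partial trace reproduces the \emph{nested} word $\rho^{(\odot,\oslash)r}$ in the correct order, since the index pattern linking consecutive registers is delicate; one must check both that the $\mathcal{O}(\delta)$ cross terms really collapse to the single commutator $[\rho^{(\odot,\oslash)r},\sigma]$ and that $\tilde{S}_w$ stays $1$‑sparse so the inner sparse‑Hamiltonian simulation is not itself a bottleneck. A secondary subtlety is keeping the error from composing many non‑commuting monomial exponentials from inflating the copy count beyond $\mathcal{O}(N^2 t^2/\epsilon)$, which rests on the uniform bound $\|\rho^{(\odot,\oslash)r}\|\le 1$ and on the convergence of the coefficient series.
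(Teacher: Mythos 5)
Your proposal follows essentially the same route as the paper's proof: a generalised swap-type operator $\tilde S$ built from the blocks $\ket{j}\bra{j}$, $\ket{j}\bra{k}$, $\ket{k}\bra{k}$, $\ket{k}\bra{j}$ to simulate each monomial $\rho^{(\odot,\oslash)r}$ via the $\mathcal{O}(\delta)$ commutator expansion, followed by a Lie--Trotter product formula to assemble the polynomial, with the same per-monomial copy cost $\mathcal{O}(r t^2/\epsilon)$ summing to $\mathcal{O}(N^2 t^2/\epsilon)$. The only cosmetic difference is that the paper specifies $\tilde S^{(r)}$ by an explicit recursion (in which a $\oslash$ step collapses all preceding registers to $(\ket{j}\bra{j})^{\otimes r}$), which is exactly the resolution of the ordering subtlety you flag as the main obstacle.
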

\begin{proof}
We first address how to implement the unitary $e^{-i \rho ^{ (\odot, \oslash)r } t}$. Intuitively, this can be achieved by constructing a generalized $\tilde{S}$ operator with tensor product components of $|j\rangle \langle j|$, $|j\rangle \langle k|$, $|k\rangle \langle k|$ and $|k\rangle \langle j|$, corresponding to the contributing elements in the matrices in each term. We give a recursive procedure to determine $\tilde{S}$:

In the case of $r=2$, we have already shown in Lemma~\ref{lemma: hadamard} and Lemma~\ref{lemma: outer} the desired operation can be achieved using $S_1$ and $S_2$ corresponding to the $\odot$ and $\oslash$ cases respectively. Thus we can write the base case of the recursive procedure as 
\begin{align}
\tilde{S}^{(r=2)}= \sum_{j,k} T^{(2)}(j,k)\otimes |k\rangle \langle j|,    
\end{align}
where $T^{(2)}(j,k)$ denotes the possible combinations of tensor products, $|j\rangle \langle k|\otimes |j\rangle \langle k|$ or $|j\rangle \langle j|\otimes |k\rangle \langle k|$. Now consider the $r=3$ case, the additional factor of $\rho$ will come in two possible cases. If it comes as a $\odot$ product, the updated operator $\tilde{S}^{(r=3)}_\odot$ is simply given by 
\begin{align}
\tilde{S}^{(r=3)}_\odot = \sum_{j,k} T^{(2)}(j,k)\otimes |j\rangle \langle k| \otimes |k\rangle \langle j|.    
\end{align}
If the additional $\rho$ comes in as a $\oslash$ product, the updated operator $\tilde{S}^{(r=3)}_\oslash$ is instead given by 
\begin{align}
\tilde{S}^{(r=3)}_\oslash = \sum_{j,k} |j\rangle \langle j| \otimes |j\rangle \langle j|  \otimes |k\rangle \langle k| \otimes |k\rangle \langle j|.    
\end{align}
This can be seen by observing that the contributing elements to a $\oslash$ product are exclusively diagonal, which we use $|j\rangle \langle j|$ to pick up. Any off-diagonal information about the previous element-wise product operations is irrelevant. In general, if we have the $r^{th}$ order $\tilde{S}$ operator given by
\begin{align}
\tilde{S}^{(r)} = \sum_{j,k} T^{(r)}(j,k)\otimes |k\rangle \langle j|,    
\end{align}
the operators $\tilde{S}^{(r+1)}_\odot$ and $\tilde{S}^{(r+1)}_\oslash$ can be generated as follows:
\begin{align}
\tilde{S}^{(r+1)}_\odot =& \sum_{j,k} T^{(r)}(j,k)\otimes |j\rangle \langle k| \otimes |k\rangle \langle j|,  \\
\tilde{S}^{(r+1)}_\oslash =& \sum_{j,k} (|j\rangle \langle j|)^{\otimes r}  \otimes |k\rangle \langle k| \otimes |k\rangle \langle j|.
\end{align}
We have shown a recursive procedure to construct $\tilde S^{(r)}$ up to $r=N$ such that 
\begin{align}
    {\rm tr}_{1...r} \{ e^{- i \tilde S^{(r)} \delta} ( \rho^{\otimes r} \otimes \sigma ) e^{ i \tilde S^{(r)} \delta} \} 
=\exp[-i \rho ^{ (\odot, \oslash)r }\delta]\sigma\exp[i \rho ^{ (\odot, \oslash)r }\delta]+ \mathcal{O}(\delta^2),
\end{align}
for a small evolution $\delta$. Analogously with Lemma~\ref{lemma: hadamard} and Lemma~\ref{lemma: outer}, with $\mathcal{O}(t^2/\epsilon)$ repetitions consuming $\mathcal{O}(rt^2/\epsilon)$ copies of $\rho$, the desired 
\begin{align}
\exp[-i \rho ^{ (\odot, \oslash)r }t]\sigma\exp[i \rho ^{ (\odot, \oslash)r }t]    
\end{align}
can be implemented up to an $\epsilon$ error. 
Finally one makes use of the Lie product formula for summing the terms in the polynomial     \cite{suzuki1992general,childs2003exponential,wiebe2010higher}:
\begin{align}
\mathrm{e}^{i\delta ( A + B)+ \mathcal{O} (\delta ^2/m)}=(\mathrm{e}^{i\delta A/m}\mathrm{e}^{i \delta B/m})^m,
\end{align}
where $A$ and $B$ are taken to different terms in $P^{N}_{\odot, \oslash}(\rho)=\sum_r^N c_r \rho ^{ (\odot, \oslash)r}$, and the factors $c_r$ simply amount to multiplying the $S^{(r)}$ matrices with the respective coefficients. The parameter $m$ can be chosen to further suppress the error by repeating the entire procedure. However, for the purpose of implementing 
$
\mathrm{e}^{-i P^{N}_{(\odot, \oslash)}(\rho) t}\sigma \mathrm{e}^{i P^{N}_{(\odot, \oslash)}(\rho) t} 
$ 
to our desired accuracy $\epsilon$, $\mathcal{O}(N^2 t^2/\epsilon)$ copies of $\rho$ are required. The quadratic dependency in the order of the polynomial, $N^2$ stems from implementing the unitary $\exp[-i \rho ^{ (\odot, \oslash)r }t]$ up to $r=N$, each consuming $\mathcal{O}(Nt^2/\epsilon)$ copies as previously argued. 
\end{proof}

\section{Experiments}
%The central part of the algorithm described in Section~\ref{sec:algorithm} is the intricate quantum protocol of matrix inversion for computing the predictors in Eqs.~Eq. \ref{eq:meanpred},~Eq. \ref{eq:variancepred}.
%This protocol~\cite{Harrow2009a} is probabilistic, meaning that it only succeeds conditioned on obtaining specific results after measuring specific qubits in the protocol. Therefore, it is not assured that the protocol will succeed in a particular run, and it has to be repeatedly performed until it succeeds in obtaining the correct solution. Moreover, computations on a real quantum computer are subject to imprecisions in the gates applied to the qubits, readout errors and losses of coherence in the state of the system.
%Therefore, when thinking about a realistic application of the quantum Bayesian algorithm, the important questions to ask are how experimentally feasible it is, and how far we are from running it on real quantum computers.
%With this goal in mind, 

We have performed the following two sets of experiments to demonstrate the Hermitian matrix inversion component of the quantum GP algorithm:
\begin{enumerate}
    \item Simulations of the quantum matrix inversion on quantum virtual machines, the classical simulators of Rigetti's Forest API \cite{smith2016practical} 
%and the IBM's QISKit \cite{cross2017open} 
with analysis of varying noise models' impacts on the outputs. 
    
    \item A small-scale ($2\times 2$) implementation of quantum matrix inversion in both PyQuil, run on Rigetti's Quantum Processing Unit (QPU), and in IBM's QISKit software stack, run on IBM's Quantum Experience \cite{cross2017open}.
\end{enumerate}
The PyQuil framework provides advanced gate decomposition features that allow for arbitrary unitary operations on a multi-qubit quantum state. The simulated noise models of the Rigetti's quantum virtual machine allows for an analysis of the expected accuracy and computational overhead of actual quantum implementations. 
QISKit also provides a noisy classical simulator, which we use to compare the performance of the quantum matrix inversion algorithm on the real QPU against simulations with realistic noise models.
The quantum processing units we use for actual implementations are IBM's 16-qubit Rueschlikon (IBMQX5)~\cite{wang2018ibm} and Rigetti's 8-qubit 8Q-Agave.
While the numbers of available qubits in both cases are higher than the number required for the implementation (a total of six for the $2\times 2$ matrix inversion), the depth requirement of the circuit grows significantly for larger matrices.

\subsection{Simulations on a quantum virtual machine}
Here we present the results from the simulations conducted with Rigetti's quantum virtual machine. 
We have performed two sets of experiments to analyse the effect of different types of noise on the algorithm. Firstly, we restrict to the simplest non-trivial case of inverting a $2\times 2$ matrix which is chosen to be $A=\frac{1}{2}\begin{pmatrix} 3 & 1 \\ 1 & 3 \end{pmatrix}$ with the problem-specific circuit in Ref.~\cite{cao2012quantum}. The circuit involved is significantly shallower than the one required by the full algorithm, which is described in Ref.~\cite{cao2013quantum}, making it more practically viable to implement on current and near-term quantum computers due to its reduced depth.
Secondly, we simulate the full quantum matrix inversion algorithm \cite{Harrow2009a,cao2013quantum}.
This requires a large number of ancillary qubits for the computation of the reciprocals of the eigenvalues. We will simulate the inversion of a $4\times 4$ matrix with four bits of precision.

%which is the largest example that could fit on the state-of-the-art Rigetti QPU.

We work with two noise models:
The first one, known as the ``gate noise'', applies a Pauli $X$ operator
with a certain probability on each qubit after every gate application.
The second one, known as the ``measurement noise'', applies a Pauli $X$ operator with certain probability only on every qubit that is measured before the measurement takes place. As such, the measurement noise can also be interpreted as a readout error.
\begin{figure*}[h]
    \centering
    \subfigure[]{
        \includegraphics[width=0.48\textwidth]{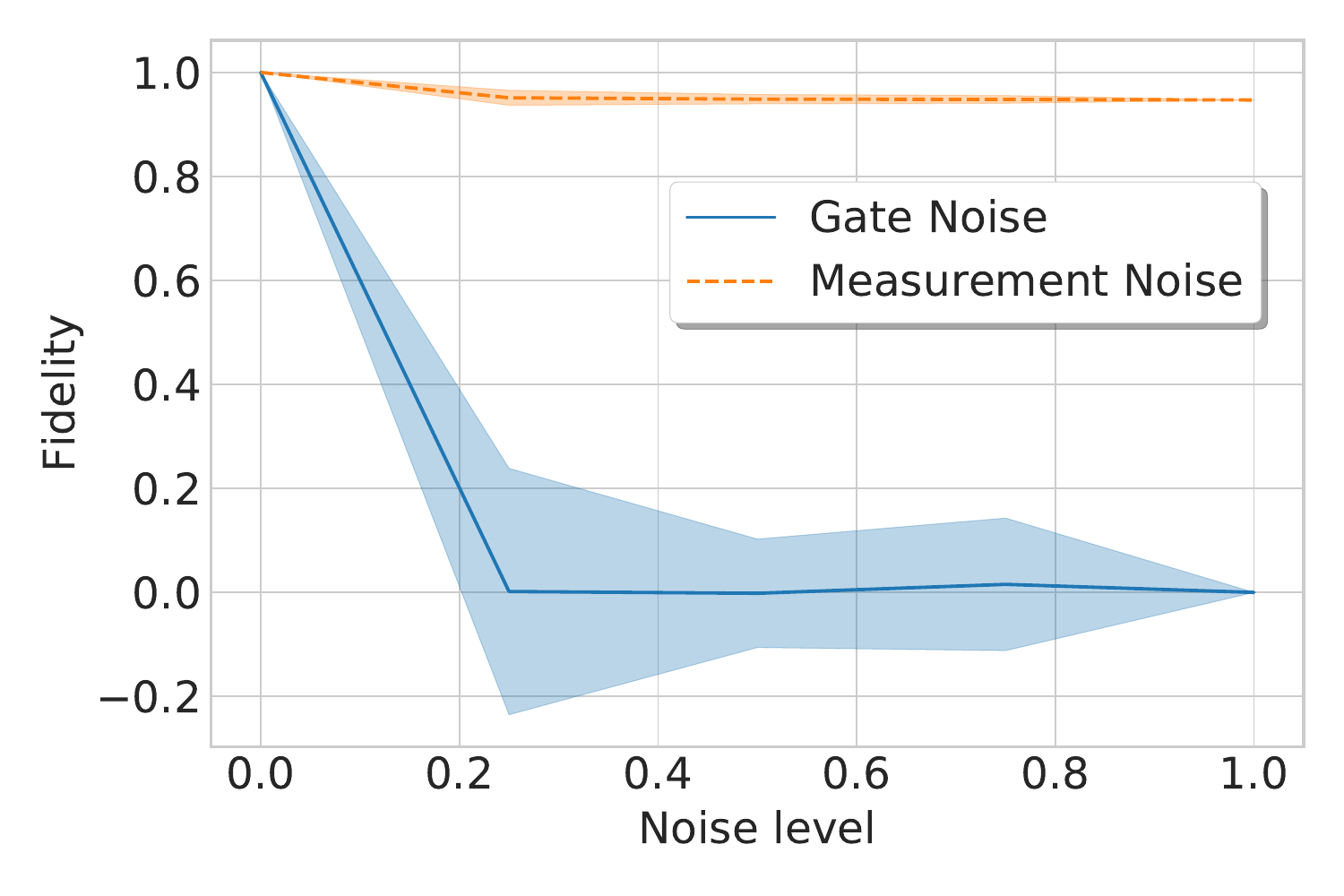}
    }
    \subfigure[]{
        \includegraphics[width=0.48\textwidth]{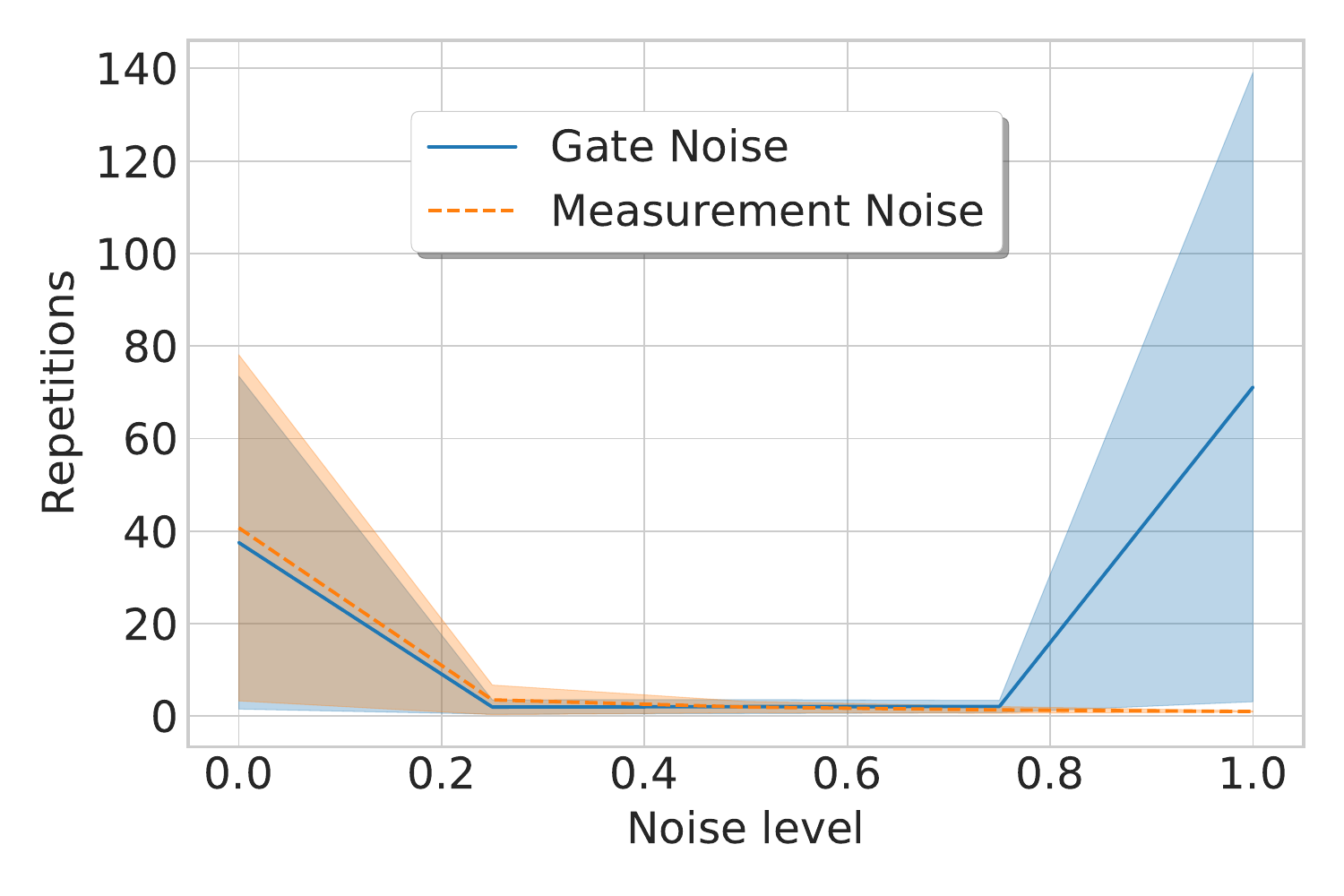}
    }
    \caption{Simulated gate and measurement noise on a specialised circuit for inverting $A$. (a) The fidelity shows the overlap with the expected correct state after the computation. A zero fidelity means the output state is orthogonal to the correct solution, while a unit fidelity means the output state is the correct result. (b) The number of repetitions indicates the average number required to execute the probabilistic program before it succeeds.}
    \label{2by2simulated}
\end{figure*}

%The circuits we implement have a much larger number of gates ($\sim$20 for the $2\times 2$ reduced version, increasing for the increasing size of the matrix being inverted) than measurements (just one, that which certifies the success of the eigenvalue inversion). This is the reason why in all the experiments we run we observe that the gate noise has a stronger impact on the results than the measurement noise.

The simulation results of the quantum inversion of the $2 \times 2$ matrix $A$ is presented in Figure \ref{2by2simulated}. 
We analyse the following two critical factors, namely the fidelity between the expected result and the simulated output, given that inversion has succeeded, and the average repetition of the coherent part of the algorithm needed to obtain a successful run. 
Note that in our noisy setting, success in the post-selection does not guarantee the correctness of the output.
Our results show that measurement noise has a smaller impact on the result than gate noise which for reasonably low noise levels already renders the output state orthogonal to the expected result.
Interestingly, as the noise level increases, the average number of repetitions decreases. 

%converging to unity for a wide range of noise levels. This is because the qubit storing the information about the success of the protocol is also affected by noise, so it can randomly flip without faithfully indicating the outcome of the computation. 

%This can be seen by comparing the gate noise curves in Figure \ref{2by2simulated} (a) and (b), since in the low repetitions domain, the fidelity is also low. In the case of the measurement noise, given that the noise only affects the flag qubit, the state of the remaining qubits is not perturbed, and the fidelity is notably higher.

The simulation results of general quantum matrix inversion algorithm on a random $4\times 4$ matrix is presented in Figure \ref{4by4simulated}.
We see that the output's sensitive to noise has increased as the circuit involved became deeper.
However, the noise level for which the output reaches zero fidelity is approximately the same in both the $2\times 2$ and $4\times 4$ cases, and it would be interesting to see whether it remains constant for larger instances.
The simulation still shows better robustness to measurement noise, but with its effect appearing to be stronger compared with the problem-specific algorithm of Figure \ref{2by2simulated}.
As before in the $2\times 2$ case, measurement noise introduces bit flips to registers storing measurement results, which eventually leads to an apparent low number of repetitions, but at the expense of lower fidelities with the expected output. 
%This effect is due to the fact that a larger number of gate operations are performed after the measurements than in the restricted algorithm.
\begin{figure*}[h]
    \centering
    \subfigure[]{
        \includegraphics[width=0.48\textwidth]{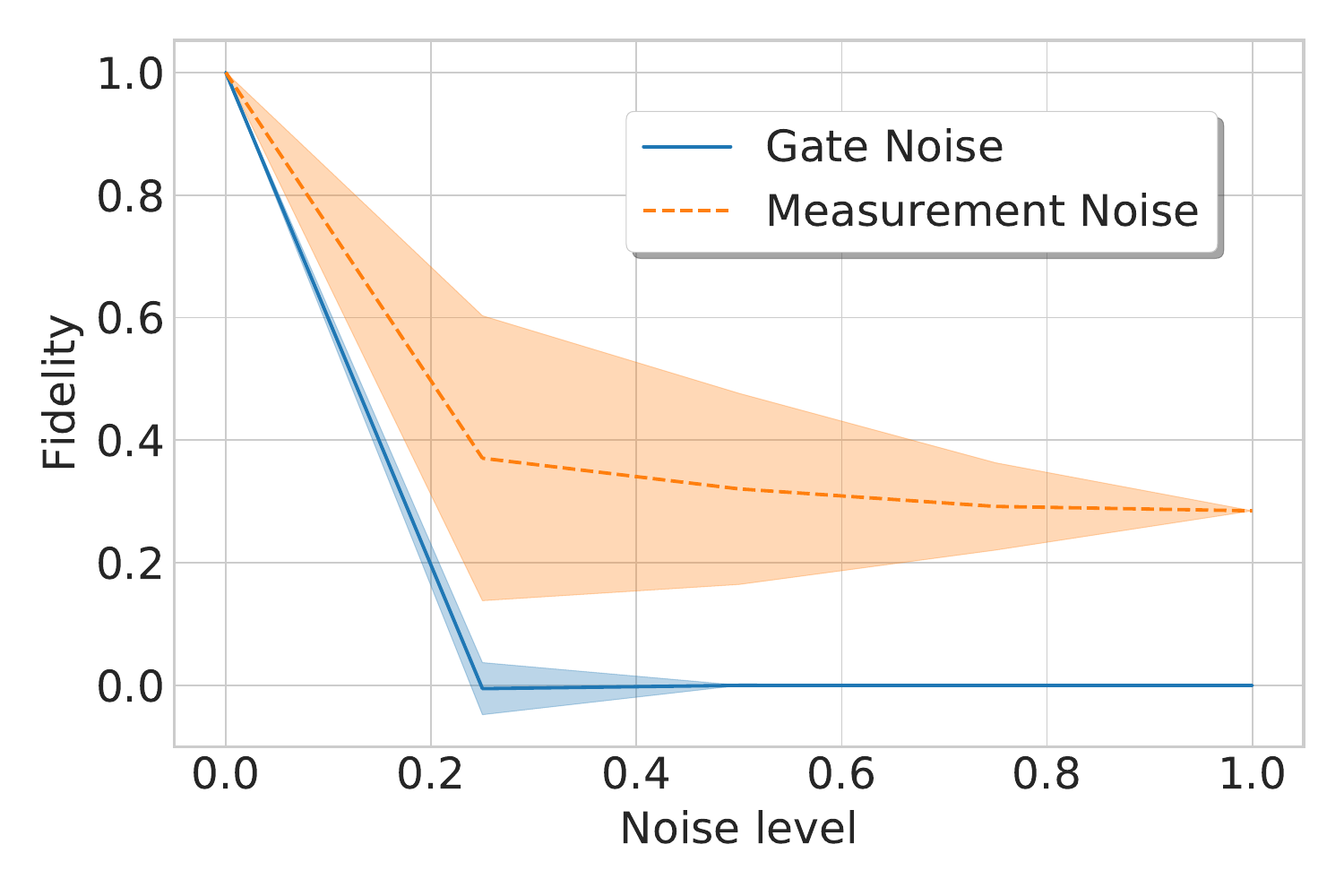}
    }
    \subfigure[]{
        \includegraphics[width=0.48\textwidth]{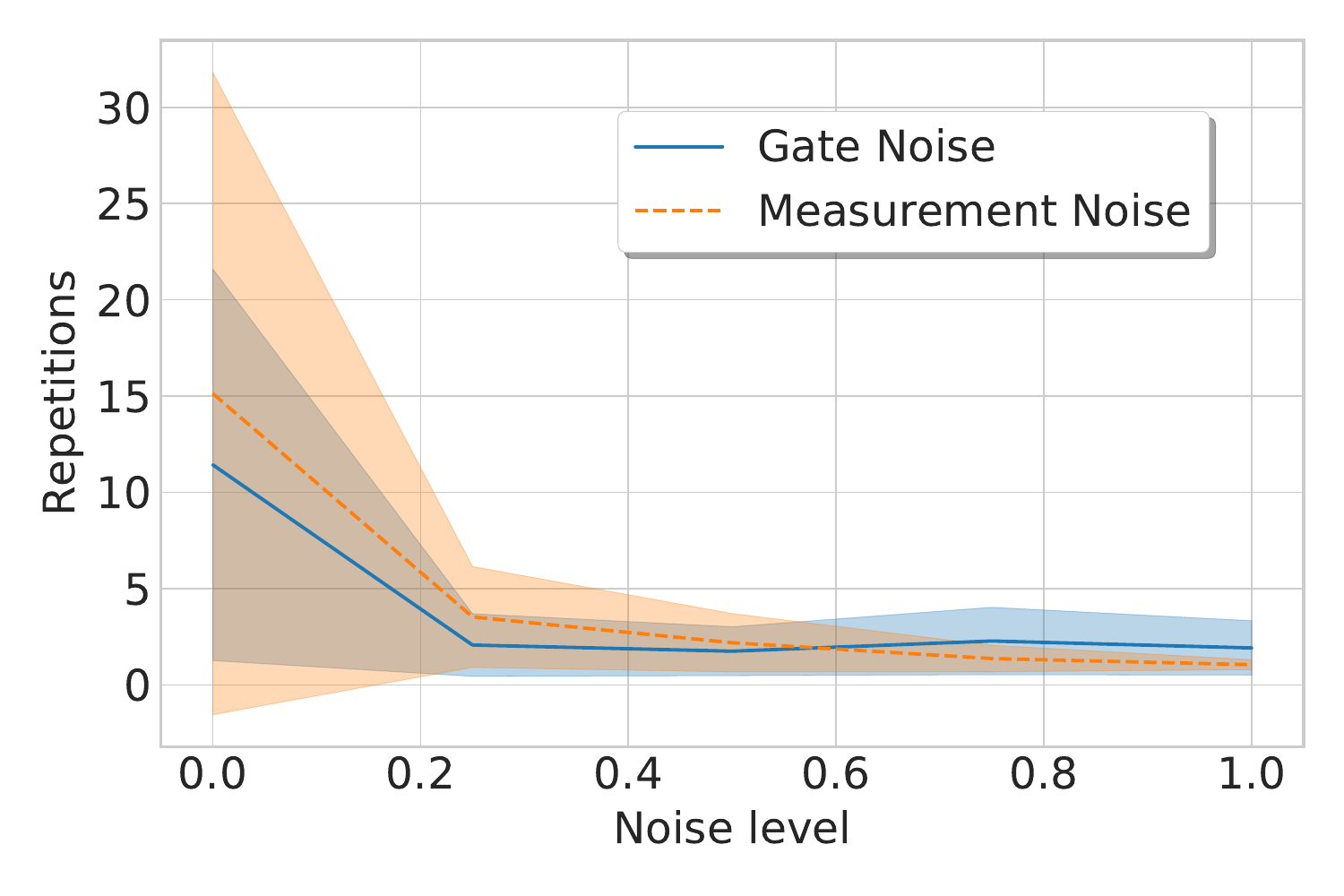}
    }
    \caption{Simulated gate and measurement noise on the generic circuit for inverting a $4\times 4$ matrix with four bits of precision on eigenvalues.}
    \label{4by4simulated}
\end{figure*}

\subsection{Implementations on quantum processing units}
\label{actualqpu}

In this section, we implement the restricted $2\times 2$-matrix inversion algorithm with two real quantum processors.
We have chosen to implement a restricted version of the algorithm due to the limitations of the currently available hardware with respect to qubit numbers, qubit-qubit connectivity, and coherence times.
%The restricted algorithm can be implemented with a much simpler circuit than the general one, resulting in about 20 gates for the full protocol~\cite{cao2012quantum}.
Note that one does not have direct access to the complete information of the output state, but only samples of measurement results. To gauge the correctness of the output, we will perform a SWAP test\cite{gottesman2001quantum, Liming} with the expected output encoded in auxiliary qubits, and use a flag qubit to indicate a successful run of the test. With multiple runs, the figure of merit is the probability of success, $P(\textrm{success})$, which can then be related to the fidelity by $\mathcal{F}=|2P(\textrm{success})-1|$. 
%This success probability is not to be confused with the probability that the eigenvalue inversion subroutine succeeds which has been already studied in Figs.~\ref{2by2simulated}(b) and ~\ref{4by4simulated}(b),

We have implemented the restricted matrix inversion algorithm on both the Rigetti's 8Q-Agave and the IBM's IBMQX5 quantum processing units. The IBM QISKit software~\cite{cross2017open} also provides a classical simulator to run noisy experiments, and we use these to benchmark the performance of the runs on the real chips. As with simulations in Rigetti's software stack, we expect the measurement noise to have a smaller effect than the gate noise. 
Note that the flag qubit of the swap test is also subject to readout error under the simulated measurement noise. 
Therefore an apparent low $P(success)$ in the high measurement error regime could have included many instances of successful runs, falsely reported by the flag qubit. Gate noise on the other hand directly affects the computations in the circuit. Therefore the lower success probabilities now reflect a real discrepancy between the actual output and desired states. In this case, the success probabilities lie in the range of $[0.35, 0.6]$, which translates into fidelities in the range of $[0, 0.3]$.
%We found the success probability to be higher in IBMQX5, which is most likely due to its improved coherence time, allowing for better isolation from external perturbations during computation. 
The probability of success is $89\%$, which translates into a fidelity with the expected outcome of $0.78$. This is a very encouraging result, despite the small size of the matrix inverted. 
The results are shown in Figure \ref{qpus}.
\begin{figure}[ht!]
    \centering
        \includegraphics[width=1.0\textwidth]{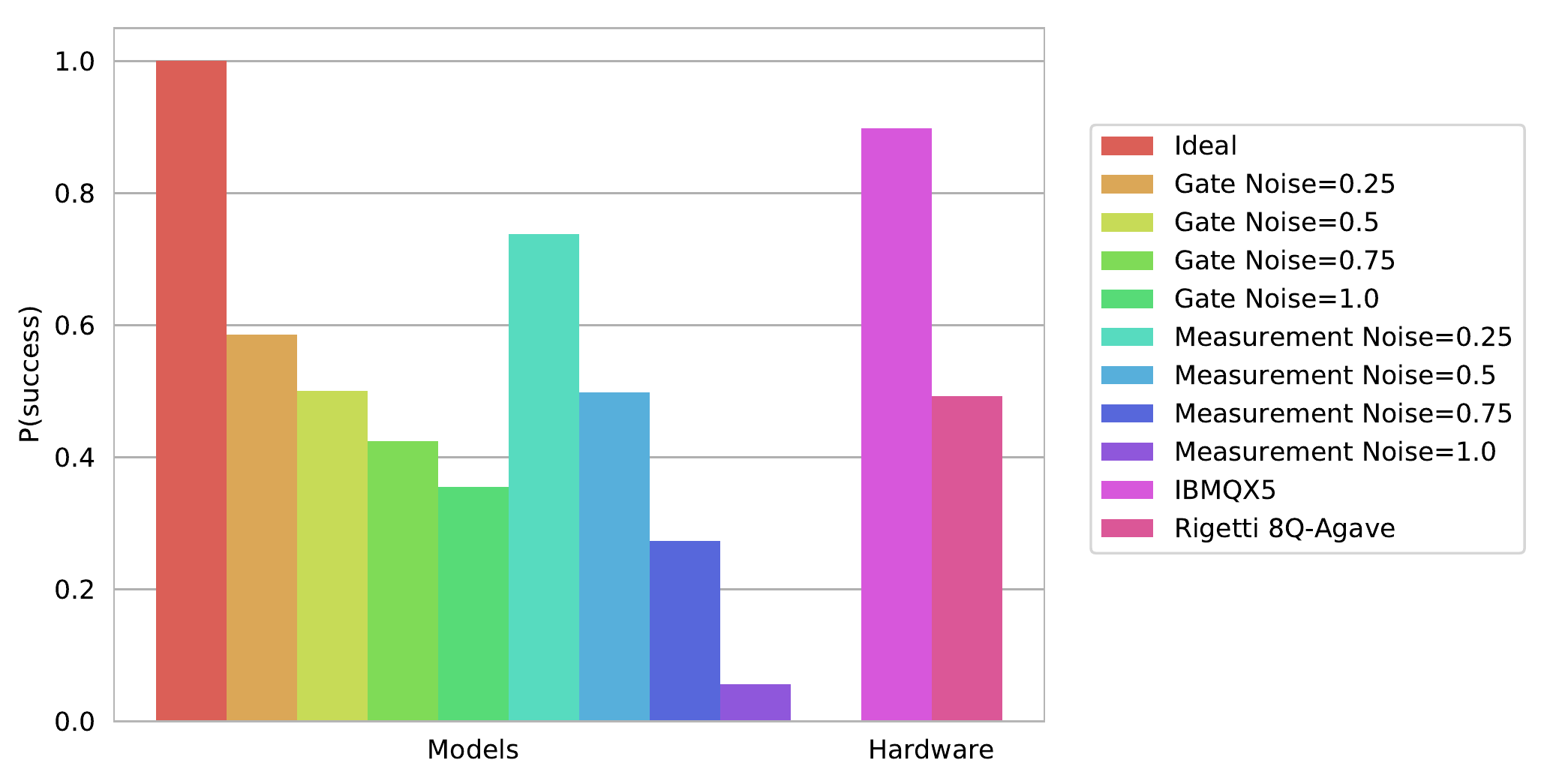}
    \caption{Success probability of the SWAP test for different noisy simulations and executions on the IBM's and Rigetti's QPUs (rightmost bars). The noise models include gate noise and measurement noise, with different probabilities of failure. The algorithm is run 8192 times for each instance, after which $P(\textrm{success})$ is evaluated.}
    \label{qpus}
\end{figure}
%an apparent low $P(success)$ low means that the actual state of the flag qubit is $|0\rangle$ (i.e., the protocol has succeeded, and the output state is the desired one), but due to the noise, the result that is recorded after measuring is $1$.

%In contrast, the fidelity when the protocol is run in 8Q-Agave close to zero, which means that all the information about the computation is lost during the process. This is mainly due to the circuit depth being too large to maintain the quantum state isolated enough from the environment.

\section{Summary}

We have developed a quantum algorithm for a Bayesian approach to deep learning, which makes use of the quantum Gaussian processes algorithm with a kernel matrix corresponding to ReLU activation functions on each layer of the deep network with infinite width. In the simplest case of a single layer architecture, we assume the kernel matrix can be classically evaluated and efficiently simulated as a Hamiltonian to be used in the quantum GP algorithm. In the case of multi-layer, deep architectures, we worked with a model where the kernel matrix corresponding to the layer zero (the base case) can be encoded as a density matrix. We then designed a recursive procedure to simulate the Hamiltonian corresponding to the kernel matrix at an arbitrary depth, which given a fixed accuracy requirement and only consumes a quadratic number of copies of the density matrix. In order to analyse the practical feasibility of the algorithm, we implemented its core subroutine, quantum matrix inversion, on both quantum simulators and real state-of-the-art quantum processors. We observed that the accuracy drops sharply with noise, but even with current, small-scale quantum computers, reasonably high success rates can still be achieved. 

Although these experimental results are promising, we should note that they do not constitute sufficient evidence that the full quantum algorithm for Bayesian deep learning can be efficiently implemented in near-term quantum technologies. 
A fully quantum implementation, including recursively simulating the required Hamiltonian corresponding to the covariance matrix at deep layers, will be an interesting avenue for future research.

\part{Quantum correlations and causality}
\label{Part3}

\chapter{Geometry of quantum correlations} % Main chapter title
\label{Chapter: geometry}

\label{QCinf} 

In the previous parts of the thesis, we have seen that quantum computation can be applied to statistical inference in classical datasets. Particularly, we have focused on the statistical model of Gaussian processes, and shown that phase estimation based methods can provide provable quantum advantages. In this chapter, we take a different approach and look at another aspect of statistical inference in the quantum era, where the data itself is inherently quantum. We consider the problem of inferring quantum correlations from measurement events. The material of this chapter follows closely from Ref.\cite{zhao2017geometry}.

\section{Introduction}

The study of quantum correlations has long held an important role in fundamental physics \cite{einstein1935can, bell1964einstein}, and more recently given rise to promising prospects of quantum technologies \cite{ekert1991quantum, harrow2004superdense}. In the usual formulation of non-relativisitc quantum theory, the state of a system can extend across space but is only defined at a particular instant in time. The distinction between the roles of space and time contrasts with relativity \cite{isham1993canonical} where they are treated in an even-handed fashion, and has led to a general preference to study temporal quantum correlations in a rather separated manner from their spatial counter-parts \cite{leggett1985quantum, brukner2004quantum, budroni2013bounding, milz2017introduction, cotler2017superdensity, modi2012operational, emary2013leggett}. Here we aim at taking a unifying approach to study quantum correlations for observables defined across space-time in a general formalism. In order to do so, we make use of the pseudo-density matrix (PDM) formalism introduced in Ref. \cite{fitzsimons2015quantum} as an extended framework of quantum correlations, which generalises the notion of a quantum state to the temporal domain, treating space and time on an equal footing.

We will focus on the simplest and most fundamental case, that of two-point correlation functions. In the spatial setting, this would correspond to bipartite quantum correlations, which can exhibit entanglement. In the temporal setting, we consider the correlations between two sequential measurements separated by an arbitrary quantum channel evolution on a single qubit quantum state. Our study presents the geometry of bipartite correlations in both the spatial and temporal cases and establishes a symmetric structure between them. We observe that this symmetry is broken in the presence of certain non-unital channels. As such these non-unital channels produce a novel set of temporal correlations that are statistically identical to bipartite quantum entanglement.

\label{sec: PDMPrelim}

\subsection{Density matrices and spatial correlations}
\label{subsec: Spatial}

\paragraph{Density matrices}

As introduced previously in Section \ref{sub: DM}, a density matrix is defined as a probability mixtures of pure quantum states. However, there is also another way of interpreting the density matrices, as the mixture of the expectation values of every possible Pauli measurements resulting in a linear combination of different Pauli components. Particularly for an $n$-qubit system, we have
\begin{align}
\rho=\frac{1}{2^n}\sum\limits_{i_1=0}^{3}...\sum\limits_{i_n=0}^{3}\left< \bigotimes\limits_{j=1}^{n}\sigma_{i_j}\right>\bigotimes\limits_{j=1}^{n}\sigma_{i_j},
\end{align} 
where the indices $i$ label different Pauli operators and the identity operator with $\sigma_0=\mathbb{I}$, $\sigma_1=\mathrm{X}$, $\sigma_2=\mathrm{Y}$, and $\sigma_3=\mathrm{Z}$, while the sub-indices $j$ of each $i$ labels different qubits in the system. In order to have a valid density matrix, we need to further require $\rho$ to be positive semi-definite.

\paragraph{Geometry of spatial correlations} Consider the matrix $\mathcal{C}$ whose elements $\mathcal{C}_{kl}$ are given by the Pauli correlation functions $\langle\sigma_k\sigma_l\rangle=\Tr[(\sigma_k\sigma_l)\rho], k, l=1,2,3$ of a two-qubit bipartite state $\rho$. It is clear that under local unitary transformations, $\mathcal{C}$ can be brought into a diagonalised form $\mathcal{C}^\prime$. It is known that $\mathcal{C}^\prime$ can always be written as a convex combination of $\mathcal{C}_1=diag[1,-1,1]$, $\mathcal{C}_2=diag[-1,1,1]$, $\mathcal{C}_3=diag[1,1,-1]$ and $\mathcal{C}_4=diag[-1,-1,-1]$, which corresponds to the correlation matrices of the four maximally-entangled Bell states respectively \cite{horodecki2009quantum}. Geometrically, one can visualise this convex set of correlation functions in three-dimensional
real space as a tetrahedron whose four vertices in the $\left<XX\right>\_\left<YY\right>\_\left<ZZ\right>$ coordinate system are given by the diagonal entries of $\mathcal{C}_1$, $\mathcal{C}_2$, $\mathcal{C}_3$, and $\mathcal{C}_4$ \cite{horodecki1996separability}. We shall name such a tetrahedron the spatial tetrahedron, denoted as $\mathcal{T}_s$.
The geometry of spatial quantum correlations has been a fruitful area of research, interested readers are referred to \cite{zyczkowski2006geometry} for a comprehensive text on this subject. 

\subsection{The pseudo-density matrix formalism}
\label{sub: PDM}
The density matrix of a quantum state can be naturally extended into the temporal domain and used to define the PDM \cite{fitzsimons2015quantum} as
\begin{align}
R=\frac{1}{2^n}\sum\limits_{i_1=0}^{3}...\sum\limits_{i_n=0}^{3}\langle\{\sigma_{i_j}\}^n_{j=1}\rangle\bigotimes\limits_{j=1}^{n}\sigma_{i_j}, \label{eq:PDM}  
\end{align}
where sub-indices $j$ of each $i$ now label different measurement events in the system. The factor $\langle\{\sigma_{i_j}\}^n_{j=1}\rangle$ denotes
the expectation value of the product of the $n$ Pauli observables. Physically, it corresponds to a correlation function of a size-$n$ sequence of Pauli measurements $\sigma_{i_j}\in\{\sigma_0,...,\sigma_3\}$.
Note that $R$ is a Hermitian matrix with unit trace, as it is with conventional density matrices. Furthermore, if the measurement events are space-like separated, $R$ is positive semi-definite and hence resembles a valid density matrix. However, the mathematical structure of Eq. \ref{eq:PDM} does not exclude the possibility of having negative eigenvalues. When negative eigenvalues are present, the Pauli observables can no longer be interpreted as measurements events on distinct sub-systems of a common quantum state. In such cases, the PDM novelly captures local measurement events happening at arbitrary time instances, in contrast to the case for conventional density matrices. 
\paragraph{Measure of causality}
Since the presence of negative eigenvalues is a witness to causal relationships, it is natural to quantify temporal correlations with some measure based on the trace norm.
A causality measure was thus introduced in Ref. \cite{fitzsimons2015quantum} as $f_{tr}(R)=\|R\|_{tr}-1$, which possesses desirable properties in close analogy with entanglement monotones for spatial correlations, namely, $f_{tr}(R)\ge 0$ and $f_{tr}(R_2) = 1$ for any $R_2$ generated by two consecutive measurements of a closed system with a single qubit ($R_2$ is maximally causal);
$f_{tr}(R)$ is invariant under unitary transformations;
 $f_{tr}(R)$ is non-increasing under local operations (c.f. entanglement is non-increasing under LOCC); $f_{tr}$ is a convex function. We will revisit these properties in Chapter \ref{CauCap}, where a logarithmic variant of the trace norm measure will play a significant role.

\section{General two-time quantum correlations}
\label{sec: two-time}

Here we describe the quantum correlations between Pauli measurements at two time instances. The corresponding physical scenario is depicted in Figure \ref{fig: temporalPDM}, where a single-qubit system $\rho_A$ subject to a quantum channel between two measurement events at times $t_A$ and $t_B$. The channel is described by a completely positive trace-preserving (CPTP) map $\varepsilon_{B|A}$, which maps the family of operators from the state space $\mathcal{H}_A$ at $t_A$ to the state space $\mathcal{H}_B$ at $t_B$. 

\begin{figure}[H] 
\centering
\includegraphics[width=0.5\textwidth] {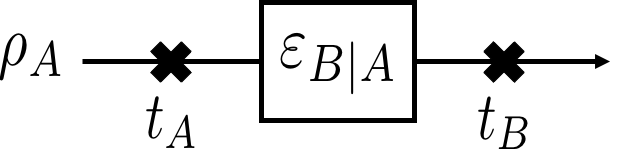} 
\caption{The physical scenario of general two-time quantum correlations: A single-qubit system $\rho_A$ is measured at $t_A$ and $t_B$ with a quantum channel in between described by the CPTP map, $\varepsilon_{B|A}$.} 
\label{fig: temporalPDM} 
\end{figure}

\subsection{The two-point temporal PDM}
It it clear from the definition of PDM, Eq. \ref{eq:PDM} that the expectation value of the product of $n$ Pauli observables is given by 
\begin{align}
\langle\{\sigma_{i_j}\}^n_{j=1}\rangle = \Tr\left[\left(\bigotimes\limits_{j=1}^{n}\sigma_{i_j}\right)R\right].\label{correlation} 
\end{align}
In the case of two sequential events, $n=2$. Supposing the evolution between $t_A$ and $t_B$ is the identity, the only non-zero Pauli correlation functions are
\begin{align}
\left<\{\sigma_1,\sigma_1\}\right>&=\left<\{\sigma_2,\sigma_2\}\right>=\left<\{\sigma_3,\sigma_3\}\right>=\left<\{\sigma_0,\sigma_0\}\right>=1,\nonumber\\
\left<\{\sigma_0,\sigma_1\}\right>&=\left<\{\sigma_1,\sigma_0\}\right>=\left<\sigma_1\right>,\nonumber\\
\left<\{\sigma_0,\sigma_2\}\right>&=\left<\{\sigma_2,\sigma_0\}\right>=\left<\sigma_2\right>,\nonumber\\
\left<\{\sigma_0,\sigma_3\}\right>&=\left<\{\sigma_3,\sigma_0\}\right>=\left<\sigma_3\right>.
\end{align}
Here $\{...\}$ denotes sets of operators, which should not be confused with a similar notation for anti-commutators.
On the other hand,  we can write a single-qubit density operator $\rho_A$ as
\begin{align}
\rho_{A}=\frac{1}{2}\left(\sigma_0+\left<\sigma_1\right>\sigma_1+\left<\sigma_2\right>\sigma_2+\left<\sigma_3\right>\sigma_3\right).
\end{align}
We now compare the coefficients of Pauli components and obtain
$
R=\{\rho_A\otimes\frac{\mathrm{I}}{2},SWAP\}, 
$
where $SWAP=\frac{1}{2}\sum_{i=0}^3\sigma_i\otimes\sigma_i$,
and here $\{...\}$ denotes the anti-commutator, such that $\{A,B\}=AB+BA$. 
In a general setting, a channel that acts on the system in between the time instances $t_A$ and $t_B$
as a CPTP map $\varepsilon_{B|A}$ is included. Note that the map does not affect any observables at $t_A$, but introduces a transformation according to its adjoint map on the observables at $t_B$. Therefore the two-time PDM across such a channel can be written as
\begin{align}
R_{AB}=(\mathcal{I}_A\otimes \varepsilon_{B|A})\left( \{\rho_A\otimes\frac{\mathrm{I}}{2},SWAP\}\right), \label{eq: PDM2}
\end{align}
where $\mathcal{I}_A$ denotes the identity super-operator acting on $A$.
The above expression is in agreement with the Jordan product representation given in Ref. \cite{horsman2017can}:
\begin{align}
R_{AB}=\{\rho_A\otimes\frac{\mathrm{I}}{2},E_{AB}\}, \label{Jordan}
\end{align}
where $E_{AB}=\sum_{ij}\left(\mathcal{I}_A\otimes\varepsilon_{B|A}\right)\left(\ket{i}\bra{j}_A\otimes\ket{j}\bra{i}_B\right)$ is an operator
acting on $\mathcal{H}_A\otimes\mathcal{H}_B$ that is Jamio\l{}kowski-isomorphic to $\varepsilon_{B|A}$. The correlations described by $R_{AB}$ are "purely" temporal in the sense that the underlying dynamics are defined by a CPTP map on a single qubit.

\subsection{Single-qubit quantum channels}
To proceed further, we need to exploit the structures of the quantum channel $\varepsilon_{B|A}$.
It was established in Ref. \cite{ruskai2002analysis} that the complete positivity requirement leads to a particularly useful trigonometric parameterisation of the set of possible $\varepsilon_{B|A}$ in the Pauli basis \cite{king2001minimal}. Concretely, this set corresponds to the convex closure of the maps defined by the following Kraus operators up to permutations among $\{\sigma_1,\sigma_2,\sigma_3\}$:
\begin{align}
K_+=&\left[\cos\frac{v}{2}\cos\frac{u}{2}\right]\sigma_0+\left[\sin\frac{v}{2}\sin\frac{u}{2}\right]\sigma_3,\nonumber\\
K_-=&\left[\sin\frac{v}{2}\cos\frac{u}{2}\right]\sigma_1-i\left[\cos\frac{v}{2}\sin\frac{u}{2}\right]\sigma_2, \label{eq: KO}
\end{align}
where $v\in[0,\pi], u\in[0,2\pi]$. The above Kraus operators act on $\sigma_i$ as the following:
\begin{align}
    K_+\sigma_0K_+^\dagger+K_-\sigma_0K_-^\dagger &=\sigma_0+\sin(u)\sin(v)\sigma_3,\nonumber\\
        K_+\sigma_1K_+^\dagger+K_-\sigma_1K_-^\dagger &=\cos(u)\sigma_1,\nonumber\\ 
        K_+\sigma_2K_+^\dagger+K_-\sigma_2K_-^\dagger &=\cos(v)\sigma_2,\nonumber\\ 
        K_+\sigma_3K_+^\dagger+K_-\sigma_3K_-^\dagger &=\cos(u)\cos(v)\sigma_3.      
\end{align}

\subsection{Convex closure}
We now expand Eq. \ref{eq: PDM2} into its Pauli components and substitute into Eq. \ref{correlation}, and obtain
\begin{align}
\langle\sigma_k\sigma_k\rangle=\Tr\left[\langle\sigma_k\rangle_{\rho_A}\varepsilon_{B|A}(\sigma_0)\sigma_k+\varepsilon_{B|A}(\sigma_k)\sigma_k\right], \label{kk}
\end{align}
where $\langle\sigma_k\rangle_{\rho_A}$ denotes the expectation value of the $\sigma_k$ observable on the initial state $\rho_A$.
By setting $\rho_A=\ket{0}\bra{0}$ and applying the Kraus operators in Eq. \ref{eq: KO}, we obtain the parametric equations which characterise the convex set of possible correlation functions as followed:
\begin{align}
\langle\sigma_1\sigma_1\rangle&=\cos(u),\nonumber\\
\langle\sigma_2\sigma_2\rangle&=\cos(v),\nonumber\\
\langle\sigma_3\sigma_3\rangle&=\cos(u-v).\label{trig}
\end{align}
The set of three Pauli correlations $\langle\sigma_k\sigma_k\rangle=\Tr\left[(\sigma_k\otimes\sigma_k) R_{AB}\right]$ fully characterises any two-point correlations $\langle\sigma_k\sigma_l\rangle$ up to local unitary transformations, for $k,l=1,2,3$. Note that the choice of permutation among $\{\sigma_1,\sigma_2,\sigma_3\}$ is arbitrary and hence does not affect the resultant convex set enclosed by the parametric surface.
We illustrate the set of attainable $\langle\sigma_k\sigma_k\rangle$ as points in the real coordinator space $\{\left<\sigma_1\sigma_1\right>,\left<\sigma_2\sigma_2\right>,\left<\sigma_3\sigma_3\right>\}$ in FIG. \ref{fig:inflate_tetrahedron}, which depicts the geometry of two-time Pauli correlations. The figure shows a parametric plot of the equations $\left<\sigma_1\sigma_1\right>=\cos(u)$, $\left<\sigma_2\sigma_2\right>=\cos(v)$ and $\left<\sigma_3\sigma_3\right>=\cos(u-v)$, where $v\in[0,\pi], u\in[0,2\pi]$. Note that a similar structure was found when three sequential observables were considered in the context of Leggett-Garg inequalities \cite{budroni2013bounding}.
\begin{figure}[h!]
\centering
\includegraphics[width=0.6\linewidth]{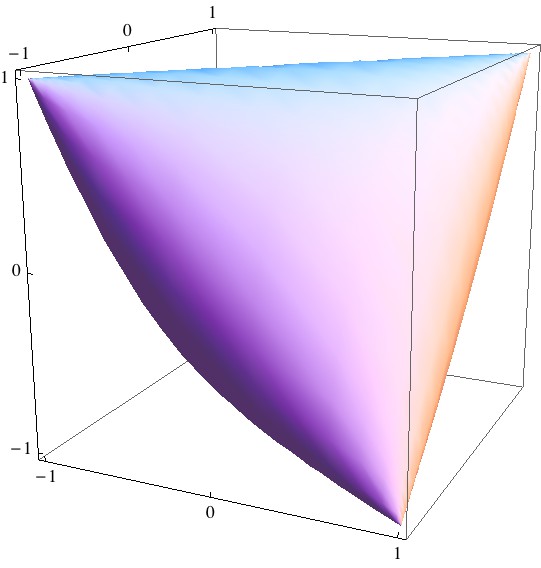}
\caption{The surface enclosing the set of possible values of two-point temporal correlations in the real space of $\{\left<\sigma_1\sigma_1\right>,\left<\sigma_2\sigma_2\right>,\left<\sigma_3\sigma_3\right>\}$.}
\label{fig:inflate_tetrahedron}
\end{figure}

\section{Two-point correlations in space-time}
In this section, we focus on the cases where the initial system $\rho_A$ is maximally-mixed.
As mentioned in Section \ref{subsec: Spatial}, the set of spatial correlations described by two-qubit density matrices can be depicted in the space of $\{\left<\sigma_1\sigma_1\right>,\left<\sigma_2\sigma_2\right>,\left<\sigma_3\sigma_3\right>\}$ as the convex hull enclosed by the tetrahedron $\mathcal{T}_s$ with vertices of odd parity $(1,1,-1)$, $(1,-1,1)$, $(-1,1,1)$ and $(-1,-1,-1)$. These vertices correspond to the four Bell states. The set of temporal correlations described by $R_{AB}$ with $\rho_A=\frac{\mathrm{I}}{2}$ is simply the reflection of $\mathcal{T}_s$ in the $\left<\sigma_1\sigma_1\right>$-$\left<\sigma_3\sigma_3\right>$ plane. The resulting tetrahedron $\mathcal{T}_t$ has vertices of even parity $(1,-1,-1)$, $(1,1,1)$, $(-1,-1,1)$ and $(-1,1,-1)$.
This follows from the relation Eq. \ref{Jordan}, $R_{AB}=\frac{1}{2}E_{AB}$, when setting $\rho_A=\frac{\mathrm{I}}{2}$. A partial transpose over sub-system $A$, which geometrically corresponds to the reflection, yields
\begin{align}
R^{\mathcal{PT}}_{AB} =& \left(\mathcal{I}_A\otimes\frac{\varepsilon_{B|A}}{2}\right)\sum\limits_{ij}\ket{ii}\bra{jj}_{AB}=\rho_{AB}^{\text{Choi}}(\varepsilon_{B|A}),\label{choi}
\end{align}
where $\rho_{AB}^{\text{Choi}}(\varepsilon_{B|A})$ is the Choi matrix of $\varepsilon_{B|A}$ \cite{choi1975completely}. For arbitrary choices of $\varepsilon_{B|A}$, the Choi matrices describe the same set of correlations, $\mathcal{T}_s$ as two-qubit density matrices. As the partial transpose over sub-system $A$ generates a reflection in the $\left<\sigma_1\sigma_1\right>$-$\left<\sigma_3\sigma_3\right>$ plane, the set $\mathcal{T}_t$ is simply an inverted copy of $\mathcal{T}_s$.
\paragraph{Distance from separability}
The Peres-Horodecki criterion \cite{horodecki1996separability} implies that the octahedron region formed by the overlap between the two tetrahedra $\mathcal{T}_t$ and $\mathcal{T}_s$ corresponds to the set of separable states. With this insight, we can make a natural connection between the entanglement measure, negativity \cite{vidal2002computable}, $
f_{\mathcal{N}}(\rho_{AB})=\frac{1}{2}(\|\rho_{AB}^{\mathcal{PT}}\|_{tr}-1)$ and the causality measure $f_{tr}$. Consider a two-qubit state $\rho^{\text{Choi}}_{AB}$ as the Choi matrix of $\varepsilon_{B|A}$ in Eq. \ref{eq: PDM2}, leading to
$
f_{tr}(R_{AB})=2f_{\mathcal{N}}(\rho^{\text{Choi}}_{AB})
$.
It was shown in Ref. \cite{mundarain2007concurrence} that
the entanglement measure $f_{\mathcal{N}}$ can be visualised as the Euclidean distance $D_{s}$ between a point in $\mathcal{T}_s$ and the nearest point in the octahedron, such that $D_{s}=\frac{4f_{\mathcal{N}}}{\sqrt{3}}$. Hence, by analogy we can establish a geometric interpretation for $f_{tr}$ as the Euclidean distance $D_t$ between a point in $T_t$ and the nearest point on the face of the octahedron, such that $D_t=\frac{2f_{tr}}{\sqrt{3}}$.

\paragraph{Mixed space-time correlations} Beyond the geometry of the purely temporal and spatial correlations, a two-point PDM generally describes an arbitrary mixture of spatial and temporal correlations. Consider sequential Pauli measurements, $\sigma_A$ and $\sigma_B$ on one sub-system of a maximally-entangled pair. If the sub-system evolves through a CP-map, $\langle\sigma_A\sigma_B\rangle$ lies in the $\mathcal{T}_t$ as shown. However, if a SWAP operation is applied before the second measurement, then the reduced dynamics on sub-system $A$ will no longer be described by a CP-map. Under these conditions the correlations $\langle\sigma_A\sigma_B\rangle$ will span $\mathcal{T}_s$. Furthermore, if SWAP is applied probabilistically, the possible correlations span the entire volume of the cube formed by the vertices of $\mathcal{T}_t$ and $\mathcal{T}_s$, fully inscribing the spatial and temporal tetrahedra. It is clear that the cube is the largest possible set of space-time quantum correlations, since $-1\le\langle\sigma_A\sigma_A\rangle\le1$, and the set of possible correlation functions forms a convex set. 
We depict the geometry of different types of two-point correlations in space-time in Figure \ref{fig:StarInCube}.

\paragraph{Unital channels}

The results of Figure \ref{fig:StarInCube} assumes the initial state $\rho_A$ is maximally-mixed. 
Interestingly, $T_t$ also describes temporal correlations for an arbitrary input state $\rho_A$ but with the channel restricted to be unital which means $\varepsilon_{B|A}(\sigma_0)=\sigma_0$.
This is because only non-unital maps act non-trivially on the local components $\sigma_k\otimes\sigma_0$ of the PDM, which leads to an augmented set of correlations. Specifically, note that the first term in the trace of Eq. \ref{kk} vanishes whenever either $\rho_A$ is maximally-mixed or $\varepsilon_{B|A}$ is a unital map, in which case the parametric equations reduce to 
\begin{align}
\langle\sigma_1\sigma_1\rangle&=\cos(u),\nonumber\\
\langle\sigma_2\sigma_2\rangle&=\cos(v),\nonumber\\
\langle\sigma_3\sigma_3\rangle&=\cos(u)\cos(v).
\end{align}
The above equations give a parametric surface with the extremal points $(1,1,1)$, $(1,-1,-1)$, $(-1,1,-1)$ and $(-1,-1,1)$. The convex enclosure of these points gives exactly the temporal tetrahedron, $\mathcal{T}_t$. Hence we can see there exists a conditional reflective symmetry between the sets of temporal and spatial correlations . This symmetry is shown to be broken in the presence of certain non-unital channels, which give rise to the set of attainable temporal correlation shown in Figure \ref{fig:inflate_tetrahedron} 

\begin{figure}[H]
\centering
\includegraphics[width=0.8\linewidth]{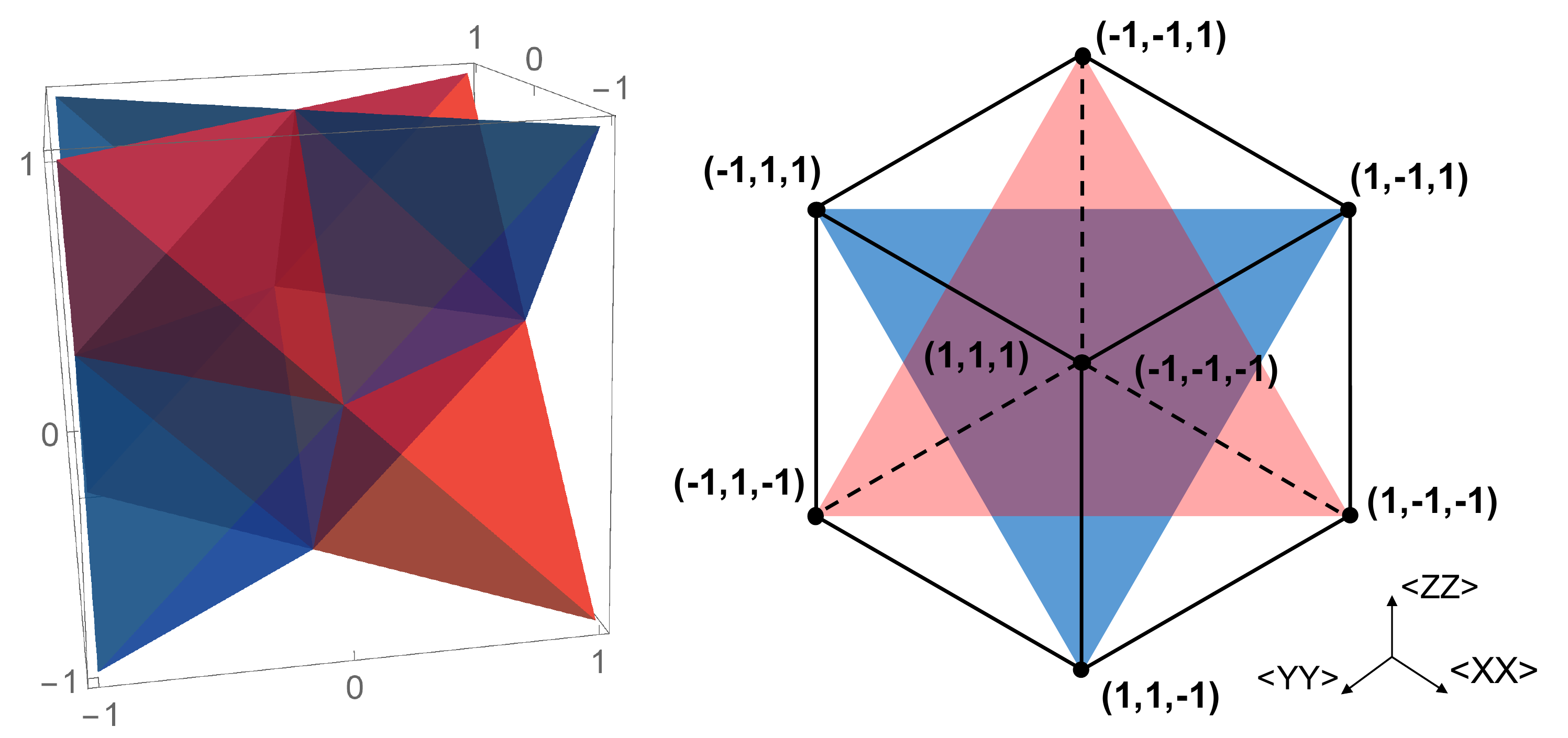}
\caption{The spatial and temporal tetrahedrons with the blue region representing $\mathcal{T}_s$, and the red region representing $\mathcal{T}_t$ (Left). A perspective plot viewing from the $(-1,-1,-1)$ direction, where the purple hexagon is a projection of the octahedron overlap, the blue and red triangles are projections of $\mathcal{T}_s$ and $\mathcal{T}_t$ respectively (Right).}
\label{fig:StarInCube}
\end{figure}

\subsection{General PDM Pauli components}
The remaining components of the two-point PDM includes all possible combinations of $\sigma_{A_1},\sigma_{A_2},\sigma_{B_1},\sigma_{B_2}\in\{\mathrm{I},\mathrm{X},\mathrm{Y},\mathrm{Z}\}$. The geometry of correlations is illustrated in Figure \ref{fig:2D}. The figure presents the types of correlations in two-point PDMs as 2-D projections onto the planes of $\{\left<\sigma_{A_1}\sigma_{B_1}\right>,\left<\sigma_{A_2}\sigma_{B_2}\right>\}$ in Figures \ref{fig:2D}a, \ref{fig:2D}b and \ref{fig:2D}c. The sets of correlations are shown for the first quadrant. The full 2-D projection is generated in a symmetric manner about the origin. In Figures \ref{fig:2D}d and \ref{fig:2D}e, we give instances in the 3-D spaces corresponding to this 2-D projections. The red region highlights extra correlations attainable in a valid PDM compared to a valid density matrix.
\begin{figure}[H]
\centering
\includegraphics[width=\linewidth]{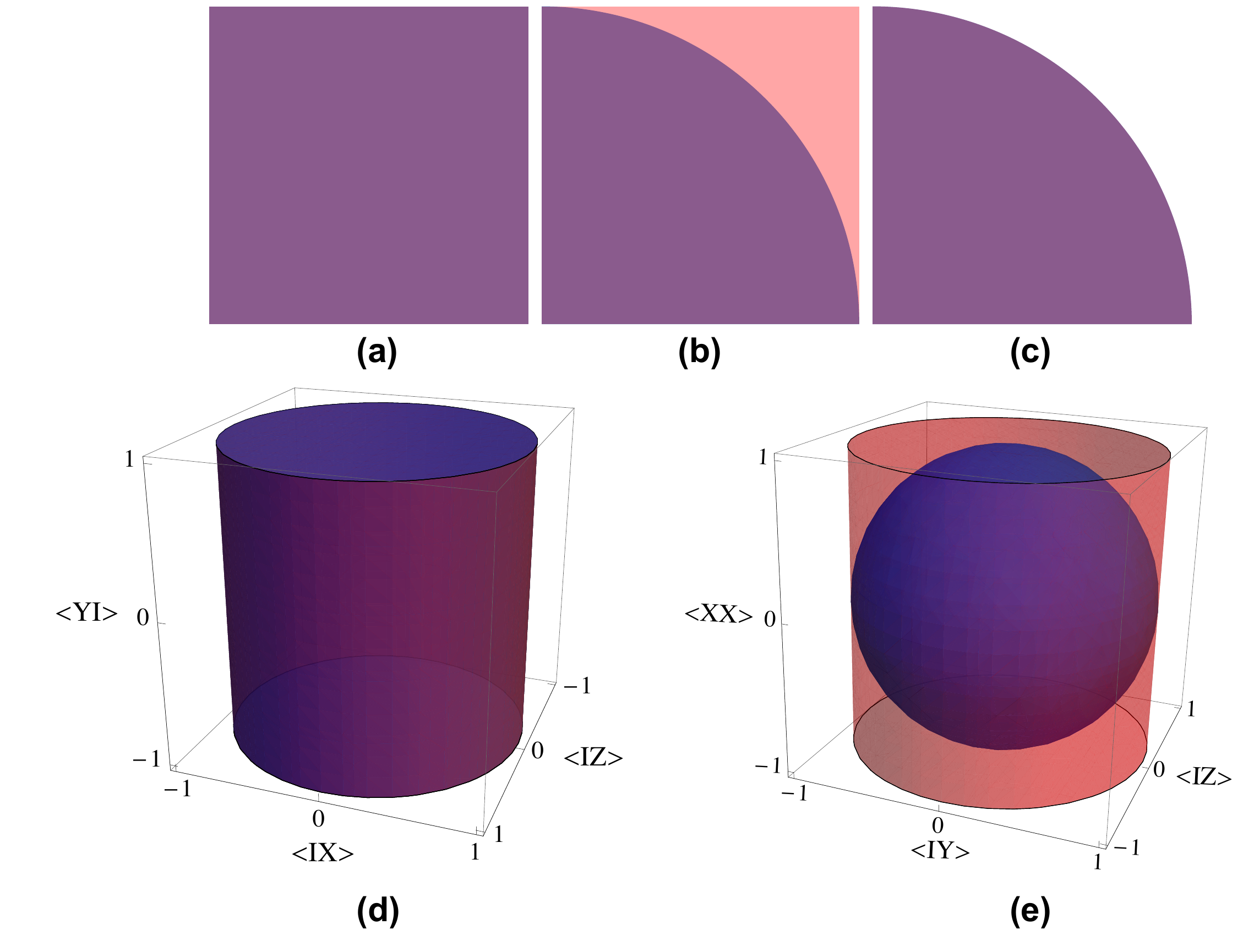}
\caption{\noindent
Here we have \textbf{(a)} Type a: $\left[\sigma_{A_1}\otimes\sigma_{B_1},\sigma_{A_2}\otimes\sigma_{B_2}\right]=0$. Temporal and spatial correlations both lie in the purple unit square; \textbf{(b)} Type b: $\{\sigma_{A_1}\otimes\sigma_{B_1},\sigma_{A_2}\otimes\sigma_{B_2}\}=0$, and one out of the four operators is $\sigma_0$. Spatial correlations lie in the purple quarter unit circle, while temporal correlations lie in the unit square. The red region is allowed by valid PDMs but not density matrices; \textbf{(c)} Type c: In all other cases, correlations are bounded by the purple quarter circle; \textbf{(d)} An example of 3-D spaces corresponding to a combination of type a and type c 2-D projections; \textbf{(e)} An example of 3-D spaces corresponding to a combination of type b and type c 2-D projections.
}\label{fig:2D}
\end{figure}
\noindent
The above completely characterises the two-point spatial and temporal correlations for qubit systems. The space of possible temporal correlations is strictly larger than the space of possible spatial correlations. These extra correlations cannot originate solely from spatially separated events, and hence are a signature of causal influence between measurement events.

\section{Discussions}

\paragraph{Quantum causal inference}
The geometric structure of spatial and temporal correlations presented in this chapter has potential application to quantum causal inference (see Ref. \cite{ried2015quantum} for an introduction). Given the outcomes of two sets of measurements, one can estimate the expectation values of two-point correlations, and identify the corresponding coordinates in the provided geometric structure, and infer whether there exists a causal relationship between the measurement events. 

\paragraph{Sequentially mimicked entanglement}
Note that the "inflated tetrahedron" in Figure \ref{fig:inflate_tetrahedron} inscribes a larger volume than $\mathcal{T}_t$. Therefore it partially overlaps with the non-separable regions in $\mathcal{T}_s$. Hence there exist temporal correlations that are statistically identical to entangled correlations in space.
Physically, this implies entanglement can be partially mimicked by sequential correlation described by a single-qubit PDM, and that it is impossible to distinguish between the two cases by only examining the correlation statistics. 
An instance of this result is reflected in the violation of the temporal CHSH inequality \cite{brukner2004quantum}, which can be expressed entirely in terms of $\left<\sigma_1\sigma_1\right>$ and $\left<\sigma_2\sigma_2\right>$ correlations. The "inflated tetrahedron" imposes constraints in the space of all three Pauli correlations, hence serves as a stronger geometric criterion for classifying quantum correlations and can act as a causal witness.
Here we should emphasize the vertices of $\mathcal{T}_s$ that correspond to maximally-entangled states do not overlap with the temporally attainable set. The inability to simulate correlations generated by Bell states with sequential measurements is related to the impossibility of constructing a quantum universal-NOT gate \cite{buvzek1999optimal}.

\chapter{Causality in quantum communication} % Main chapter title
\def\>{\rangle} 
\def\<{\langle}

\label{CauCap} 
We have introduced the PDM formalism in Section \ref{sub: PDM} of the previous chapter and used it as a framework to demonstrate the geometric structure of quantum correlation in both the spatial and the temporal domains. One particularly novel aspect of the PDM formalism is the ability to quantify causal relations between sequential measurement events with a causality measure which is computed by the trace norm of a given PDM.
In this chapter, we show that quantum causality plays an operational role in quantum communication. Since realistic channels for quantum communication task are noisy, it is of practical interests to quantify the capacity of the channel being used. 
Existing results have successfully connected quantum channel capacities with spatial correlations \cite{shor2002quantum, devetak2005private}. For instance, the quantum capacity of a channel is known to be equivalent to the highest rate at which it can be used to generate entanglement \cite{wilde2013quantum}. 
The notion of quantum causality characterises the temporal aspect of quantum correlations analogously with entanglement in the spatial case.
Here we take the intuitive step to uncover a connection between quantum causality and channel capacity. 
Concretely, we prove the amount of temporal correlations between two ends of the noisy quantum channel, as quantified by a logarithmic variant of the causality measure, implies a general upper bound on its channel capacity, which we will call the causal bound of quantum channel capacities.
Conveniently, the mathematical expression of the causal bound is more straightforward to evaluate than most previously known bounds for quantum capacities. We will further demonstrate the utility of the causal bound by applying it to a class of shifted depolarising channels, which shows improvement over previously known results of Ref. \cite{ouyang2014channel} and \cite{holevo2001evaluating}.
The material presented in this chapter is closely based on Ref.\cite{pisarczyk2018causal}.

\section{Bounding quantum channel capacities}
One of the central objectives of information theory is to determine the maximum rate of reliable transmission of information using a given communication channel. 
In classical information theory, the early work of Shannon proved that a simple expression governs the capacity of discrete memoryless channels  \cite{shannon1948mathematical}.

When considering the capacity of a quantum channel, $\mathcal{N}$, one has to take into account the possible necessity of encoding information in states entangled across multiple copies of the channels, to obtain the maximal capacity per use. Hence an exact computation of the capacity of a quantum channel amounts to taking the supremum over tensor products of an arbitrary number of copies of the same channel.
As such, an exact characterisation of a channels' capability to transmit quantum information has proved to be a much more challenging task.
In the absence of formulae for the exact capacities, one is often forced to rely on bounds for the quantum capacity that are tractable to evaluate \cite{takeoka2014squashed, muller2016positivity, wang2016semidefinite, sutter2015approximate, wang2017semidefinite, tomamichel2017strong, berta2017amortization}. 
Nevertheless, a significant amount of progress in the context of quantum communication has been made in determining the achievable rates for
transmitting quantum information over noisy channels\cite{holevo2001evaluating,lloyd1997capacity, shor2002quantum, devetak2005private}. However, the existing formulae for quantum capacities often involve inherent optimisation problems, leading to significant computational difficulties.
The reader is referred to Ref. \cite{wilde2013quantum} for a review of related results.

Here we take a new approach and present a general upper bound on the
quantum capacities of quantum channels that are based on causality considerations. Apart from the theoretical novelty of connecting between quantum causality and concrete communication problems, these new causal bounds further allow direct computation without requiring optimisation.

\subsection{Logarithmic Causality}
\label{sub: log Caus}
Recall in Section \ref{sub: PDM}, we reviewed a measure of causality based on the trace norm of the pseudo-density matrix. Here we introduce a useful logarithmic variant of this trace norm measure, $F(R) = \log_2 \|R\|_1$. The logarithmic causality measure is similar to causality monotones introduced in Ref. \cite{fitzsimons2015quantum} (reviewed in Section \ref{sub: PDM}), but it sacrifices convexity in favour of additivity when applied to tensor products of PDMs. Being in close analogy to the logarithmic negativity in entanglement measures, the logarithmic causality also satisfies the following important properties:
\begin{enumerate}
\item $F(R) \geq 0$, with $F(R)=0$ if $R$ is positive semi-definite, and $F(R_2) = 1$ for $R_2$ generated by two consecutive measurements of a closed system with a single qubit,
\item $F(R)$ is invariant under unitary transformations,
\item $F(R)$ is non-increasing under local operations,
\item $F(\sum_i p_i R_i) \leq \max_i F(R_i)$, for any probability distribution $\{p_i\}$.
\item $F(R\otimes S) = F(R) + F(S)$.
\end{enumerate}
Since $F(R) = \log_2 (f_\text{tr}(R)+1)$ and the logarithm function is monotonic, Properties 1-3 in the above follow straight-forwardly from the corresponding properties of the causality monotone $f_\text{tr}(R) = \|R\|_1 - 1$ which were proved in Ref. \cite{fitzsimons2015quantum}. Property 4 also follows from the monotonicity of the logarithm function which implies 
$
F(\sum_i p_i R_i) 
\leq 
\max_i F(R_i\sum_j p_j)     
$, and hence
$
F(\sum_i p_i R_i)
\leq
\max_i F(R_i).
$
As for Property 5, we note the fact that 
\begin{align}
\log_2 \|R\otimes S\|_1 = \log_2 \|R\|_1 \|S\|_1 = \log_2 \|R\|_1 + \log_2 \|S\|_1,     
\end{align}
which is equivalent to the desired property, $F(R\otimes S) = F(R) + F(S)$.

\subsection{PDM representation of quantum channels}
We consider a qubit-to-qubit channel, denoted as $\mathcal N_1$ acting on a single qubit quantum state specified by an initial density matrix $\rho$. The PDM associated to such a process, denoted by $R_{\mathcal N_1}$ involves a single use of the channel $\mathcal N_1$ and two measurements, one before and one after $\mathcal N_1$. By Eq. \ref{eq: PDM2} in Section \ref{sec: two-time}, we have
\begin{equation}
R_{\mathcal N_1}=    (\mathcal{I}\otimes \mathcal N_1) \left( \{\rho\otimes\frac{\mathrm{I}}{2},\textrm{SWAP}\}\right). \label{single}
\end{equation}
For the purpose of this chapter, we fix the input to be a maximally mixed state, so that $\rho=\frac{\mathrm{I}}{2}$. As a result, we are able to generalise Eq. \ref{single} to describe an arbitrary quantum channel $\mathcal{N}$ acting on a collection of $l$ qubits, which leads to 
\begin{equation}
R_{\mathcal N}=(\mathcal{I}\otimes \mathcal{N})\left(\frac{\textrm{SWAP}^{\otimes l}}{2^l}\right). \label{SWAP}
\end{equation}
We further focus our attention to one-way quantum communications. Thus we ought to consider the most general procedure for approximating the ideal (identity) channel with multiple copies of available resource channels. This amounts to combining $n$ parallel uses of the resource channel preceded by some encoding operations and followed by some decoding operations. The schematic diagram of the communication process being considered is shown below in Figure \ref{fig:Enc}.
\begin{figure}[H]
    \centering
    \includegraphics[width=1.0\linewidth]{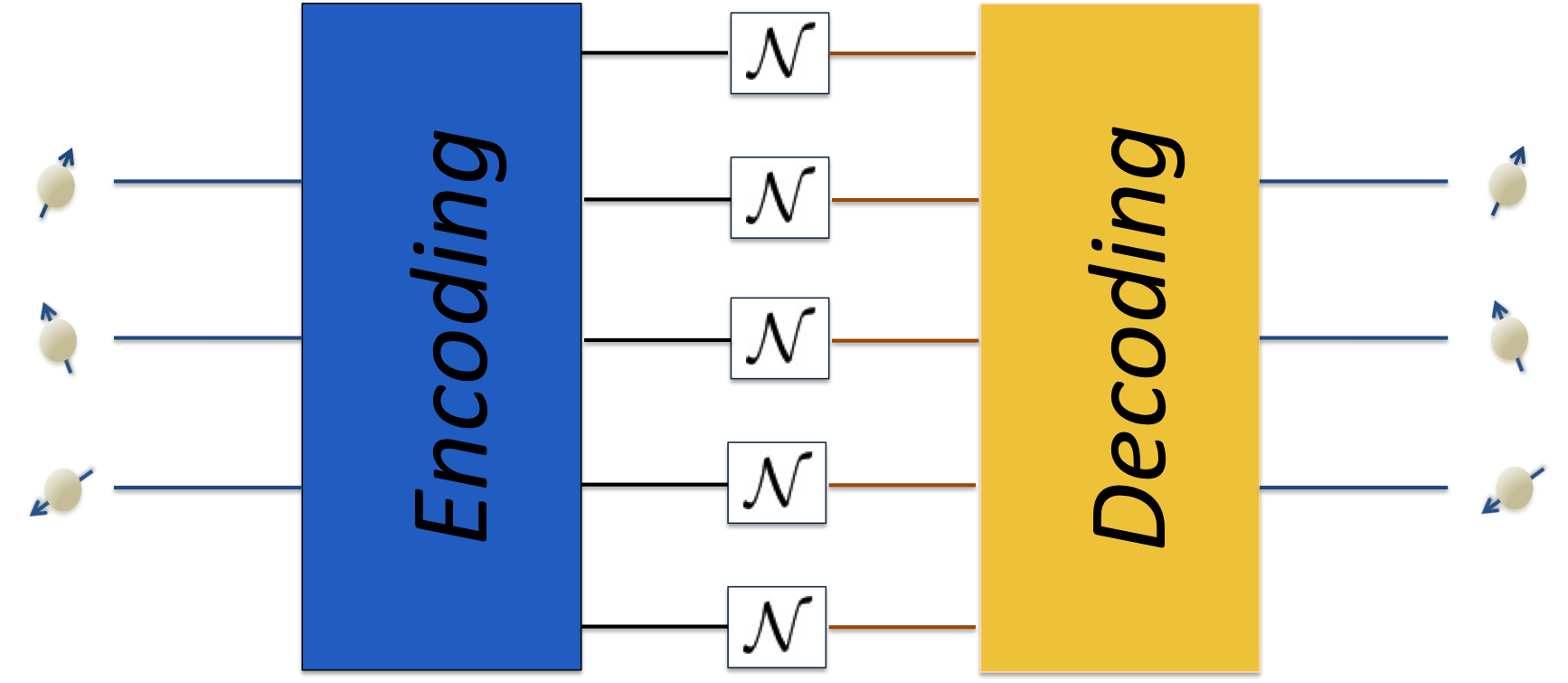}
    \caption{The state of a system of $k$ qubits is encoded into a larger Hilbert space. The encoded quantum information is then passed forward using $n$ parallel copies of the resource channel $\mathcal N$. The sent information is then decoded back into a system of $k$ qubits. In the most general setting, the input and the output of channel $\mathcal{N}$ need not have the same dimension. The encoding and decoding operations are both described by CPTP maps.}
    \label{fig:Enc}
\end{figure}

\subsection{Causal bound}

\paragraph{High-level outline} The logarithmic causality, $F(R_{\mathcal N})$ can be used to bound the number of uses of the resource $\mathcal{N}$ needed to approximate the ideal channel $\mathcal I ^{\otimes k}$.
To do so, we compare the causality across the collection of channels with the causality across the identity channel.
As a result of the Property 4. in Section \ref{sub: log Caus} , $F(\sum_i p_i R_i) \leq \max_i F(R_i)$, and the fact that for quantum channel capacity consideration it suffices to consider isometric encodings \cite{barnum2000quantum}, the causality across the combined channels does not increase under encoding and decoding. Further exploiting the additivity of causality to relate $k$ to the number of uses of the channel leads to the result that quantum capacity $Q$ of channel $\mathcal N$ is upper bounded by $F(R_{\mathcal N})$,
\begin{equation}
Q(\mathcal N) \leq F(R_{\mathcal N}). \label{eq:causal bound}
\end{equation}

\paragraph{Remarks} 
\begin{itemize}
\item Computing $F(R_{\mathcal N})$ is efficient for channels acting on relatively small Hilbert spaces, as it only requires finding the logarithm of the trace norm of a matrix. Importantly evaluating the causal bound does not involve any optimisation. 
\item Note that the relation Eq. \ref{eq:causal bound} implies that any channel with $F(R_{\mathcal{N}}) = 0$ has quantum capacity equal to zero. This reflects the fact that such a channel exhibits correlations that could have been produced by measurements on distinct subsystems of a quantum state, and so the system is necessarily constrained by the no-signalling theorem. 

\item When $F(R_{\mathcal{N}})$ is strictly positive, the correlations between the two ends of the channel cannot be captured by bipartite density matrices, thus signifying information being passed forward in time. 
\item Although the causal bound was presented for channels acting on the collection of qubits, this result applies to channels with arbitrary input and output dimensions. In such cases, it suffices to restrict the channel to act only on a subspace of the $2^k$ dimensional Hilbert space.
\end{itemize}

\subsection{Proof}

In order to prove Eq. \ref{eq:causal bound},
%\[
%\centerline{
%\Qcircuit @C=1em @R=0.25em {
%\lstick{\ket{\phi}}&  {/} \qw & \multigate{4}{E} & \qw %&\gate{\mathcal{N}}& \qw & \multigate{4}{D} & {/} \qw & \qw\\
%\lstick{\ket{0}} &  {/} \qw & \ghost{E} & \qw &\gate{\mathcal{N}}& \qw %& \ghost{D}& \qw \mid \\
%\lstick{\ket{0}} & {/} \qw  & \ghost{E} & \qw &\gate{\mathcal{N}}& \qw %&\ghost{D} & \qw \mid \\
%\lstick{\ket{0}} &  {/} \qw & \ghost{E} & \qw &\gate{\mathcal{N}}& \qw %&\ghost{D}& \qw \mid \\
%\lstick{\ket{0}} &  {/} \qw & \ghost{E} & \qw &\gate{\mathcal{N}}& \qw %&\ghost{D}& \qw \mid \\
%}}
%\]
we start with constructing the PDM that corresponds to a channel obtained by using $n$ copies of the resource channel $\mathcal{N}$ preceded by the encoding operation $E$ and followed the decoding $D$. Let $\mathcal{M} = D \circ \mathcal{N}^{\otimes n} \circ E$. The PDM to consider, $R_\mathcal{M}$ is related to that of the ideal channel via
\begin{align}
R_\mathcal{M} = ({\mathcal I}^{\otimes k} \otimes \mathcal{M})(R_{\mathcal{I} ^{\otimes k}}).    \label{eq: R_M}
\end{align}
We add and subtract $R_{\mathcal{I} ^{\otimes k}}$ on the left hand side of  Eq. \ref{eq: R_M}, and apply the reverse triangle inequality to obtain
\begin{align}
\| R_{\mathcal{M}}\|_1 
\ge
\|R_{{\mathcal I ^{\otimes k}}} \|_1 -  \|R_{\mathcal{M}} -  R_{\mathcal I ^{\otimes k}}  \|_1.    
\end{align}
The trace distance between two pseudo-density matrices can be related to distance in the diamond norm \cite{wilde2013quantum} as follows,
 \begin{align}
 \|R_{\mathcal{M}} - R_{{\mathcal I} ^{\otimes k}}\|_1 
 &= 
 \|({\mathcal I} ^{\otimes k}\otimes 
 (\mathcal{M} -  {\mathcal I}^{\otimes k}))(R_{\mathcal I ^{\otimes k}}) \|  _1 \nonumber \\
 &\leq 
 \|
 \mathcal{M} -  {\mathcal I}^{\otimes k} \|_\diamond
 \|R_{\mathcal I ^{\otimes k}}\|_1 , \label{eq: 1 norm upper bound}
 \end{align}
 where $\|\cdot\|_\diamond$ denotes the diamond norm. We define $\epsilon=  \|\mathcal{M} -  {\mathcal I}^{\otimes k} \|_{\diamond}$ and use the upper bound of Eq. \ref{eq: 1 norm upper bound} as well as the positivity of $\|R_{{\mathcal I ^{\otimes k}}}\|_1$ to obtain
\begin{align}
\frac{ \| R_{\mathcal{M}}\|_1 }{\|R_{{\mathcal I ^{\otimes k}}} \|_1  } 
\ge
1 - \epsilon.     
\end{align}
Taking the logarithm on both sides of the above inequality leads to 
\begin{align}
F(R_{\mathcal M}) - F(R_{\mathcal I ^{\otimes k}}) \ge \log_2(1-\epsilon).\end{align}
The connection between the PDM and SWAP matrix and the non-increasing property of the trace norm under the partial trace together leads to the fact that the causality does not increase under decoding and encoding operations. This gives 
\begin{align}
F(R_{\mathcal{M}}) \leq F(R_{\mathcal{N}}^{\otimes n}).    
\label{eq: nonincrea}
\end{align}
A detailed proof of the above inequality Eq. \ref{eq: nonincrea} is presented in \ref{sub: Non-increasing proof}. 
Furthermore, this same property of $F$ guarantees that even if we had allowed the encoding and decoding operations to operate on entangled ancillary registers, Eq. \ref{eq: nonincrea} is still valid \cite{bennett1999entanglement,bennett2002entanglement,shor2004classical}.
Hence the resultant bounds based on Eq. \ref{eq: nonincrea} are also bounds on the entanglement-assisted capacities. It is important to note this non-increasing property does not hold for any other Schatten norm but the trace norm.

The additivity property of $F$ under tensor products (Property 5. in Section \ref{sub: log Caus}) implies $F(R_{\mathcal{N}}^{\otimes n}) = n F(R_{\mathcal N})$ and $F(R_{\mathcal I ^{\otimes k}})=k F(R_{\mathcal I})$, which leads to
\begin{align}
n F(R_{\mathcal N}) - k  F(R_{\mathcal I}) \ge \log_2(1-\epsilon) .    
\end{align}
Finally, using $F(R_{\mathcal I})=1$ for a quantum capacity with respect to a single qubit system, we obtain
\begin{align}
\frac{k}{n} \le F(R_{\mathcal N})  - \frac{\log_2(1-\epsilon)}{n}.
\label{eq: k/n}
\end{align}
The relation between the $\epsilon$ distance in diamond norm and the distance in the completely bounded infinity norm in turn guarantees $\epsilon$ goes to zero as $n$ approaches infinity. More details on this fact is presented in Section \ref{sub: large n proof}. This concludes the proof of the bound, $Q(\mathcal N) \le F(R_{\mathcal N}).$

\section{Mathematical details}
\subsection{Non-increasing property}
\label{sub: Non-increasing proof}

An important property used in proving the causal bound was that the decoding and encoding operations do not increase causality, such that $F(R_{\mathcal{M}}) \leq F(R_{\mathcal{N}}^{\otimes n})$.
To prove this property, we need to make use of the following lemma,
\begin{lemma}
    Let $K$ be a linear map from $k$ qubits to $m$ qubits. Then
    \begin{equation}
    ( I  \otimes K ) \textrm{SWAP} ^{\otimes k}    (I \otimes K^{\dagger} ) = (K^{\dagger} \otimes I) \textrm{SWAP} ^{\otimes m}(K \otimes I), \label{swap}
    \end{equation}
    where ($A \otimes B$) means that $A$ and $B$ are applied to the first and second subsystems of each of the $\textrm{SWAP}$s respectively.
\end{lemma}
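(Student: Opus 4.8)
The plan is to prove the identity by a direct computation in the computational basis, exploiting the standard expansion of the swap operator; the argument is elementary, and the only care needed is in tracking which tensor factor carries the $k$-qubit space and which carries the $m$-qubit space, since $K$ changes the dimension. First I would fix orthonormal computational bases $\{\ket{i}\}$ for $(\mathbb{C}^2)^{\otimes k}$ and $\{\ket{a}\}$ for $(\mathbb{C}^2)^{\otimes m}$, and write $\textrm{SWAP}^{\otimes k}=\sum_{i,j}\ket{i}\bra{j}\otimes\ket{j}\bra{i}$ and $\textrm{SWAP}^{\otimes m}=\sum_{a,b}\ket{a}\bra{b}\otimes\ket{b}\bra{a}$, together with $K=\sum_{a,i}K_{ai}\ket{a}\bra{i}$ so that $K^{\dagger}=\sum_{a,i}\overline{K_{ai}}\ket{i}\bra{a}$.

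Next I would substitute these expansions into the left-hand side $(I\otimes K)\,\textrm{SWAP}^{\otimes k}\,(I\otimes K^{\dagger})$, multiply the three factors from right to left, and collapse the Kronecker deltas coming from the inner products. A short calculation should yield
\[
(I\otimes K)\,\textrm{SWAP}^{\otimes k}\,(I\otimes K^{\dagger})
=\sum_{i,j,a,b}K_{aj}\,\overline{K_{bi}}\;\ket{i}\bra{j}\otimes\ket{a}\bra{b}.
\]
Carrying out the analogous manipulation on the right-hand side $(K^{\dagger}\otimes I)\,\textrm{SWAP}^{\otimes m}\,(K\otimes I)$ gives, after relabelling the summation indices, the same expression $\sum_{i,j,a,b}\overline{K_{bi}}\,K_{aj}\;\ket{i}\bra{j}\otimes\ket{a}\bra{b}$, which proves the lemma.

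A more conceptual route, which I would note as an alternative, uses the transpose-trick identity $(M\otimes I)\ket{\Omega}=(I\otimes M^{T})\ket{\Omega'}$ for unnormalised maximally entangled vectors $\ket{\Omega}=\sum_x\ket{x}\ket{x}$ (valid even for rectangular $M$, with $\ket{\Omega}$ and $\ket{\Omega'}$ of the appropriate sizes): expressing $\textrm{SWAP}^{\otimes k}$ through $\ket{\Omega}$ lets one slide $K$ across the swap at the cost of a transpose, and the two transposes arising on the two sides of the expression cancel against the pair $K$, $K^{\dagger}$. Either way there is no genuine obstacle — the statement is a bookkeeping identity — and once it is in hand it feeds directly into the non-increasing property $F(R_{\mathcal M})\le F(R_{\mathcal N}^{\otimes n})$ by combining it with the non-increase of the trace norm under partial trace.
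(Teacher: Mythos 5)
Your proof is correct and follows essentially the same route as the paper: expand $\textrm{SWAP}^{\otimes k}$ and $\textrm{SWAP}^{\otimes m}$ in the computational basis, write $K$ in terms of its matrix elements, multiply out both sides, and observe the two expressions coincide after relabelling the summation indices. The transpose-trick alternative you mention is a nice conceptual shortcut but is not needed; the direct computation you give already matches the paper's argument.
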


\begin{proof}
    Let $K= \sum_{i=0}^{2^k-1} \sum_{j=0}^{2^m-1} e_{ij} \ket{j}\bra{i}.$ The tensor product of $k$-qubit $\textrm{SWAP}$s can be written as
    \begin{align}
            \textrm{SWAP}^{\otimes k} = \sum_{u,v=0}^{2^{k}-1} (\ket{u}\otimes \ket{v})(\bra{v}\otimes \bra{u}). \label{eq: swap decomp}
    \end{align}
Substituting Eq. \ref{eq: swap decomp} into the left hand side of Eq. \ref{swap} leads to
    \begin{align}
    &( I ^{\otimes k} \otimes K ) \textrm{SWAP} ^{\otimes k}  ( I ^{\otimes k} \otimes K^{\dagger} ) \nonumber\\
    =&\sum_{i,j,i',j',u,v} ( I \otimes \ket{j} \bra{i})\ket{u} \ket{v} \bra{v}\bra{u}( I \otimes \ket{i'} \bra{j'}) e_{ij}  e_{i'j'}^{*}\nonumber\\
    =&\sum_{j,j'=0}^{2^m-1} \sum_{u,v=0}^{2^k-1} \ket{u} \ket{j} \bra{v} \bra{j'} e_{vj} e_{uj'}^* .
    \end{align}
    Similarly evaluating the right hand side of Eq. \ref{swap} we get
    \begin{align}
    &(K^{\dagger} \otimes  I^{\otimes m}) \textrm{SWAP} ^{\otimes m} (K \otimes  I^{\otimes m})\nonumber\\
    =&\sum_{i,j,i',j',u,v} ( \ket{i} \bra{j}\otimes  I)\ket{u} \ket{v} \bra{v}\bra{u}(\ket{j'} \bra{i'}\otimes  I) e_{ij} ^{*} e_{i'j'}\nonumber\\
    =&\sum_{i,i'=0}^{2^k-1} \sum_{u,v=0}^{2^n-1} \ket{i} \ket{v} \bra{i'} \bra{u} e_{i'v} e_{iu}^{*} \nonumber\\
    =&\sum_{j,j'=0}^{2^n-1} \sum_{u,v=0}^{2^k-1} \ket{u} \ket{j} \bra{v} \bra{j'} e_{vj} e_{uj'}^{*},
    \end{align}
    where in the last step we have relabelled the indices. 
\end{proof}
We are now in the position to prove the desired non-increasing property of logarithmic causality, which is summarised in the lemma below.
\begin{lemma}
    Let $\mathcal{E}$ and $\mathcal{D}$ be encoding and decoding operations and $\mathcal{M} = \mathcal{D} \circ \mathcal{N}^{\otimes n} \circ \mathcal{E}$. Then
    \begin{align}
        \log_2  \|R_{\mathcal M} \|_1 \leq \log_2 \| R_{\mathcal N} ^{\otimes n} \|_1 . \label{eq: lemma2}
    \end{align}
\end{lemma}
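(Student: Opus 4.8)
The plan is to peel off the decoding and the encoding one at a time, using only the reflection identity of Eq.~\ref{swap} together with the contractivity of the trace norm under partial traces (a CPTP map, in particular a partial trace, never increases $\|\cdot\|_1$ on Hermitian operators, since it acts separately on the positive and negative parts of the Jordan decomposition). Since $\log_2$ is monotone it suffices to prove $\|R_{\mathcal M}\|_1 \le \|R_{\mathcal N}^{\otimes n}\|_1$, and by Property 5 of Section~\ref{sub: log Caus} together with invariance of the trace norm under reordering of tensor factors the right-hand side equals $\|R_{\mathcal N^{\otimes n}}\|_1$. Write $\mathcal M = \mathcal D\circ\mathcal C$ with $\mathcal C=\mathcal N^{\otimes n}\circ\mathcal E$; recall that for capacity purposes one may take the encoding to be an isometry $\mathcal E(\cdot)=V(\cdot)V^\dagger$ \cite{barnum2000quantum}, while the decoding admits a Stinespring dilation $\mathcal D(\cdot)=\Tr_E[W(\cdot)W^\dagger]$ with $W$ an isometry.

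For the decoding, write $R_{\mathcal M}=(\mathcal I^{\otimes k}\otimes\mathcal D)(R_{\mathcal C})=(\mathcal I^{\otimes k}\otimes\Tr_E)[(I\otimes W)\,R_{\mathcal C}\,(I\otimes W^\dagger)]$. Conjugation by the isometry $I\otimes W$ leaves the trace norm unchanged and tracing out $E$ can only decrease it, so $\|R_{\mathcal M}\|_1\le\|R_{\mathcal C}\|_1$; note this step needs no reflection identity. For the encoding, start from $R_{\mathcal C}=(\mathcal I^{\otimes k}\otimes\mathcal N^{\otimes n})\big[(I\otimes V)\big(2^{-k}\,\textrm{SWAP}^{\otimes k}\big)(I\otimes V^\dagger)\big]$ and apply Eq.~\ref{swap} with $K=V$ to reflect the encoding isometry off the ``system'' leg of the SWAPs onto the ``reference'' leg, turning $\textrm{SWAP}^{\otimes k}$ into $\textrm{SWAP}^{\otimes m}$ flanked by $V^\dagger,V$ acting on the reference only ($m$ being the number of input qubits of $\mathcal N^{\otimes n}$). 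These reference-side factors commute with $\mathcal N^{\otimes n}$, which can therefore be pushed through, yielding $R_{\mathcal C}=2^{\,m-k}\,(V^\dagger\otimes I)\,R_{\mathcal N^{\otimes n}}\,(V\otimes I)$.

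It then remains to show that the factor $2^{m-k}$ is exactly absorbed, i.e.\ $2^{m-k}\,\|(V^\dagger\otimes I)\,R_{\mathcal N^{\otimes n}}\,(V\otimes I)\|_1\le\|R_{\mathcal N^{\otimes n}}\|_1$. Writing $R_{\mathcal N^{\otimes n}}=\rho_{\mathrm{Ch}}^{\,T_A}$, where $\rho_{\mathrm{Ch}}$ is the (positive, unit-trace, maximally-mixed-on-$A$) Choi state of $\mathcal N^{\otimes n}$ and $A$ the reference, the left-hand norm equals $\|\tau^{\,T}\|_1$ where $\tau$ is the \emph{unit-trace} density operator obtained from $\rho_{\mathrm{Ch}}$ by projecting the reference onto the rank-$2^k$ range of $P=VV^\dagger$ and renormalising (the renormalisation being by exactly $2^{m-k}$ because $\Tr[(P\otimes I)\rho_{\mathrm{Ch}}]=2^{k-m}$); equivalently $\tau$ is the Choi state of $\mathcal N^{\otimes n}$ restricted to a $2^k$-dimensional subspace of its input. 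The claim thus reduces to: restricting a channel's input to a subspace — i.e.\ precomposing with an isometric channel — cannot increase the log-negativity $F$ of its Choi state, which follows by combining the pinching inequality with the fact that $\|[(\mathcal G\otimes\mathcal I)(X)]^{\,T}\|_1\le\|X^{\,T}\|_1$ for $X$ Hermitian and $\mathcal G$ any trace-nonincreasing completely positive map (here $\mathcal G$ is the transpose of the inclusion isometry). Chaining the two steps, $\|R_{\mathcal M}\|_1\le\|R_{\mathcal C}\|_1\le\|R_{\mathcal N^{\otimes n}}\|_1=\|R_{\mathcal N}^{\otimes n}\|_1$, and taking $\log_2$ proves Eq.~\ref{eq: lemma2}.

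The main obstacle is the bookkeeping in the encoding step: the reflection identity unavoidably produces the dimension factor $2^{m-k}$, and seeing that it cancels cannot be done by naively invoking ``compression never increases the trace norm'' — that would lose precisely this factor, and is in any case false for general Hermitian operators with maximally-mixed marginal. One is forced to use the full positivity-of-partial-transpose structure of a channel PDM, which is what makes the conditioned object a genuine density matrix and converts the target inequality into a monotonicity statement for Choi-state negativity. The decoding step, by contrast, is essentially automatic.
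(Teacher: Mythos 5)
Your route is the same as the paper's --- peel off the decoding by trace-norm contractivity, then use the reflection identity of Eq.~\ref{swap} to move the encoding isometry onto the reference side --- and you deserve credit for making explicit the normalisation factor $2^{m-k}$ that this reflection produces, a factor the paper's own write-up silently drops in the line $\| (\mathcal I \otimes (\mathcal N ^{\otimes n} \circ \mathcal{E})) (R_{\mathcal I ^ {\otimes k}})\|_1 = \| (\mathcal{E}^{\dagger} \otimes \mathcal N ^{\otimes n} ) (R_{\mathcal I ^ {\otimes m}})\|_1$ (the two SWAP products carry normalisations $2^{-k}$ and $2^{-m}$ respectively). However, the step where you ``absorb'' this factor is a genuine gap. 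Your reduction is accurate: what is needed is exactly $\|R_{\mathcal N^{\otimes n}\circ\iota}\|_1 \leq \|R_{\mathcal N^{\otimes n}}\|_1$ for $\iota$ an isometric restriction of the input, i.e.\ that restricting a channel's input cannot increase the log-negativity of its renormalised Choi state. But the tools you invoke do not deliver this: pinching together with $\|[(\mathcal G\otimes\mathcal I)(X)]^{T}\|_1\leq\|X^{T}\|_1$ for trace-non-increasing CP $\mathcal G$ controls only the unrenormalised branch $\|(P\otimes I)\rho_{\mathrm{Ch}}^{T_A}(P\otimes I)\|_1$ --- the branch-probability-weighted negativity --- and says nothing about the post-selected state $\tau$, whose negativity can grow by up to the inverse branch probability $2^{m-k}$ under local filtering. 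You assert precisely the statement that fails.

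Indeed the reduced claim, and with it the lemma as stated, is false once a single use of $\mathcal N$ acts on more than one qubit. Take $n=1$ and let $\mathcal N$ act on two qubits by $\mathcal N(X)=PXP+\Tr[(I-P)X]\,|\phi\rangle\langle\phi|$, where $P$ projects onto a two-dimensional subspace $S$ and $|\phi\rangle$ is fixed; this is CPTP. Its PDM splits into orthogonal reference sectors, $R_{\mathcal N}=\tfrac{1}{4}\textrm{SWAP}_S+\tfrac{1}{4}(I-P)\otimes|\phi\rangle\langle\phi|$, with nonzero eigenvalues $\tfrac{1}{4}\{1,1,1,-1\}$ and $\tfrac{1}{4}\{1,1\}$, so $\|R_{\mathcal N}\|_1=\tfrac{3}{2}$. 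Yet encoding one qubit isometrically into $S$ and decoding by projecting back gives $\mathcal M=\mathcal I$ on one qubit, with $\|R_{\mathcal M}\|_1=2>\tfrac{3}{2}$. Consistently, $2^{m-k}\|(P\otimes I)R_{\mathcal N}(P\otimes I)\|_1=2\cdot 1=2$, so your derivation is correct up to the final absorption step, and it is exactly that step which cannot be repaired: the obstruction you flagged is real, not a bookkeeping artefact. Any valid version of the result must either restrict the input dimension of $\mathcal N$ or replace the fixed maximally mixed input by a supremum over inputs, as in the Holevo--Werner bound.
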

\begin{proof}
The decoding procedure is a local operation and therefore from Property 4. of $F(R)$ in Section \ref{sub: log Caus} , we have
    \begin{align}
        \| R_{\mathcal M}\|_1 \leq \| (\mathcal I \otimes (\mathcal N ^{\otimes n} \circ \mathcal{E})) (R_{\mathcal I ^ {\otimes k}})\|_1.
    \end{align}
    Let $\mathcal{E}$ encode $k$ qubits into $m$ qubits. Using Lemma 1, we have
    \begin{align}
    \| (\mathcal I \otimes (\mathcal N ^{\otimes n} \circ \mathcal{E})) (R_{\mathcal I ^ {\otimes k}})\|_1 &=
    \| (\mathcal{E}^{\dagger} \otimes \mathcal N ^{\otimes n} ) (R_{\mathcal I ^ {\otimes m}})\|_1 \nonumber\\
    &= \| (\mathcal{E}^{\dagger} \otimes \mathcal I ) (R_{\mathcal N} ^ {\otimes n})\|_1. 
    \end{align}
    The PDM $R_{\mathcal N} ^ {\otimes n}$ can be decomposed into its positive and negative part, and rewritten as
    \begin{align}
        R_{\mathcal N} ^ {\otimes n} = R_+ - R_-,
    \end{align}
    where both $R_+$ and $R_-$ are positive semi-definite. Applying the triangle inequality gives
    \begin{align}
    \| (\mathcal{E}^{\dagger} \otimes \mathcal I ) (R_{\mathcal N} ^ {\otimes n})\|_1 &\leq  \| (\mathcal{E}^{\dagger} \otimes \mathcal I  )( R_+) \|_1 +\| (\mathcal{E}^{\dagger} \otimes \mathcal I  ) (R_-) \|_1 \nonumber \\
    &=\tr ((\mathcal{E}^{\dagger} \otimes \mathcal I )( R_+)) + \tr ((\mathcal{E}^{\dagger} \otimes \mathcal I ) (R_-)) \nonumber \\
    &= \tr((\mathcal{E}^{\dagger} \otimes \mathcal I ) (R_+ + R_-)). 
    \end{align}
    It is well-known that in bounding quantum channel capacity, one can restrict $\mathcal{E}$ to be an isometry with only one non-zero Kraus operator, which we denote by $K$ \cite{barnum2000quantum}. This allows us to write 
    \begin{align}
    \tr((\mathcal{E}^{\dagger} \otimes \mathcal I ) (R_+ + R_-))&=
    \tr((K^{\dagger} \otimes  I  ) (R_+ + R_-)  (K \otimes  I  )) \nonumber\\
    &=\tr ((KK^{\dagger} \otimes  I ) (R_+ + R_-)),
    \end{align}
    where the second equality follows from the cyclic property of the trace. Since $P = KK^{\dagger} \otimes \mathcal I ^{\otimes n}$ is a projector, so that $PP=P$, we have 
\begin{align}
\tr(P (R_+ + R_-))= \tr (P(R_+ + R_-)P) = \|P(R_+ + R_-)P\|_1.
\end{align}
Next we applying the H\"older's inequality twice and make use of the fact the infinity norm of a projector equals one, and obtain
    \begin{align}
    \|P(R_+ + R_-)P\|_1 &\leq \|P\|_{\infty} \|R_+ + R_-\|_1\|P\|_{\infty} \nonumber \\
    &=\|R_+ + R_-\|_1,
    \end{align}
where $\|\cdot\|_{\infty}$ denotes the infinity norm and is defined by the largest singular value of a matrix. Furthermore, $R_+$ and $R_-$ are by definition orthogonal. Hence 
    \begin{align}
 \| R_+ + R_-\|_1 = \| R_+ - R_-\|_1 = \| R_{\mathcal N}^{\otimes n}\|_1, 
    \end{align}
    which leads to
$
    \| R_{\mathcal M}\|_1 \leq \| R_{\mathcal N}^{\otimes n}\|_1.
$
Finally, use the fact that logarithm is a monotonic function, the desired property Eq. \ref{eq: lemma2} follows.
\end{proof}

\subsection{Large-$n$ limit}
\label{sub: large n proof}
Here we prove that the error parameter $\epsilon$ in Eq. \ref{eq: k/n} goes to zero in the limit of large $n$. By the definition of the distance in diamond norm, we have
\begin{align}
\epsilon&=
\| {\mathcal I} ^{\otimes k} \otimes  (\mathcal{M} -  {\mathcal I}^{\otimes k}) \|_1 \nonumber
\\
&=\sup_{\| X\|_1 = 1}
\|( {\mathcal I} ^{\otimes k} \otimes  (\mathcal{M} -  {\mathcal I}^{\otimes k})) (X) \|_1 .
\end{align}
Consider the spectral decomposition of Hermitian $X=\sum_{i} \lambda_i |\psi_i\>\<\psi_i|$, where $\{|\psi_i\>\}$ denotes an orthonormal basis, and $\{\lambda_i\}$ are the corresponding eigenvalues.
Define $\mathcal A = ( {\mathcal I} ^{\otimes k} \otimes  (\mathcal{M} -  {\mathcal I}^{\otimes k}))$,
and we have
\begin{align}
\epsilon
&= 
\sup_{\{|\psi_i\>\}_i, \sum_{i}|\lambda_i|=1}
\left\|\mathcal A \left( \sum_{i} \lambda_i |\psi_i\>\<\psi_i| \right) \right\|_1 \notag\\
&\le
\sup_{\{|\psi_i\>\}_i, \sum_{i}|\lambda_i|=1} \left( \sum_{i} |\lambda_i|
\|\mathcal A ( |\psi_i\>\<\psi_i| ) \|_1 \right) \notag\\
&\le 
\sup_{ |\psi\> }\|\mathcal A ( |\psi\>\<\psi| ) \|_1 .
\end{align}
Note that $\mathcal A$ represents the difference of two linear maps 
${\mathcal I} ^{\otimes k} \otimes {\mathcal I} ^{\otimes k}$ and 
${\mathcal I} ^{\otimes k} \otimes \mathcal M$, by linearity
we have
\begin{align}
&\sup_{ |\psi\> }\|\mathcal A ( |\psi\>\<\psi| ) \|_1
=&
\sup_{ |\psi\> }\|
( {\mathcal I} ^{\otimes k} \otimes {\mathcal I} ^{\otimes k})( |\psi\>\<\psi| )
-
({\mathcal I} ^{\otimes k} \otimes \mathcal M) ( |\psi\>\<\psi| ) \|_1. 
\end{align}
In the above we have inside a supremum the trace distance between two quantum states. Now we need to relate the distance between quantum states measured by the 1-norm to that measured in terms of the fidelity. Let
\begin{align}
f(\rho,\sigma) = \tr \sqrt{ \sqrt \rho \sigma \sqrt \rho} 
\end{align}
denote the fidelity between two positive semi-definite matrices.
If $\rho=|\psi\>\<\psi|$, then $f(\rho,\sigma)= \sqrt{\<\psi|\sigma|\psi\>}.$
The Fuchs-van de Graaf inequalities \cite{fuchs1999cryptographic} imply
\begin{align}
1-f(\rho,\sigma)
\le 
\frac{1}{2}\| \rho - \sigma \|_1
\le
\sqrt{1-f(\rho,\sigma)^2}.
\end{align}
Hence we have
\begin{align} 
&\frac 1 2 \sup_{ |\psi\> }\|\mathcal A ( |\psi\>\<\psi| ) \|_1 \le 
&\sqrt{1- \inf_{ |\psi\> }
    f( ( {\mathcal I} ^{\otimes k} \otimes {\mathcal I} ^{\otimes k})( |\psi\>\<\psi| )  ,
    ({\mathcal I} ^{\otimes k} \otimes \mathcal M) ( |\psi\>\<\psi| )   )^2}.
\end{align} 
The above inequality is related to entanglement fidelity $F_e(\rho,\Phi)$ of a state $\rho$ with respect to the channel $\Phi$ which has a set of Kraus operators, $\mathcal{K}$, and acts on the state as 
$\Phi(\rho) = \sum_{A \in K} A \rho A ^\dagger$.
From Schumacher's formula \cite{Sch96}, we have
\begin{align}
F_e(\rho,\Phi) &= \<\phi | (\Phi \otimes \mathcal I) (|\phi\>\<\phi| )|\phi \>\nonumber\\
&= f( (\Phi \otimes \mathcal I) (|\phi\>\<\phi| ), |\phi\>\<\phi|) \nonumber\\
&=    \sum_{A \in K} |\tr \rho A|^2,
\end{align}
where $|\phi\>$ is introduced as a purification of $\rho$.
We denote $F_e(\Phi)=
\inf_\rho F_e(\rho,\Phi)$, and have 
\begin{align}
F_e(\Phi) 
=\inf_{|\phi\>} \<\phi |( \Phi \otimes\mathcal I) (|\phi\>\<\phi| )|\phi \> 
= 
\inf_{|\phi\>} f( |\phi\>\<\phi|,   (\Phi \otimes \mathcal I) (|\phi\>\<\phi| ) )^2.
\end{align}
Hence using the notation for the entanglement fidelity, we can write 
\begin{align}
\frac 1 2 \sup_{ |\psi\> }\|\mathcal A ( |\psi\>\<\psi| ) \|_1
\le
\sqrt{1- F_e(  \mathcal M) }.
\end{align}
Thus
$
\epsilon \le 2 \sqrt{1- F_e(  \mathcal M) }.
$
Now use the following relation proved by Kretschmann and Werner in Proposition 4.3 of Ref. \cite{KrW04}
\begin{align}
1- F_e(\Phi) 
\le
4 \sqrt {\|\Phi- \mathcal I \|_{\rm cb}} 
\le 
8 \left( 1- F_e(\Phi)  \right)^{1/4},
\end{align}
where $\| \cdot \|_{\rm cb}$ denotes the completely bounded norm induced on the operator infinity norm \cite{paulsen1986completely}.
We obtain  
\begin{align}
\epsilon 
\le & 
2\sqrt{4 \sqrt {\|\mathcal M- \mathcal I\|_{\rm cb}}}\nonumber\\
=&
4 \|\mathcal M- \mathcal I\|_{\rm cb}^{1/4}.
\end{align}
Since $ \|\mathcal M- \mathcal I\|_{\rm cb}$ is guaranteed to approach zero as $n$ approaches infinity in the channel capacity theorems, $\epsilon$ here also approaches zero. This concludes the proof.

\section{Application of causal bound}
\subsection{Comparison with Holevo and Werner bound}
Here we compare the causal bound with a simple well-known bound on quantum capacities of Holevo and Werner (HW) which is general, and has a similar form, but requires optimisation \cite{holevo2001evaluating}. 
Given a quantum channel $\mathcal N$, and a transpose map $\mathcal T$, 
the Holevo-Werner upper bound on the quantum capacity is 
\begin{align}
Q_{\mathcal T}(\mathcal N) 
=\log_2  \| \mathcal N  \mathcal T \|_{\diamond} 
=\log_2  \| \mathcal I \otimes \mathcal N \mathcal T \|_1.
\end{align}
By the definition of the induced norm this can be rewritten as
\begin{align}
Q_{\mathcal T}(\mathcal N) = \sup_{\rho} \left( \log_2  \| (\mathcal I \otimes \mathcal N \mathcal T) (\rho)\|_1 \right).    \label{eq: HW rewrite}
\end{align}
Now we compare the above to the causal bound. In the case of the maximally mixed input, the pseudo-density matrix becomes
\begin{align}
R_{\mathcal N}= (\mathcal{I} \otimes \mathcal N ) \left(\frac{\textrm{SWAP}^{\otimes k}}{2^k}\right) = (\mathcal{I} \otimes \mathcal N \mathcal T) (\ket{\Phi^{+}}\bra{\Phi^{+}})^{\otimes k}.
\end{align}
The causal bound then reads
\begin{align}
F (R_{\mathcal N})  =\log_2  \| (\mathcal{I} \otimes \mathcal N  \mathcal T)  (\ket{\Phi^{+}}\bra{\Phi^{+}})^{\otimes k} \|_1 .    
\end{align}
Comparing this to the HW bound in Eq. \ref{eq: HW rewrite}, it is clear that
$
F (R_{\mathcal N}) \leq Q_{\mathcal T}(\mathcal N),
$
and the two are equal when the supremum is achieved at the maximally entangled state $(\ket{\Phi^+} \bra{\Phi^+})^{\otimes k}$. Hence we have shown the causal bound is better or equal to the HW bound.

\subsection{Shifted depolarising channel}

As an illustration of applying the causal bound, we consider the class of shifted depolarising channels. A shifted depolarising channel generalises the well-studied quantum depolarising channel \cite{king2003capacity,smith2008additive}. It outputs either the input state or the state $\frac{{I}+\gamma Z}{2}$ shifted from the maximally mixed state with probability $4p$. For a single qubit the shifted depolarising channel can be defined by 
\begin{align}
\mathcal{N}_{\gamma} (\rho) = (1-4p)\rho + 4p\left(\frac{{I}+\gamma Z}{2} \right),\label{eq: shifted depo}    
\end{align}
where the parameter $\gamma \in [0,1]$ parametrises the shift, with a zero $\gamma$ corresponding to the standard depolarising channel. The PDM representation of the single qubit shifted depolarising channel, $R_{\mathcal{N}_{\gamma}}$ is derived using Eq. \ref{SWAP}, from which we obtain an analytic expression for the value of $F(R_{\mathcal{N}_{\gamma}})$, and hence an upper bound on its quantum capacity of the channel,
\begin{align}
\mathcal{Q}(\mathcal{N}_{\gamma}) &\leq F(R_{\mathcal{N}_{\gamma}}) \nonumber \\
&= \log_2 \bigg( 1-p + \frac{1}{2} \sqrt{1-8p+16 p^2+4\gamma^2 p^2} \nonumber\\
&\hspace{1.2cm}+\frac{1}{2}\left| 2p-\sqrt{1-8p+16p^2+4\gamma^2 p^2}\right|\bigg).
\end{align}
We show in Figure \ref{fig:HW} the difference between the HW bound and the causal bound  on quantum channel capacity of a shifted depolarising channel. Note that the two bounds are identical for standard depolarising channel where there is no shift. However, the causal bound  is tighter when the shift $\gamma$ increases.
\begin{figure}[H]
    \centering
    \includegraphics[width=1.0\linewidth]{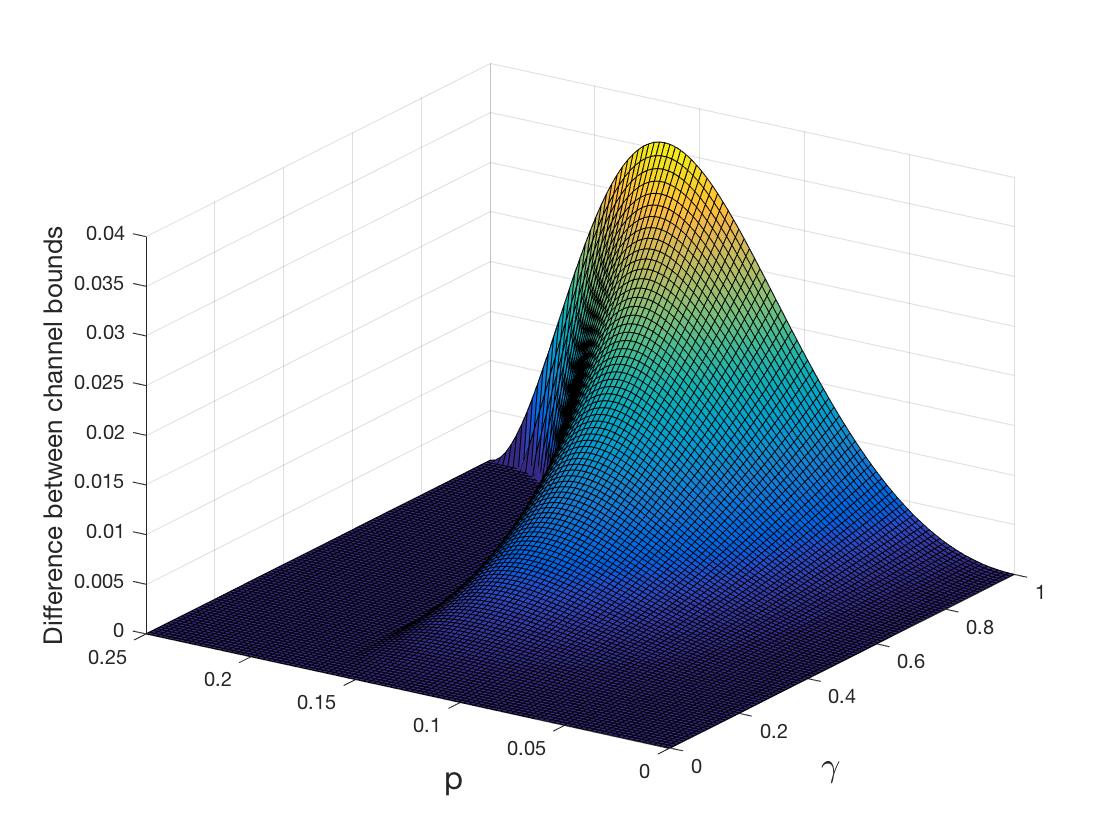}
    \caption{Difference between the HW and causal bound s on quantum channel capacity of a shifted depolarising channel.}
    \label{fig:HW}
\end{figure}
\noindent Hence the shifted depolarising channel constitutes a class of examples for which the causal bound  is strictly tighter than the HW bound. Furthermore, we found that the causal bound  $F(R_{\mathcal{N}_{\gamma}})$ also shows improvement upon the best known bound from Ref. \cite{ouyang2014channel}.
In Figure \ref{fig:Kai}, we show the difference between the previously known bound from Ref. \cite{ouyang2014channel} and the causal bound on the quantum channel capacity of a shifted depolarizing channel. 
\begin{figure}[H]
    \centering
    \includegraphics[width=1.0\linewidth]{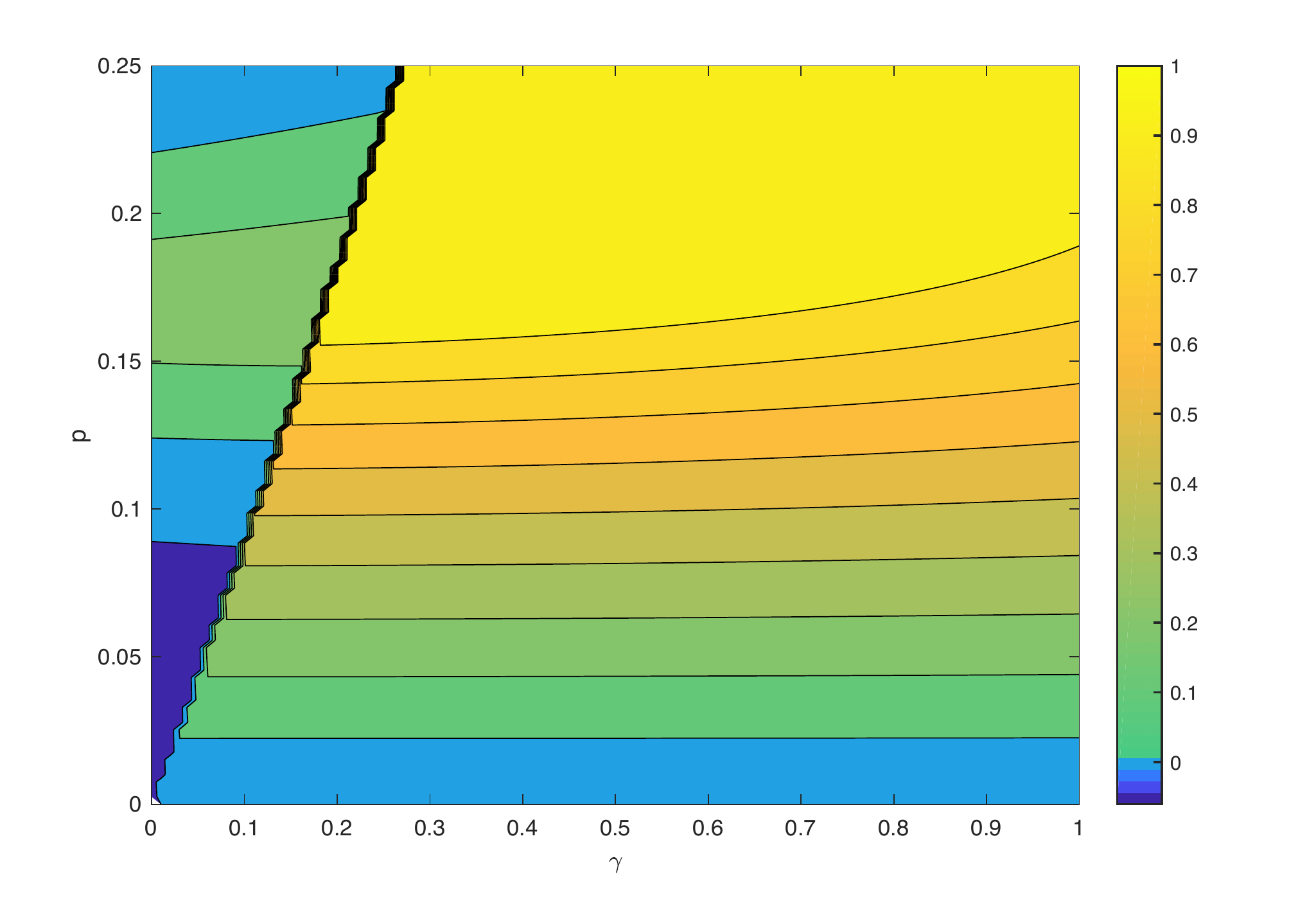}
    \caption{Difference between the previously known bound from Ref. \cite{ouyang2014channel} and the causal bound on quantum channel capacity of a shifted depolarising channel.}
    \label{fig:Kai}
\end{figure}
\noindent The causal bound is tighter for almost all values of $\gamma$ and $p$. Only in the region of small shift $\gamma$ and small probability $p$, which corresponds to the bottom left corner of the diagram, the causal bound is less tight. Note that the shifted depolarising channel reduces to the standard depolarising channel when $\gamma=0$, and the identity channel when $p=0$. The causal bound is not the tightest known bound for the standard depolarising channel, while it evaluates exactly to the quantum channel capacity for the identity channel.

\section{Summary and discussions}
In this chapter, we have presented a general upper bound on the quantum capacity of noisy quantum channels based on fundamental causality considerations. Contrary to most other existing bounds, the computation of the causal bound does not involve an explicit optimisation problem. The logarithmic causality measure used here is in close analogy with the entanglement logarithmic negativity and possesses desired properties which make it useful for studying channel capacities.

Our approach based on quantum causality is generally applicable to arbitrary quantum channels and can produce non-trivial upper bounds for any given channel. 
Therefore, this result could further help the understanding of the communication rate of complex systems for which optimisation methods are computationally too costly, including quantum networks and quantum communication between many parties \cite{leung2010quantum, hayashi2007quantum}.

Research on the spatial quantum correlations has lead to the formulation of various entanglement monotones with various corresponding operational meanings and applications, e.g., distillable entanglement, entanglement cost, squashed entanglement  \cite{horodecki2009quantum, christandl2004squashed}. As a temporal counterpart of quantum correlations, the result presented in this chapter initiates research on the operational significance of causality measures that might prove useful in a broader range of applications.

\chapter{Conclusion} 
\label{Conclusion} 

In this thesis, I started by describing the useful quantum algorithms for linear algebra, then moved on to illustrate how the quantum algorithm machinery can be applied to enhance classical supervised learning. In the last part of the thesis, we studied the notion of causality in an ensemble of quantum states. I presented results on inferring causal corrections, and the connection between quantum causality and the limit of transmitting quantum information over a noisy channel. Here we provide a summary of new research progress discussed in this thesis and give a brief outlook for avenues of future research.

\section{Summary}

In Chapter \ref{QDLSA}, I have shown a new quantum algorithm for solving the quantum linear system problem. This approach is based on a quantum singular value decomposition technique which in turn makes use of a data structure that provides oracle access to the row vectors of a matrix and the vector of row norms. Since our approach does not involve explicitly simulating the system's defining matrix as a Hamiltonian, the resultant runtime does not depend on sparsity, which gives the new linear systems algorithm an advantage over the existing approach for dense matrices.
As a result of the error dependence in singular value decomposition, the fixed-error runtime of our linear system algorithm has a linear dependence on the Frobenius norm of the matrix. Nevertheless, we have proved our algorithm has a $\tilde{\mathcal O}(\sqrt{n})$ runtime in the general case, providing a polynomial speedup over the previous state-of-the-art. In the special case of the matrix having a low-rank structure, our algorithm exhibits an even more advantageous $\tilde{\mathcal O}(\log n)$ runtime scaling.

In Chapters \ref{QGP} and \ref{QGPT}, we applied quantum algorithms to supervised machine learning using Gaussian processes. For computing the mean and variance predictor of a given GP model, we have shown the quantum linear systems approach can be applied to achieve exponential or polynomial speedups over classical implementations depending on whether the covariance matrix is sparse or not. For the purpose of training GPs, we have presented a quantum approach for evaluating the logarithm of the marginal likelihood of the model on a given dataset. The quantum GP training approach has two main components, the augmented quantum linear system algorithm for quantifying the model's performance on the training data, and the quantum log determinant algorithm for quantifying the complexity of the model. We have shown the quantum GP training approach allows for efficiently evaluating the variation of marginal likelihood on each training step, which is the main computation bottleneck for model selections for GPs. The quantum GP prediction and training procedures together provide a concrete use-case in supervised learning for which quantum computation has a provable advantage over the best-known classical implementation.

In Chapter \ref{Chapter: QBDL}, we built upon the previously discussed quantum GP algorithm and leveraged a connection between deep neural network models and Gaussian processes to develop a quantum algorithm for deep learning. The presented quantum approach to deep learning is Bayesian as the training of the parameters in the neural network amounts to evaluating a Gaussian posterior distribution instead of the more conventional methods, such as backpropagation with stochastic gradient descent. To simulate the Hamiltonian that represents the multi-layer kernel matrix, we designed a quantum method based on density matrix exponentiation and proved the computational overhead in terms of the required number of resource density matrix which encodes the base case kernel matrix. Furthermore, we have demonstrated the matrix inversion component of quantum GP regression by performing experiments on quantum simulators as well as the state-of-the-art quantum processing units, which have shown encouraging results, despite the implementation being a small-scale variant of the full algorithm. 

In Chapter \ref{Chapter: geometry} and \ref{CauCap}, we looked into the concept of causality in the quantum domain. Specifically, we have made use of the pseudo-density matrix formalism to derive the geometric structure of spatial and temporal two-point quantum correlations, which serves as an analytical toolkit for inferring causal relations in quantum datasets. Furthermore, the geometric structure can be seen as a strong witness of quantum entanglement, distinguishing it from possible sequentially generated statistics. We then further apply quantum causality in the pseudo-density matrix formalism to quantum communication and derived a general upper bound for the quantum channel capacity of a given a noisy channel. 

\section{Outlook}

The results presented in this thesis provide numerous potential avenues for further research. As discussed earlier in Chapter \ref{QDLSA}, it would be useful to conduct a detailed resource analysis for the QDLS algorithm. Since it circumvents the costly Hamiltonian simulation subroutine, as required by the previous quantum linear system algorithms, implementing the QDLS algorithm may require significantly less elementary gate operations compared to the analysis presented in Ref. \cite{Scherer2017}. Given the close analogy between the quantum walk based approach of QDLS algorithm and the quantum search algorithm \cite{grover1996fast}, it is also interesting to ask whether the runtime $\tilde{\mathcal{O}}(\sqrt{n}\log n)$ is optimal given the required memory model following a similar logic of the optimality of quantum search \cite{zalka1999grover}. 

The main direction of interest for quantum enhanced GPs and Bayesian deep learning presented in Chapter \ref{QGP}, \ref{QGPT} and \ref{Chapter: QBDL} is experimental. Although the development of practical hardware for quantum computing is still at its infancy, early quantum computers have already become available and will continue to grow in scale and noise tolerance. It is an exciting time to ask whether near-term quantum computing can truly enhance machine learning, either on a qualitative or a quantitative level. We hope that before too long the quantum GP algorithms, its corresponding training algorithms, and the quantum GP induced Bayesian deep learning approach can be fully implemented with real quantum devices on large-scale datasets, and ultimately produce analytical power beyond what is classically achievable.

The geometric structure presented in Chapter \ref{Chapter: geometry} identifies a class of quantum operations which can generate sequential statistics that mimics entanglement. It would be interesting to observe these correlations experimentally. Furthermore, as quantum entanglement is famously given a significant role in quantum cryptography \cite{ekert1991quantum}, it is interesting to ask whether its temporal counter-part, causality would provide similar applicational prospects. The results presented in Chapter \ref{CauCap} are a concrete example of the operational meaning of causality, where it is shown to be significant to the field of quantum communication. Thus the presented work initiates a thread of research on the practical applications of quantum causality.  

\bibliographystyle{unsrt}
\bibliography{thesis}

\end{document}